\newtheorem{defn}{Definition}
\newtheorem{thm}{Theorem}
\newtheorem{lem}{Lemma}
\newtheorem{rem}{Remark}
\def\A{{\cal A}} 
\def\B{{\cal B}} 
\def\C{{\cal C}} 
\def\D{{\cal D}} 
\def\H{{\cal H}} 
\def\P{{\cal P}} 
\def\Q{{\cal Q}} 
\def\R{{\cal R}} 
\def\V{{\cal V}} 
\def\U{{\cal U}} 
\def\X{{\cal X}} 
\newcommand{\num}[1]{\relax\ifmmode \mathbb #1\else $\mathbb #1$\fi}
\newcommand{\nnnum}[1]{\relax\ifmmode
  {\mathbb #1}_{\geq 0} \else ${\mathbb #1}_{\geq 0}$
  \fi}
\newcommand{\npnum}[1]{\relax\ifmmode
  {\mathbb #1}_{\leq 0} \else ${\mathbb #1}_{\leq 0}$
  \fi}
\newcommand{\pnum}[1]{\relax\ifmmode
  {\mathbb #1}_{> 0} \else ${\mathbb #1}_{> 0}$
  \fi}
\newcommand{\nnum}[1]{\relax\ifmmode
  {\mathbb #1}_{< 0} \else ${\mathbb #1}_{< 0}$
  \fi}
\newcommand{\plnum}[1]{\relax\ifmmode
  {\mathbb #1}_{+} \else ${\mathbb #1}_{+}$
  \fi}
\newcommand{\nenum}[1]{\relax\ifmmode
  {\mathbb #1}_{-} \else ${\mathbb #1}_{-}$
  \fi}
\begin{document}
%
\title{Bounded $\epsilon$-Reach Set Computation of a Class of Deterministic and Transversal Linear Hybrid Automata}
%
%
%

\author{Kyoung-Dae~Kim, Sayan~Mitra,~and~P. R.~Kumar
\thanks{Kyoung-Dae Kim and P. R. Kumar are with the Department
of Electrical and Computer Engineering at Texas A\&M University, and Sayan Mitra is with the Department
of Electrical and Computer Engineering and Coordinated Science Laboratory, University of Illinois at Urbana-Champaign, USA. 
e-mail: {\tt\small \{kdkim, prk\}@tamu.edu, mitras@illinois.edu}}
\thanks{This material is based upon work partially supported by NSF under Contracts CNS-1035378, CNS-1035340, and CCF-0939370, USARO under Contract Nos. W911NF-08-1-0238 and W-911-NF-0710287, and AFOSR under Contract FA9550-09-0121.}}

%
%

\markboth{IEEE Transactions on Automatic Control,~Vol.~, No.~, Month~Year}%
{Kim \MakeLowercase{\textit{et al.}}: Bounded $\epsilon$-Reach Set Computation of a Class of Deterministic and Transversal Linear Hybrid Automata}
%



\maketitle

\begin{abstract}
We define a special class of hybrid automata, called {\em Deterministic and Transversal Linear Hybrid Automata (DTLHA)}, whose continuous dynamics in each location are linear time-invariant (LTI) with a constant input, and for which every discrete transition up to a given bounded time is deterministic and, importantly, transversal.
For such a DTLHA starting from an initial state, we show that it is possible to compute an approximation of the reach set of a DTLHA over a finite time interval that is arbitrarily close to the exact reach set, called a {\em bounded} $\epsilon${\em -reach set}, through sampling and polyhedral over-approximation of sampled states. 
We propose an algorithm and an attendant architecture for the overall bounded $\epsilon$-reach set computation process.
\end{abstract}

\begin{IEEEkeywords}
Linear system, hybrid automata, reachability, transversality.
\end{IEEEkeywords}

%
\IEEEpeerreviewmaketitle


\section{Introduction} \label{sec:intro}

\IEEEPARstart{D} {ynamic} systems which exhibit both continuous state evolution and discrete state transitions can typically be modeled as {\em hybrid automata (HA)} (\cite{henzinger:96, lynch:03}).
Computing the reach set of a hybrid automaton from a given set of initial states is a problem of fundamental importance as it is related to safety verification and automated controller synthesis.
Even though many systems can be so modeled, it is in general undecidable to compute the exact reach set \cite{henzinger:95} except for classes of hybrid automata whose continuous dynamics are fairly simple, such as timed automata (TA) \cite{alur:94} and initialized rectangular hybrid automata (IRHA) \cite{henzinger:95}.
Neither of these automata allow the standard linear systems dynamics which is widely used for control systems. 
To broaden the class of systems that can be addressed, research in hybrid system verification in the recent years has focused on algorithms computing over-approximations of the reach set for various classes of hybrid automata (\cite{henzinger:97, frehse:08, chutinan:03, girard:05, asarin:00, kurzh:00, clarke:03, tiwari:02}).
However, even with this relaxation from exact reach set to over-approximations, it is still a challenging problem to compute an over-approximation of the reach set of hybrid automata with linear dynamics with arbitrarily small approximation error and a termination guarantee for the computation.

\subsection{Related Work}
For the computation of reach set of hybrid automata with linear dynamics, several tools and approaches have been proposed in the literature.
As an example, HyTech \cite{henzinger:97} computes the reach set of hybrid automata whose continuous dynamics are more general than those of IRHA by translating the original model into an IRHA if the model is {\em clock translatable}.
Otherwise, an over-approximate reach set is computed through an approach, called {\em linear phase-portrait approximation}, which approximates the original hybrid automaton by relaxing the continuous dynamics of the original automaton.
PHAVer \cite{frehse:08} can handle a class of systems called linear hybrid automata that have affine dynamics. 
It computes a conservative over-approximation of the reach set of such hybrid automata through on-the-fly over-approximation of the phase portrait, which is a variation of the phase-portrait approximation in \cite{henzinger:97}.
Recently, another tool, SpaceEx, has been developed based on the algorithm called LeGuernic-Girard (LGG) algorithm \cite{guernic:10} which allows the handling of hybrid automata with linear differential equations with a larger number of continuous variables compared to other approaches.

In \cite{chutinan:03}, a class of hybrid automata, called {\em polyhedral-invariant hybrid automata (PIHA)}, is defined and an algorithm is proposed to construct a finite state transition system, which is a conservative approximation of the original PIHA. 
Determining a polyhedral approximation of each sampled segment of the continuous state evolution between switching planes is the underlying fundamental technique in the algorithm that is used.
Another approach proposed in \cite{asarin:00} is also based on the idea of sampling and polyhedral over-approximation of continuous state evolution of a continuous linear dynamics. 
On the other hand, in \cite{kurzh:00} and \cite{girard:05}, ellipsoids and zonotopes are used respectively for approximating continuous state evolution.

However, while these algorithms and tools compute some over-approximation of the reach set of hybrid systems with linear dynamics, computation of an over-approximate reach set which is arbitrarily close to the exact reach set of such hybrid systems with guaranteed termination remains an open issue for further research.

\subsection{Challenges and Contributions}
In general, the key challenges in reach set computation of HA are 
\begin{inparaenum}[(i)] 
	\item to over-approximate the exact continuous flow with arbitrarily small approximation error,
	\item to determine when and where a discrete transition occurs, and
	\item to develop a reach set computation algorithm with termination guarantee. 
\end{inparaenum}
In this paper, we address the problem of computing an over-approximation of the reach set of a special class of hybrid automata, called {\em Deterministic and Transversal Linear Hybrid Automaton (DTLHA)}, starting from an initial state over a finite time interval.
We call such an over-approximate reach set as a {\em bounded $\epsilon$-reach set}.
Our approach can be related to other approaches that use sampling and polyhedral over-approximation as in \cite{chutinan:03, asarin:00}.
The main contributions of our approach are as follows: 
\begin{inparaenum}[(i)]
	\item We show that an over-approximation of the reach set of a DTLHA can be computed arbitrarily closely to the exact reach set.
	\item We also show that such computation is guaranteed to terminate under a deterministic and transversal restriction on the discrete dynamics. 
	\item Furthermore, to facilitate practical computation, we extend these theoretical results to consider the numerical calculation errors caused by finite precision calculation capabilities. 
\end{inparaenum}
Based on the theoretical results, we propose an algorithm to compute a bounded $\epsilon$-reach set of a DTLHA, as well as a software architecture that is designed to improve the flexibility and the efficiency in computing such an over-approximation.

The paper is organized as follows.
In Section \ref{sec:pre}, we introduce definitions and notations that are used throughout this paper.
In Section \ref{sec:theory}, we show that, for arbitrarily small $\epsilon > 0$, a bounded $\epsilon$-reach set of a DTLHA starting from an initial state can be computed under the assumption of infinite precision numerical calculation capabilities. 
In Section \ref{sec:cond}, we first derive a set of conditions for computation of a bounded $\epsilon$-reach set, and then extend these conditions to consider errors caused by finite precision numerical calculation capabilities.
In Section \ref{sec:design}, we propose an algorithm for a bounded $\epsilon$-reach set computation, as well as an architecture for software implementation of the proposed algorithm.
Finally, we illustrate an example of bounded $\epsilon$-reach set computation in Section \ref{sec:imp}, followed by concluding remarks in Section \ref{sec:con}.

\section{Preliminaries} \label{sec:pre}

Let $\mathcal{X} \subset \mathbb{R}^n$  be a continuous state space over which a hybrid automaton is defined.
For a polyhedron $\C \subseteq \mathbb{R}^n$, we denote its interior by $\C^{\circ}$, and its boundary by $\partial \C$.
We will also use the notation $\B_r(x)$ to denote a closed ball of radius $r$ with center $x$, i.e., $\B_r(x) := \lbrace y \in \mathbb{R}^n : \Vert y - x \Vert \leq r \}$. 
The specific norm that we use in the definition of $\B_r(x)$ as well as the sequel is the $\ell_{\infty}$-norm. 
Since we are using the $\ell_{\infty}$-norm, $\B_r(x)$ is a hypercubic neighborhood of $x$.
One of the advantages of using the $\ell_{\infty}$-norm is that the induced hypercubic neighborhood is easily computed. 
More generally, a hypercube is a special case of a polyhedron, which is important since it is easy to propagate the image of this set under linear dynamics.
This is useful in Section \ref{sec:theory} when we describe our approach for bounded $\epsilon$-reach set computation.

We now describe the class of hybrid automata considered.
We assume that $\mathcal{X}$ is a closed and bounded subset of Euclidean space, and is partitioned into a collection of polyhedral regions $\mathcal{C} := \{\mathcal{C}_1, \cdots, \mathcal{C}_m \}$ such that $\C^{\circ}_i \ne \emptyset$ for each $i \in \{1, \cdots, m\}$ and

\begin{equation}\label{eq:pre:tess}
	\bigcup_{i=1}^{m} \mathcal{C}_{i} = \mathcal{X}, \quad \mathcal{C}_{i}^{\circ} \cap \mathcal{C}_{j}^{\circ} = \emptyset \quad for~i \ne j,
\end{equation}
where $m$ is the size of the partition, and each $\C_i$ is a polyhedron, called \emph{cell}. 
Two cells $\C_i$ and $\C_j$ are said to be \emph{adjacent} if the affine dimension of $\partial \mathcal{C}_i \cap \partial \mathcal{C}_j$ is $(n-1)$, or, equivalently, cells $\C_i$ and $\C_j$ intersect in an $(n-1)$-dimensional facet. 
Two cells $\C_i$ and $\C_j$ are said to be \emph{connected} if there exists a sequence of adjacent cells between $\C_i$ and $\C_j$.

\begin{defn} \label{def:lha}
An $n$-dimensional \emph{Linear Hybrid Automaton (LHA)},\footnote{In the hybrid system literature \cite{henzinger:97, alur:93} the word ``linear automaton'' has been used to denote a system where the differential equations and inequalities involved have constant right hand sides. This does not conform to the standard notion of linearity where the right hand side is allowed to be a function of state. In particular, it does not include the standard class of linear time-invariant systems that is of central interest in control systems design and analysis. We use the term ``linear'' in this latter more mathematically standard way that therefore encompasses a larger class of systems, and, more importantly, encompasses classes of switched linear systems that are of much interest.} 
is a tuple $(\mathbb{L}, Inv, A, u, \xrightarrow{G})$ satisfying the following properties: 
\begin{enumerate}[(a)]
	\item $\mathbb{L}$ is a finite set of \emph{locations} or \emph{discrete states}. The state space is $\mathbb{L} \times \mathbb{R}^n$, and an element $(l, x) \in \mathbb{L} \times \mathbb{R}^n$ is called a \emph{state}.
	\item[(b)] $Inv: \mathbb{L} \rightarrow 2^{\C}$ is a function that maps each location to a set of cells, called an \emph{invariant set} of a location, such that 
	 \begin{inparaenum}[(i)]
		\item for each $l \in \mathbb{L}$, all the cells in $Inv(l)$ are connected, 
		\item for any two locations $l, l' \in \mathbb{L}$, $Inv(l)^{\circ} \cap Inv(l')^{\circ} = \emptyset$, and 
		\item $\bigcup_{l \in \mathbb{L}} Inv(l) = \X$.
	\end{inparaenum}
	\item[(c)] $A: \mathbb{L} \rightarrow \mathbb{R}^{n \times n}$ is a function that maps each location to an $n \times n$ real-valued matrix, and 
	\item[(d)] $u: \mathbb{L} \rightarrow \mathbb{R}^n$ is a function that maps each location to an $n$-dimensional real-valued vector.
	\item[(e)] $\xrightarrow{G}: (\mathbb{R}^n, \mathbb{L}) \times (\mathbb{R}^n, \mathbb{L})$ is a binary relation which defines a \emph{discrete transition} from one state $(x_1, l_1)$ to another state $(x_2, l_2)$ such that $(x_1, l_1) \xrightarrow{G} (x_2, l_2)$ when $G$ is satisfied and $x_2$ is set to $x_1$ after a discrete transition.
\end{enumerate}
\end{defn}
In the sequel, for each $l_i \in \mathbb{L}$, we use $A_i$, $u_i$, $Inv_i$ to denote $A(l_i)$, $u(l_i)$, and $Inv(l_i)$, respectively.

An example LHA which satisfies Definition \ref{def:lha} is shown in Section \ref{sec:imp:example}.
Next, we define the behavior of LHA.
\begin{defn} \label{def:traj} 
For a location $l_i \in \mathbb{L}$, a \emph{trajectory} of duration $t \in \mathbb{R}^{+}$ for an $n$-dimensional LHA $\A$ is a continuous map $\eta$ from $[0,t]$ to $\mathbb{R}^n$, such that
\begin{enumerate}[(a)]
    \item $\eta(\tau)$ satisfies the differential equation
            \begin{equation}\label{eq:pre:lti}
                \dot{\eta}(\tau) = A_i \eta(\tau) + u_i ,
            \end{equation}
    \item $\eta(\tau) \in Inv_i$ for every $\tau \in [0,t]$.
\end{enumerate}
\end{defn}

\begin{defn} \label{def:exec}
An \emph{execution} $\alpha$ of an LHA $\A$ from a starting state $(l_0, x_0) \in \mathbb{L} \times \mathbb{R}^n$ is defined to be the concatenation of a finite or infinite sequence of trajectories $\alpha = \eta_0 \eta_1 \eta_2 \ldots$, such that
\begin{enumerate}[(a)]
    \item $\eta_0(0) = x_0$,
    \item $\eta_{k}(0) = \eta_{k-1}(\eta_{k-1}.dur)$ for $k \ge 1$, 
\end{enumerate}
where $\eta_k$ represents a trajectory defined at some location $l \in \mathbb{L}$ and $\eta_k.dur$ denotes the duration of  $\eta_k$.
We also define $\alpha.dur := \sum_k \eta_k.dur$ where $\alpha.dur$ denotes the duration of an execution $\alpha$.
\end{defn}

We can represent an execution $\alpha$ of an LHA $\A$ from an initial condition $(l_0, x_0) \in \mathbb{L} \times \mathbb{R}^n$ for time $[0,t]$ as a continuous map $x:[0,t] \rightarrow \mathbb{R}^n$ such that
\begin{inparaenum}[(a)]
    \item $t = \alpha.dur$,
    \item $x(0) = x_0 \in Inv_0$,
    \item $x(\tau_k) = \eta_{k}(0)$, and
    \item $x(\tau) = \eta_{k-1}(\tau - \tau_{k-1})$ for $\tau \in [\tau_{k-1}, \tau_k]$,
\end{inparaenum}
where $\tau_0 = 0$, and $\tau_k = \sum_{i=0}^{k-1} \eta_i.dur$ for $k \ge 1$.
Note that $\tau_k$ for $k \ge 1$ represents the time at the $k$-th discrete transition between locations and the continuous state is not reset during discrete transitions.

\begin{defn} \label{def:trans} 
For an execution $x(t)$ of an LHA, a discrete transition $(x_i, l_i) \xrightarrow{G} (x_j, l_j)$ occurs if $x_i = x(\tau')$ for some time $\tau'$, 
$x(\tau') \in Inv_i \cap Inv_j$ and $x(\tau') = \lim_{\tau \nearrow \tau'} x(\tau)$ where $x(\tau) \in (Inv_i)^{\circ}$ for $\tau \in (\tau'-\delta,\tau')$ for some $\delta > 0$.

\end{defn}
\begin{defn} \label{def:transversal}  
A discrete transition is called \emph{deterministic} if there is only one location $l_j \in \mathbb{L}$ to which a discrete transition state $x(\tau_k)$ can make a discrete transition from $l_i$. 
We call a discrete transition a \emph{transversal discrete transition} if there exists $\epsilon > 0$ such that
\begin{equation}\label{eq:pre:trans}
    \langle \dot{x}_{i}(\tau_k), \vec{n}_i \rangle \ge \epsilon ~~\land~~ \langle \dot{x}_{j}(\tau_k), \vec{n}_i \rangle \ge \epsilon ,
\end{equation}
where $\langle x, y \rangle$ denotes the inner product between $x$ and $y$, $\vec{n}_i$ is an outward normal vector of $\partial Inv_i$ at $x(\tau_k)$, and $\dot{x}_{i}(\tau_k) = A_i x(\tau_k) + u_i$, and $\dot{x}_{j}(\tau_k) = A_j x(\tau_k) + u_j$ are the vector fields at $x(\tau_k)$ evaluated with respect to the continuous dynamics of location $l_i$ and $l_j$, respectively. 
\end{defn}
Fig. \ref{fig:transition} illustrates a case where $x(\tau_k)$ satisfies such a deterministic and transversal discrete transition condition.
Note that if $x(\tau_k)$ satisfies a deterministic and transversal discrete transition condition, then $x(\tau_k)$ must make a discrete transition from a location $l_i$ to the other location $l_j$, and $l_j$ has to be unique. 
Furthermore, the \emph{Zeno behavior}, an infinite number of discrete transitions within a finite amount of time, does not occur if a discrete transition is a transversal discrete transition.
\begin{figure}
\begin{center}
  \includegraphics[width=7cm]{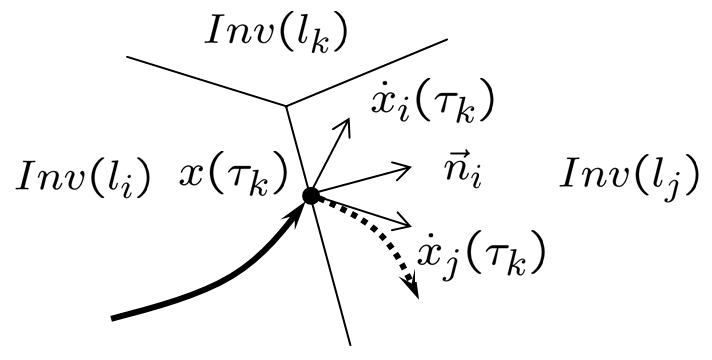} 
\caption{A deterministic and transversal discrete transition from a location $l_i$ to a location $l_j$ occurring at $x(\tau_k) \in \partial Inv(l_i) \cap \partial Inv(l_j)$.}
\label{fig:transition}
\end{center}
\end{figure}

We now define a special class of LHA whose every discrete transition satisfies the deterministic and transversality conditions defined in Definition \ref{def:transversal} as follows:

\begin{defn} \label{pre:def:dtlha}
Given an LHA $\mathcal{A}$, a starting state $(l_0, x_0) \in  \mathbb{L} \times \X$, a time bound $T$, and a jump bound $N$, we call an LHA $\mathcal{A}$ as a \emph{Deterministic and Transversal Linear Hybrid Automaton (DTLHA)}  if all discrete transitions in the execution starting from $x_0$ up to time $t_f := \min \{T, \tau_N \}$ are deterministic and transversal, where $\tau_N$ is the time at the N-th discrete transition.
\end{defn}

Next, we define the bounded reach set of a DTLHA and its over-approximation as follows:

\begin{defn}
A continuous state in $\X$ is \emph{reachable} if there exists some time $t$ at which it is reached by some execution $x$.
\end{defn}
\begin{defn} \label{def:pre:reach}
Given a state $x_0$ and a time $t$, the \emph{bounded reach set} up to time $t$, denoted as $\R_t(x_0)$, of a DTLHA $\A$ is defined to be the set of continuous states that are reachable for some time $\tau \in [0, t]$ by some execution $x$ starting from $x_0 \in Inv_0$.
\end{defn}
\begin{defn} \label{def:pre:ereach}
Given $\epsilon > 0$, a set of continuous states $S$ is called a \emph{bounded $\epsilon$-reach set} of a DTLHA $\A$ over a time interval $[0,t]$ from an initial state $x_0$ if $\R_t(x_0) \subseteq S$ and
\begin{equation}\label{eq:pre:ereach}
	d_H(\R_t(x_0), S) \le \epsilon,
\end{equation}
where $d_H(\P, \Q)$ denotes the Hausdorff distance 
between two sets $\P$ and $\Q$
that is defined as $d_H(\P, \Q) := \max \{ \sup_{p \in \P} \inf_{q \in \Q} d(p, q), \sup_{q \in \Q} \inf_{p \in \P} d(p, q) \}$ where $d(p,q) := \Vert p - q \Vert$.
\end{defn}

In the sequel, we use $\D_t(\P)$ to denote the set of states reached at time $t$ from a set $\P$ at time $0$.
Similarly, for the set of reached states over a time interval $[t_1, t_2)$ from $\P$, we use $\D_{[t_1, t_2)}(\P)$.
We also use $\D_t(\P, \gamma)$ to denote an over-approximation of $\D_t(\P)$ with an approximation parameter $\gamma > 0$, calling it a $\gamma$-approximation of $\D_t(\P)$ if it satisfies 
\begin{inparaenum}[(i)]
\item $\D_t(\P) \subset \D_t(\P, \gamma)$ and
\item $d_H(\D_t(\P),$ $\D_t(\P, \gamma)) \le \gamma$.
\end{inparaenum}
Note that $\D_0(\P, \gamma)$ is simply a $\gamma$-approximation of the set $\P$.

\section{Bounded $\epsilon$-Reachability of a DTLHA} \label{sec:theory}

In this section, we consider the problem of a bounded $\epsilon$-reach set computation of a DTLHA starting from an initial state over a finite time interval.
More precisely, we show that, for any given $\epsilon > 0$, a DTLHA $\A$, an initial condition $(l_0, x_0) \in \mathbb{L} \times \X$, a time upper bound $T \in \mathbb{R}^+$, and a discrete transition upper bound $N \in \mathbb{N}$, it is possible to compute a bounded $\epsilon$-reach set of $\A$ over a finite time interval $[0, t_f]$ under the assumptions that the following computations can be performed exactly:
\begin{inparaenum}[(i)]
	\item  $x(t) = e^{At} x_0 + \int_0^t e^{A(t-s)} u ds$,
	\item the convex hull of a set of finite points in $\mathbb{R}^n$, and
	\item the intersection between a polyhedron and a hyperplane,
\end{inparaenum}
where $t_f$ is as defined in Definition \ref{pre:def:dtlha}, $A \in \mathbb{R}^{n \times n}$, and $u \in \mathbb{R}^n$.

\subsection{Bounded $\epsilon$-Reach Set of a DTLHA at Initial Location}  \label{sec:theory:l0}

We first show how a trajectory of a DTLHA can be over-approximated through sampling and polyhedral over-approximation of each sampled state.
The basic approach for such over-approximation is shown in Fig. \ref{fig:traj}. 
It is necessary that, for a given size of over-approximation of each sampled state, a sampling period $h$ has to ensure that a trajectory $x(t)$ is contained in the computed set of polyhedra.
For a given value of $\epsilon > 0$, we now show how we can determine a sampling period $h$ which guarantees that.
\begin{equation}\label{eq:hcond}
    \max_{\tau \in [0, h]} \Vert x(t+\tau) - x(t) \Vert < \epsilon \qquad \forall x(t) \in \mathcal{X}.
\end{equation}

To determine a suitable value of $h$ which results in (\ref{eq:hcond}), we suppose $x(s) \in (Inv_i)^{\circ}$ for all $s \in [t, t+h]$ for some location $l_i \in \mathbb{L}$.
Then for a given $\Sigma_i$, $\X$, and $x(s) \in \X$, we have 

\begin{eqnarray} \nonumber
\max_{s \in [t, t+\tau]} \Vert \dot{x}(s) \Vert &=& \max_{s \in [t, t+ \tau]} \Vert A_i x(s) + u_i \Vert \\ \nonumber
&\le& \max_{s \in [t, t+\tau]} \{ \Vert A_i \Vert \Vert x(s) \Vert+ \Vert u_i \Vert \} \\ 
&\le& \Vert A_i \Vert \bar{x}+ \Vert u_i \Vert,
\end{eqnarray}
where $\bar{x} = \max_{x \in \mathcal{X}} \Vert x \Vert$. 

For a fixed $\tau \in [0, h]$, we can compute an upper bound on $\Vert x(t+\tau) - x(t) \Vert$ as follows:
\begin{eqnarray} \label{eq:h:ineq:0} \nonumber
\Vert x(t+\tau) - x(t) \Vert &\le& \int^{t+\tau}_{t} \Vert \dot{x}(s) \Vert ds \nonumber \\
&\le& \int^{t+\tau}_{t} \max_{s \in [t, t+ \tau]} \Vert \dot{x}(s) \Vert ds \nonumber \\
&\le& \int^{t+\tau}_{t} (\Vert A_i \Vert\bar{x} + \Vert u_i \Vert) ds \nonumber \\
&=& (\Vert A_i \Vert \bar{x} + \Vert u_i \Vert) \tau.
\end{eqnarray}

Maximization of both sides of (\ref{eq:h:ineq:0}) over $\tau \in [0, h]$ gives us 
\begin{eqnarray} \label{eq:h:ineq:1} \nonumber
	\max_{\tau \in [0,h]} \Vert x(t+\tau) - x(t) \Vert &\le& (\Vert A_i \Vert \bar{x} + \Vert u_i \Vert) h \\
	&\le& \max_{l_i \in \mathbb{L}} (\Vert A_i \Vert \bar{x} + \Vert u_i \Vert) h.
\end{eqnarray}

If we upper bound the right hand side by $\epsilon > 0$, then we can choose 
\begin{equation} \label{eq:h:ineq}
	h < \frac{\epsilon}{\bar{v}}.
\end{equation}
where  $\bar{v} := \max_{l_i \in \mathbb{L}} (\Vert A_i \Vert \bar{x} + \Vert u_i \Vert)$.

So, if we choose $h$ as 
\begin{equation} \label{eq:h:eq}
    h = \frac{\epsilon/2}{\bar{v}} ,
\end{equation}
then it is clear that we can ensure (\ref{eq:hcond}). 

We now show that, for a given $\epsilon > 0$, if a sampling period $h$ satisfies (\ref{eq:h:eq}), then a set constructed as a union of $\epsilon$-neighborhood of each sampled state along a trajectory is indeed a bounded $\epsilon$-reach set at an initial location.  
Moreover, such a bounded $\epsilon$-reach set contains the bounded reach set not only from the initial state but also from the $(\epsilon/2)$-neighborhood of the initial state.

\begin{figure}
\begin{center}
	\includegraphics[width=8cm]{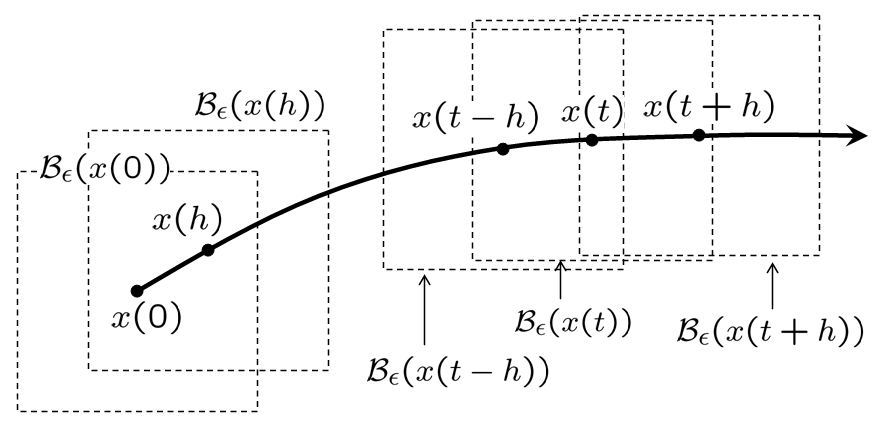}
	\caption{An over-approximation of a trajectory $x(t)$ through sampling.}\label{fig:traj}
\end{center}
\end{figure}

\begin{lem} \label{lem:theory:l0}
Given $\epsilon > 0$ and a time bound $T > 0$, a bounded $\epsilon$-reach set $\R_{t_f}(x_0, \epsilon)$ of a DTLHA $\A$ from an initial state $(x_0, l_0)$ can be determined as follows:
\begin{equation} \label{eq:lem:theory:l0}
    \R_{t_f}(x_0, \epsilon) := \bigcup_{k=0}^{m-1} \B_{\epsilon}(x(k h)) ,
\end{equation}
where $t_f := \min \{\tau_1, T \}$, $\tau_1 := \inf \{ t \in (0, T] : x(t) \not\in Inv_0 \land x(0) = x_0 \}$, $m := \lceil t_f/h \rceil$ and $h = (\epsilon/2)/\max_{l_i \in \mathbb{L}} (\Vert A_i \Vert \bar{x} + \Vert u_i \Vert)$.
Moreover, this set has two additional properties:
\begin{enumerate}[(i)]
    \item $\lim_{\epsilon \rightarrow 0} \R_{t_f}(x_0, \epsilon) = \R_{t_f}(x_0)$, and
    \item It contains an $\epsilon/2$ neighborhood of $\R_{t_f}(x_0)$, i.e.,
    \begin{equation}\nonumber
        \bigcup_{z \in \R_{t_f}(x_0)} \B_{\epsilon/2}(z) \subseteq \R_{t_f}(x_0, \epsilon).
    \end{equation}
\end{enumerate}
\end{lem}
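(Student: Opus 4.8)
The plan is to reduce all of the assertions to a single covering estimate and then read off each conclusion in turn. The starting observation is that on $[0,t_f]$ the trajectory never leaves $Inv_0$ by the definition of $\tau_1$, so it is governed by the single LTI flow $\dot{x} = A_0 x + u_0$ and the displacement bound leading to (\ref{eq:h:ineq:1}) applies; with the choice (\ref{eq:h:eq}) this sharpens to $\Vert x(\sigma) - x(\sigma') \Vert \le \epsilon/2$ whenever $|\sigma - \sigma'| \le h$. First I would establish the covering claim: for every $\tau \in [0, t_f]$ there exists an index $k \in \{0, \ldots, m-1\}$ with $\Vert x(\tau) - x(kh) \Vert \le \epsilon/2$. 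For $\tau \le (m-1)h$ I would take $k = \lfloor \tau/h \rfloor$, and for $\tau \in ((m-1)h, t_f]$ I would take $k = m-1$, invoking the bound $t_f - (m-1)h \le h$.

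Given the covering claim, the four conclusions follow almost mechanically. Over-approximation $\R_{t_f}(x_0) \subseteq \R_{t_f}(x_0, \epsilon)$ is immediate, since each trajectory point $x(\tau)$ lies in $\B_{\epsilon/2}(x(kh)) \subseteq \B_{\epsilon}(x(kh))$. For the Hausdorff bound, the first term of $d_H$ vanishes by this inclusion, while for the second term any $q \in \B_{\epsilon}(x(kh))$ obeys $\inf_{p \in \R_{t_f}(x_0)} \Vert p - q \Vert \le \Vert q - x(kh) \Vert \le \epsilon$ because the center $x(kh)$ (with $kh \le (m-1)h < t_f$) is itself a reachable point; hence $d_H(\R_{t_f}(x_0), \R_{t_f}(x_0, \epsilon)) \le \epsilon$ and the set is a bounded $\epsilon$-reach set. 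Property (i) then follows from this bound tending to $0$ as $\epsilon \to 0$ (noting that $t_f$ does not depend on $\epsilon$), and property (ii) follows from the triangle inequality: for $z = x(\tau) \in \R_{t_f}(x_0)$ and $w \in \B_{\epsilon/2}(z)$, the covering claim supplies a sample with $\Vert w - x(kh) \Vert \le \Vert w - z \Vert + \Vert z - x(kh) \Vert \le \epsilon$, so $w \in \R_{t_f}(x_0, \epsilon)$.

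I expect the main obstacle to be the endpoint bookkeeping in the covering claim: the last available sample $x((m-1)h)$ occurs strictly before $t_f$, so I must check that the tail segment $\{x(\tau) : \tau \in ((m-1)h, t_f]\}$ is still covered. This comes down to the estimate $t_f - (m-1)h \le h$, which I would justify from $m = \lceil t_f/h \rceil$ (so that $mh \ge t_f$). I would also dispatch the degenerate case $t_f = 0$ separately, and note explicitly that it is the determinism of the DTLHA that guarantees $\R_{t_f}(x_0) = \{x(\tau) : \tau \in [0, t_f]\}$, the single trajectory image, so that no additional reachable states need to be accounted for.
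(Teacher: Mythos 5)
Your proposal is correct and follows essentially the same route as the paper: the choice $h=(\epsilon/2)/\bar v$ gives the per-step displacement bound $\le\epsilon/2$, from which containment of $\R_{t_f}(x_0)$, the Hausdorff bound, the limit in (i), and the $\epsilon/2$-neighborhood property in (ii) via the triangle inequality all follow. You simply make explicit the covering claim and the endpoint bookkeeping at $(m-1)h$ that the paper dismisses as ``easy to see from the construction,'' which is a welcome tightening rather than a different argument.
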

\begin{proof}
Since $h$ satisfies (\ref{eq:h:ineq}), it is easy to see that $\R_{t_f}(x_0) \subset \R_{t_f}(x_0, \epsilon)$ from the construction of $\R_{t_f}(x_0, \epsilon)$.
Next, by the relation between $\epsilon$ and $h$ in (\ref{eq:h:eq}), it is clear that $h \rightarrow 0$ as $\epsilon \rightarrow 0$. 
This implies that $\R_{t_f}(x_0, \epsilon) \rightarrow \R_{t_f}(x_0)$ as $\epsilon \rightarrow 0$, establishing (i). 
For (ii), as noted above, (\ref{eq:h:eq}) actually chooses half the sampling period that would have sufficed to make it a bounded $\epsilon$-reach set over $[0,t_f]$. 
Hence, replacing $\epsilon$ by $\epsilon/2$ in the right hand side of (\ref{eq:lem:theory:l0}) still yields a bounded $\epsilon$-reach set. 
Thus the over stringent choice of $h$ contains not just $\R_{t_f}(x_0)$ but actually all points that are within a distance $\epsilon/2$ from it.
\end{proof}

\subsection{Continuity Property of DTLHA}  \label{sec:theory:cont}

Now let us consider the problem of computing a bounded $\epsilon$-reach set of a DTLHA $\A$ not from an initial state $x_0$ but from a $\delta$-neighborhood of $x_0$.
We first show that there exists a $\delta > 0$ such that the bounded reach set of a DTLHA $\A$ from a set $\B_{\delta}(x_0)$ at an initial location $l_0$ is contained in a bounded $\epsilon$-reach set of $\A$ from $x_0$ defined in (\ref{eq:lem:theory:l0}).

\begin{lem} \label{lem:theory:cont:1}
Given $\epsilon > 0$, a time bound $T > 0$, an initial state $x_0$, and a DTLHA $\A$, there exists a $\delta > 0$ such that 
\begin{equation}
   \R_{t_f}(\B_{\delta}(x_0)) \subseteq \R_{t_f}(x_0,\epsilon) ,
\end{equation}
where $\B_{\delta}(x_0)$ is a $\delta$-neighborhood around $x_0$ and $\R_{t_f}(\B_{\delta}(x_0))$ is the bounded reach set of $\A$ from $\B_{\delta}(x_0)$ up to time $t_f$ and $t_f$ is as defined in Lemma \ref{lem:theory:l0}. 
In particular, $\R_{t_f}(\B_{\epsilon /(2C)}(x_0)) \subseteq \R_{t_f}(x_0, \epsilon)$ for an appropriate $C$.
\end{lem}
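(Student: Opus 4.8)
The plan is to reduce the set inclusion to a Lipschitz-continuity estimate for the flow of the DTLHA with respect to its initial condition on $[0,t_f]$, and then to let the $\epsilon/2$-neighborhood property of Lemma~\ref{lem:theory:l0}(ii) absorb the resulting slack. Write $x(\cdot)$ for the execution started at $x_0$ and $x_y(\cdot)$ for the execution started at an arbitrary $y \in \B_{\delta}(x_0)$. I would show that there is a constant $C > 0$---depending only on the matrices $A_i$, the horizon $T$, the velocity bound $\bar v$, and the transversality margin---such that every reachable point $x_y(t)$ with $t \in [0,t_f]$ lies within distance $C\delta$ of the trajectory $\R_{t_f}(x_0) = \{x(s) : s \in [0,t_f]\}$. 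Setting $\delta = \epsilon/(2C)$ then places each $x_y(t)$ in the $\epsilon/2$-neighborhood of $\R_{t_f}(x_0)$, which by Lemma~\ref{lem:theory:l0}(ii) is contained in $\R_{t_f}(x_0,\epsilon)$; this gives both the existence claim and the explicit value $\delta = \epsilon/(2C)$.

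First I would control the portion of each perturbed execution that still lies in the initial location $l_0$. While $x(\cdot)$ and $x_y(\cdot)$ both obey $\dot z = A_0 z + u_0$, subtracting their variation-of-constants representations cancels the forcing term, leaving
\begin{equation}\nonumber
x_y(t) - x(t) = e^{A_0 t}(y - x_0),
\end{equation}
so that $\Vert x_y(t) - x(t) \Vert \le e^{\Vert A_0 \Vert T}\,\delta =: C_1 \delta$. Hence, as long as $x_y$ has not yet left $Inv_0$, the point $x_y(t)$ is within $C_1\delta$ of $x(t) \in \R_{t_f}(x_0)$; in the case $t_f = T < \tau_1$, where no perturbed execution transitions for small $\delta$, this single estimate already covers all of $[0,t_f]$.

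The remaining and principal difficulty is the neighborhood of the crossing time $\tau_1 = t_f$, since nearby executions may cross $\partial Inv_0$ at slightly different times and therefore spend part of $[0,t_f]$ in a different location. This is exactly where the transversality hypothesis of Definition~\ref{def:transversal} is indispensable: because $\langle \dot x_0(\tau_1), \vec n_0 \rangle \ge \epsilon > 0$, the scalar map measuring signed distance to the crossed facet along $x(\cdot)$ has nonvanishing derivative at $\tau_1$, and combined with the strict interiority of $x(s)$ for $s \in [0,\tau_1)$---which, by compactness, keeps perturbed executions interior away from $\tau_1$ and prevents a first crossing through a different facet for small $\delta$---an implicit-function argument yields a Lipschitz bound $\vert \tau_1(y) - \tau_1 \vert \le K\delta$ on the first exit time $\tau_1(y)$ of $x_y$. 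I expect establishing this exit-time sensitivity bound to be the crux of the proof; without transversality a grazing trajectory could render $\tau_1(y)$ arbitrarily sensitive to $y$, and even determinism could fail.

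Finally I would assemble the pieces. At its exit, $x_y(\tau_1(y))$ is within $C_1\delta$ of $x(\tau_1(y))$, which in turn is within $\bar v\,\vert \tau_1(y) - \tau_1\vert \le \bar v K\delta$ of $x(t_f)$ by the velocity bound. After the identity-reset transition---whose successor is the common location $l_1$ once $\delta$ is small enough for determinism to apply---the perturbed execution evolves in $l_1$ for a time at most $t_f - \tau_1(y) \le \vert \tau_1 - \tau_1(y)\vert \le K\delta$, moving at most $\bar v K\delta$ from its crossing point. Thus for every $t \in [0,t_f]$ the point $x_y(t)$ stays within $C\delta$ of $\R_{t_f}(x_0)$ with $C := C_1 + 2\bar v K$, and choosing $\delta = \epsilon/(2C)$ together with Lemma~\ref{lem:theory:l0}(ii) completes the argument.
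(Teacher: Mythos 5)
Your core argument is the same as the paper's: in the initial location the forcing terms cancel, giving $x_y(t)-x(t)=e^{A_0 t}(y-x_0)$, hence a bound $\Vert x_y(t)-x(t)\Vert \le C\delta$, and then Lemma~\ref{lem:theory:l0}(ii) absorbs the slack once $\delta=\epsilon/(2C)$. The paper's proof stops there: it treats both trajectories as obeying the $l_0$ dynamics on all of $[0,t_f]$ and concludes directly.

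Where you genuinely diverge is in the treatment of perturbed executions that exit $Inv_0$ slightly before $t_f=\tau_1$ and therefore spend part of $[0,t_f]$ under the $l_1$ dynamics. The paper's proof is silent on this point, even though $\R_{t_f}(\B_{\delta}(x_0))$ as defined does include such post-transition states; your transversality-based exit-time sensitivity bound $\vert\tau_1(y)-\tau_1\vert\le K\delta$, combined with the velocity bound $\bar v$, is exactly what is needed to close that gap, and it is the same mechanism the paper only deploys later (Lemmas~\ref{lem:theory:trans:exit} and~\ref{lem:theory:over}) when it handles transitions explicitly. So your version is more complete than the paper's at the cost of front-loading the transversality argument; the one piece you have only sketched is the implicit-function/exit-time estimate itself, which you correctly identify as the crux and which does follow from the uniform lower bound $\langle\dot x,\vec n\rangle\ge\epsilon$ in Definition~\ref{def:transversal} together with compactness. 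Everything else is sound.
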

\begin{proof}
Notice that $x(t) = e^{A_0t} x_0 + \int_0^t e^{A_0(t-s)} u_0 ds$, where $A_0$ and $u_0$ define the linear dynamics in an initial location $l_0$.
If we consider two different initial states $x_0$ and $y_0$ in $\B_{\delta}(x_0)$, then their trajectories $x(t)$ and $y(t)$ satisfy $x(t) - y(t) = e^{At} (x_0 - y_0)$.
Hence $\Vert x(t) - y(t) \Vert \leq c e^{\lambda t} \Vert x_0 - y_0 \Vert$ for some positive constant $c$ and some constant $\lambda$.

Let $C := c \cdot \max_{0 \leq t \leq {t_f}}\lbrace e^{\lambda t}\rbrace$.
Then
\begin{equation} \label{eq:cont:1}
    \Vert x(t) - y(t) \Vert \leq C \Vert x_0 - y_0 \Vert \qquad \mbox{for} \quad t \in [0, {t_f}].
\end{equation}
Since $\Vert x_0 - y_0 \Vert \leq \delta$, $\Vert x(t) - y(t) \Vert \leq C \delta$ for all $t \in [0, {t_f}]$.
This implies that any initial condition $y_0$ in $\B_\delta(x_0)$ results in a $y(t)$ that lies in a $C \delta$ neighborhood of $\R_{t_f}(x_0)$ for all $t \in [0,{t_f}]$. 
In particular, from property (ii) of Lemma \ref{lem:theory:l0}, it also follows that $\R_{t_f}(\B_{\delta}(x_0)) \subseteq \R_{t_f}(x_0, 2 C \delta)$.
If we set $\delta = \epsilon /(2C)$, then it is clear that $\R_{t_f}(\B_{\delta}(x_0)) \subseteq \R_{t_f}(x_0,\epsilon)$.
\end{proof}

Next we extend the result in Lemma \ref{lem:theory:cont:1} to show that there exist a $\delta > 0$ and a $\gamma > 0$ such that an over-approximation of the bounded reach set  $\R_{t_f}(\B_{\delta}(x_0))$, denoted as $\R_{t_f}(\B_{\delta}(x_0), \gamma)$, is also contained in $\R_{t_f}(x_0,\epsilon)$ that is defined in (\ref{eq:lem:theory:l0}).

\begin{lem} \label{lem:theory:cont:2}
Given $\epsilon > 0$, a time bound $T > 0$, an initial state $x_0$, and a DTLHA $\A$, there exist $\delta > 0$ and $\gamma > 0$ such that 
\begin{equation}
	\R_{t_f}(\B_{\delta}(x_0),\gamma) \subseteq \R_{t_f}(x_0,\epsilon),
\end{equation}
where $\R_{t_f}(\B_{\delta}(x_0),\gamma)$ is a $\gamma$-approximation of $\R_{t_f}(\B_{\delta}(x_0))$, and $t_f$ is as defined in Lemma \ref{lem:theory:l0}.
In particular, $\R_{t_f}(x_0) \subseteq \R_{t_f}(\B_{\epsilon/(4C)}(x_0), \epsilon/4) \subseteq \R_{t_f}(x_0, \epsilon)$.
\end{lem}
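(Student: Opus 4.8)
The plan is to stack two separate error contributions — the sensitivity-to-initial-condition error already controlled in Lemma \ref{lem:theory:cont:1}, and the polyhedral over-approximation error $\gamma$ — and then absorb their sum into the $\epsilon/2$ of slack that property (ii) of Lemma \ref{lem:theory:l0} guarantees. First I would recall from the proof of Lemma \ref{lem:theory:cont:1} that every point of the exact reach set $\R_{t_f}(\B_{\delta}(x_0))$ lies within distance $C \delta$ of $\R_{t_f}(x_0)$, where $C := c \cdot \max_{0 \le t \le t_f} e^{\lambda t}$ is the same Lipschitz-type constant. Next I would invoke the defining property of a $\gamma$-approximation: since $\R_{t_f}(\B_{\delta}(x_0), \gamma)$ over-approximates $\R_{t_f}(\B_{\delta}(x_0))$ with Hausdorff distance at most $\gamma$, every point of $\R_{t_f}(\B_{\delta}(x_0), \gamma)$ lies within $\gamma$ of some point of $\R_{t_f}(\B_{\delta}(x_0))$.

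Combining these two facts via the triangle inequality, every point of $\R_{t_f}(\B_{\delta}(x_0), \gamma)$ lies within $C \delta + \gamma$ of $\R_{t_f}(x_0)$; that is,
\begin{equation}\nonumber
\R_{t_f}(\B_{\delta}(x_0), \gamma) \subseteq \bigcup_{z \in \R_{t_f}(x_0)} \B_{C \delta + \gamma}(z).
\end{equation}
Now I would appeal to property (ii) of Lemma \ref{lem:theory:l0}, which states that $\R_{t_f}(x_0, \epsilon)$ contains the entire $\epsilon/2$-neighborhood of $\R_{t_f}(x_0)$. Hence it suffices to force $C \delta + \gamma \le \epsilon/2$. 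Splitting the budget evenly, the choice $\delta = \epsilon/(4C)$ and $\gamma = \epsilon/4$ gives $C \delta + \gamma = \epsilon/2$, yielding the desired inclusion $\R_{t_f}(\B_{\epsilon/(4C)}(x_0), \epsilon/4) \subseteq \R_{t_f}(x_0, \epsilon)$.

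For the remaining left-hand inclusion $\R_{t_f}(x_0) \subseteq \R_{t_f}(\B_{\epsilon/(4C)}(x_0), \epsilon/4)$, I would simply note that $x_0 \in \B_{\delta}(x_0)$, so $\R_{t_f}(x_0) \subseteq \R_{t_f}(\B_{\delta}(x_0))$, and the latter is contained in any of its $\gamma$-approximations by definition. This closes the chain $\R_{t_f}(x_0) \subseteq \R_{t_f}(\B_{\epsilon/(4C)}(x_0), \epsilon/4) \subseteq \R_{t_f}(x_0, \epsilon)$.

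I do not anticipate a deep obstacle here; the content is essentially an error-budget accounting argument. The one point requiring care is ensuring that the two error sources add linearly, so that the triangle inequality applies cleanly — this relies on $d_H$ being measured in the same $\ell_\infty$ metric throughout, and on property (ii) of Lemma \ref{lem:theory:l0} supplying a full $\epsilon/2$ of headroom rather than only $\epsilon$. The even split $C \delta = \gamma = \epsilon/4$ is not forced — any pair $(\delta, \gamma)$ with $C \delta + \gamma \le \epsilon/2$ suffices — but it matches the constants stated in the lemma.
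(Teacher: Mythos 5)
Your proof is correct and follows essentially the same route as the paper's: a triangle-inequality decomposition of the error into the approximation term $\gamma$ and the initial-condition sensitivity term $C\delta$, followed by absorbing $\gamma + C\delta \le \epsilon/2$ into the headroom supplied by property (ii) of Lemma \ref{lem:theory:l0}, with the same choice $\gamma = \epsilon/4$, $\delta = \epsilon/(4C)$. Your explicit verification of the left-hand inclusion $\R_{t_f}(x_0) \subseteq \R_{t_f}(\B_{\epsilon/(4C)}(x_0), \epsilon/4)$ is a small addition the paper leaves implicit, but it does not change the argument.
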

\begin{proof}
Let $x(t;z)$ denote the solution at time $t$ of the differential equation $\dot{x}(t) = Ax(t)+u$ with initial condition $x(0) = z \in \B_\delta(x_0)$.
Now consider $w \in \R_{t_f}(\B_{\delta}(x_0), \gamma)$. 
Then, by the definition of $\R_{t_f}(\B_{\delta}(x_0))$ and $\R_{t_f}(\B_{\delta}(x_0), \gamma)$, 
\[ \Vert w - x(t;z) \Vert < \gamma\] 
for some $t \in [0,t_f]$ and $z \in \B_{\delta}(x_0)$. 
Hence
\begin{eqnarray} \nonumber
    \Vert w - x(t;x_0) \Vert &=& \Vert w - x(t;z) + x(t;z) - x(t;x_0) \Vert \\ \nonumber
    &\le& \Vert w - x(t;z) \Vert + \Vert x(t;z) - x(t;x_0) \Vert \\ \nonumber
    &\le& \gamma + \Vert x(t;z) - x(t;x_0) \Vert . \nonumber
\end{eqnarray}
From (\ref{eq:cont:1}), we know that 
\[
	\Vert x(t;z) - x(t;x_0) \Vert \le C \Vert z - x_0 \Vert \le C \delta.
\]
Hence 
\[
	\Vert w - x(t;x_0) \Vert \le \gamma + C \delta
\]
which implies that $w$ lies in a $(\gamma + C \delta)$-neighborhood of $\R_{t_f}(x_0)$.
From the property (ii) in Lemma \ref{lem:theory:l0}, if we replace $\epsilon/2$ with $(\gamma + C \delta)$, then we have $w \in \R_{t_f}(x_0, 2(\gamma + C \delta))$ which in turn implies that $\R_{t_f}(\B_{\delta}(x_0),\gamma) \subseteq \R_{t_f}(x_0, 2(\gamma + C \delta))$.
So, given $\epsilon > 0$, we can choose $\gamma = \epsilon/4$ and $\delta = \epsilon/(4C)$, and then $\R_{t_f}(\B_{\delta}(x_0),\gamma) \subseteq \R_{t_f}(x_0, \epsilon)$.
\end{proof}

\subsection{Decidability of Discrete Transition Event}  \label{sec:theory:trans}

Recall that $\tau_1$ is the time $t$ when a reached state $x(t)$ of a DTLHA starting from an initial state first exits the invariant set of an initial location.
We now show that, for a given $T$, even though it is not known to be decidable to determine $\tau_1$ exactly, we can still determine the event of exit of a reached state $x(t)$ from the invariant set of an initial location if $\tau_1 < T$.

\begin{lem} \label{lem:theory:trans:exit}
Given a time bound $T > 0$, an initial condition $(l_0, x_0) \in \mathbb{L} \times \mathbb{R}^n$, and a DTLHA $\A$, if $\tau_1 < T$, then for all small enough $\delta > 0$ and for some small enough $h > 0$, $\B_{\delta}(x(nh)) \subset (Inv_0)^{c}$ for some $n \in \mathbb{N}$ satisfying $nh \le T$.
\end{lem}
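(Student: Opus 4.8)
The plan is to exploit the transversality of the first discrete transition. Since $\tau_1 < T$ and $\tau_1 \le t_f$, the exit at $\tau_1$ is a discrete transition of the DTLHA, hence deterministic and transversal. Transversality forces the trajectory to cross $\partial Inv_0$ with a definite positive normal speed, so that immediately after $\tau_1$ the trajectory enters the open complement $(Inv_0)^{c}$ and remains a \emph{uniform} positive distance away from $\partial Inv_0$ over a whole time interval of positive length. Once such an interval is isolated, any sufficiently fine sampling grid must place a sample point $nh$ inside it, and around that sample point a small ball lies entirely in $(Inv_0)^{c}$.

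First I would record the geometry at the exit point. Let $\vec{n}_0$ be the outward unit normal to $\partial Inv_0$ at $x(\tau_1)$, let $l_j$ be the (unique, by determinism) successor location, and let $H$ be the hyperplane supporting the crossed facet, so that on a small neighborhood $\mathcal{N}$ of $x(\tau_1)$ the set $Inv_0$ coincides with the half-space $\{y : \langle y - x(\tau_1), \vec{n}_0 \rangle \le 0\}$. Define the scalar progress function $g(t) := \langle x(t) - x(\tau_1), \vec{n}_0 \rangle$, so $g(\tau_1) = 0$. For $t > \tau_1$ the trajectory is governed by the $l_j$ dynamics, giving $g'(t) = \langle A_j x(t) + u_j, \vec{n}_0 \rangle$, which is continuous in $t$ and, by the transversality condition (\ref{eq:pre:trans}), satisfies $g'(\tau_1^{+}) = \langle \dot{x}_j(\tau_1), \vec{n}_0 \rangle \ge \epsilon_0$ for the transversality constant $\epsilon_0 > 0$. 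Hence $g'(t) \ge \epsilon_0/2$ on $[\tau_1, \tau_1 + \eta]$ for some $\eta > 0$, which I would choose small enough that $\tau_1 + \eta < T$ (possible since $\tau_1 < T$) and that $x(t)$ stays inside $\mathcal{N}$.

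Next I would convert this into a uniform geometric margin and finish by sampling. Integrating, $g(t) \ge (\epsilon_0/2)(t - \tau_1)$ on $[\tau_1, \tau_1 + \eta]$, so on the subinterval $I := [\tau_1 + \eta/2,\, \tau_1 + \eta]$ we get $g(t) \ge \epsilon_0 \eta / 4 > 0$. Because $\vec{n}_0$ is a unit vector and all norms on $\mathbb{R}^n$ are equivalent, positivity of $g$ yields a uniform bound $\mathrm{dist}(x(t), H) \ge \rho_0$ on $I$; since $x(t)$ remains in $\mathcal{N}$, where the only part of $\partial Inv_0$ within reach is the crossed facet lying in $H$, this upgrades to $\mathrm{dist}(x(t), \partial Inv_0) \ge \rho$ for some $\rho > 0$, and $g(t) > 0$ places $x(t) \in (Inv_0)^{c}$ for all $t \in I$. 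As $I$ has length $\eta/2$, any $h < \eta/2$ admits an integer $n := \lceil (\tau_1 + \eta/2)/h \rceil$ with $nh \in I$ and $nh \le \tau_1 + \eta < T$; for that $n$, $x(nh)$ lies in $(Inv_0)^{c}$ at distance at least $\rho$ from $\partial Inv_0$, so $\B_{\delta}(x(nh)) \subset (Inv_0)^{c}$ for every $\delta < \rho$, with $h$ and $n$ fixed independently of $\delta$.

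I expect the main obstacle to be the passage carried out in the third paragraph: turning the infinitesimal inner-product condition of transversality into a genuine, \emph{uniform} positive separation of the trajectory from the entire polyhedral boundary $\partial Inv_0$ over an interval. This requires controlling the possibly non-smooth, multi-facet structure of $\partial Inv_0$—arguing that near $x(\tau_1)$ only the single crossed facet can be the nearest boundary piece—and being careful about norm equivalence when translating the Euclidean normal inner product into the $\ell_{\infty}$ distance used throughout the paper.
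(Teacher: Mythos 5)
Your proof is correct and follows essentially the same route as the paper's: continuity of the vector field at the transversal crossing yields a uniform outward normal speed, hence the trajectory lies in $(Inv_0)^{c}$ over a time interval of positive length, which any sufficiently fine sampling grid must hit, and compactness then supplies the $\delta$-ball. The only notable difference is that you propagate $x(t)$ for $t>\tau_1$ under the successor location's dynamics (using $\langle \dot{x}_{j}(\tau_1), \vec{n}_0 \rangle \ge \epsilon$), which matches the actual execution, whereas the paper pushes points off $\partial Inv_0$ under the $l_0$ dynamics (using $\langle \dot{x}_{0}(\tau_1), \vec{n}_0 \rangle > 0$); both are licensed by the two-sided transversality condition (\ref{eq:pre:trans}).
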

\begin{proof}
Let $\vec{n}_1$ be an outward normal vector of $\partial Inv_0$ at $x(\tau_1)$.
Since $\langle \dot{x}(\tau_1), \vec{n}_1 \rangle > 0$ by assumption, then by the continuity of the vector field of a linear dynamics in $l_0$, there exists an $r > 0$ such that for all $z \in \B_{3 r}(x(\tau_1)) \cap \partial Inv_0$, $\langle \dot{z}, \vec{n}_1 \rangle > 0$ where $\dot{z} := A_0 z + u_0$.
Notice that $\Vert \dot{z} \Vert \le \bar{v}$ by the definition of $\bar{v}$ in (\ref{eq:h:ineq}).
Let $x(t;z)$ denotes the solution at time $t$ of the differential equation $\dot{x}(t) = A_0 x(t)+u_0$ with initial condition $x(0) = z$.
Then for any $z \in \B_{r}(x(\tau_1)) \cap \partial Inv_0$, it is guaranteed that $x(t;z) \in (Inv_0)^C$ for $t \in (0, 2 h)$ for any $h > 0$ satisfying $h < r / \bar{v}$.
This implies that $x(nh) \in (Inv_0)^{c}$ for some $n \in \mathbb{N}$.
Moreover by compactness of $Inv_0$, there exists a $\delta > 0$ such that $\B_{\delta}(x(nh)) \subset (Inv_0)^C$.
\end{proof}

Now suppose that $x(t) \in Inv_0$ for all $0 \le t \le T+\theta$ for some $\theta > 0$.
Then this fact can also be determined.

\begin{lem} \label{lem:theory:trans:noexit}
Suppose $x(t) \in Inv_0$ for all $0 \le t \le T+\theta$ for some $\theta > 0$. 
Then for all small enough $\delta > 0$ and $\gamma > 0$,
\begin{equation}
    \R_{t_f}(\B_{\delta}(x_0), \gamma) \subseteq (Inv_0)^{\circ}.
\end{equation}
where $t_f := \min \{ \tau_1, T \} = T$.
\end{lem}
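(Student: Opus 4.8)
The plan is to reduce the statement to a single quantitative fact — that the reference trajectory keeps a strictly positive distance from $\partial Inv_0$ on $[0,T]$ — and then transport this margin to perturbed initial conditions and their $\gamma$-approximations using the continuity estimate already proved in Lemma \ref{lem:theory:cont:1}. First I would dispose of the bookkeeping: since $x(t) \in Inv_0$ for every $t \in [0,T+\theta]$, no exit from $Inv_0$ can occur on $(0,T]$, so $\tau_1 > T$ and hence $t_f = \min\{\tau_1, T\} = T$, as claimed. On $[0,T]$ the execution never leaves location $l_0$, so $\R_{t_f}(x_0)$ is exactly the reference curve $K := \{ x(t;x_0) : t \in [0,T] \}$, where $x(\cdot\,;x_0)$ solves $\dot x = A_0 x + u_0$ with $x(0)=x_0$. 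Note $K$ is compact, being the continuous image of $[0,T]$.

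The heart of the argument is the claim $K \subseteq (Inv_0)^{\circ}$, i.e. that the reference trajectory never touches $\partial Inv_0$ on the closed interval $[0,T]$, endpoint included. Let $t^{*} := \inf \{ t \in (0, T+\theta] : x(t;x_0) \in \partial Inv_0 \}$ be the first boundary-contact time. If $t^{*} > T$ we are done, since then $x(t;x_0) \in (Inv_0)^{\circ}$ for all $t \in [0,T]$. Suppose instead $t^{*} \le T$. Because the trajectory reaches $x(t^{*};x_0) \in \partial Inv_0$ from the interior, this is a discrete-transition instant in the sense of Definition \ref{def:trans}, and the DTLHA hypothesis forces it to be transversal: by Definition \ref{def:transversal} the field satisfies $\langle \dot x(t^{*};x_0), \vec{n}_0 \rangle \ge \epsilon > 0$ with $\vec{n}_0$ the outward normal, so the trajectory leaves $Inv_0$ throughout some interval $(t^{*}, t^{*}+\sigma)$ with $\sigma > 0$. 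Since $t^{*} \le T$, we may take $\sigma$ small enough that $t^{*}+\sigma < T+\theta$, contradicting $x(t) \in Inv_0$ on $[0,T+\theta]$. Hence $t^{*} > T$, which is exactly where the strict forward margin $\theta > 0$ is used, and $K \subseteq (Inv_0)^{\circ}$. As $K$ is compact and $(Inv_0)^{\circ}$ is open, the distance $d^{*} := \inf \{ \Vert x(t;x_0) - y \Vert : t \in [0,T],\ y \in \partial Inv_0 \}$ is strictly positive.

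With $d^{*} > 0$ in hand the remaining steps are routine perturbation estimates. By Lemma \ref{lem:theory:cont:1}, inequality (\ref{eq:cont:1}), every trajectory from an initial condition $y_0 \in \B_{\delta}(x_0)$ stays within $C\delta$ of $\R_{t_f}(x_0) = K$ on $[0,T]$, so $\R_{t_f}(\B_{\delta}(x_0))$ lies in the $C\delta$-neighborhood of $K$; passing to a $\gamma$-approximation enlarges this by at most $\gamma$ in Hausdorff distance, so $\R_{t_f}(\B_{\delta}(x_0),\gamma)$ lies in the $(C\delta+\gamma)$-neighborhood of $K$. Choosing any $\delta > 0$ and $\gamma > 0$ with $C\delta + \gamma < d^{*}$ — for instance $\delta < d^{*}/(2C)$ and $\gamma < d^{*}/2$ — then keeps every such point at distance at least $d^{*} - (C\delta+\gamma) > 0$ from $\partial Inv_0$, hence inside $(Inv_0)^{\circ}$. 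This gives $\R_{t_f}(\B_{\delta}(x_0),\gamma) \subseteq (Inv_0)^{\circ}$ for all small enough $\delta$ and $\gamma$.

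The main obstacle is the middle step — showing $K$ avoids $\partial Inv_0$ on the \emph{closed} interval, including the endpoint $t = T$. This is the only place where the hypothesis genuinely bites: without the strict forward margin $\theta$ the trajectory could graze $\partial Inv_0$ exactly at $t = T$ and no positive $d^{*}$ would exist, while ruling out an interior tangential or re-entrant contact is precisely what the transversality structure of a DTLHA supplies (an inward contact is impossible because the trajectory is interior for all earlier times, and here I take the global initial state $x_0$ to lie in $(Inv_0)^{\circ}$). Everything after establishing $d^{*} > 0$ is a standard compactness-plus-continuity argument built on Lemma \ref{lem:theory:cont:1}.
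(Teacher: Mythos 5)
Your proposal is correct and follows essentially the same route as the paper: the paper's proof is a one-liner that asserts $x(t) \in (Inv_0)^{\circ}$ on $[0,T]$ and then invokes Lemma \ref{lem:theory:cont:2}, which is exactly your final perturbation step (confine $\R_{t_f}(\B_{\delta}(x_0),\gamma)$ to a $(C\delta+\gamma)$-neighborhood of the reference trajectory and shrink $\delta,\gamma$). The one substantive addition you make is the middle step --- using transversality and the forward margin $\theta$ to justify that $x(t)\in Inv_0$ on $[0,T+\theta]$ actually forces $x(t)\in(Inv_0)^{\circ}$ on $[0,T]$ with a positive distance $d^{*}$ to the boundary --- which the paper states without proof; your argument for it is sound (modulo the assumption $x_0\in(Inv_0)^{\circ}$, which you flag).
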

\begin{proof}
Since $x(t) \in (Inv_0)^{\circ}$ for all $0 \le t \le T$, the result immediately follows from Lemma \ref{lem:theory:cont:2}.
\end{proof}

\subsection{Over-approximation of Discrete Transition State}  \label{sec:theory:over}

For a given time bound $T$, suppose that the event $\tau_1 < T$ is determined for some $\delta$ and $h$ as shown in Lemma \ref{lem:theory:trans:exit}.
Then, to continue to compute a bounded $\epsilon$-reach set beyond an initial location, we need to determine 
\begin{inparaenum}[(i)]
	\item a new location to which a discrete transition is made from an initial location, and also
	\item an over-approximation of a discrete transition state from which the bounded $\epsilon$-reach set computation can be continued.
\end{inparaenum}
We now show that  these can be determined, if a discrete transition state $x(\tau_1)$ is deterministic and, more importantly, transversal, as defined in Definition \ref{def:transversal}.

\begin{lem} \label{lem:theory:over}
Given $\tau_1 < T$, if $x(\tau_1) \in \partial Inv_0$ satisfies a deterministic and transversal discrete transition condition, then there exists a $\delta > 0$ such that $\B_{2\delta}(x(\tau_1)) \subset (Inv_0 \cup Inv_1)$ for some location $l_1$. Furthermore, there exists a $\Delta > 0$ such that
\begin{enumerate}[(i)]
    \item $x(t) \in (Inv_1)^{\circ}$ for $t \in (\tau_1, \tau_1 + \Delta)$ , and
    \item 
         \begin{equation} \label{eq:lem:lha:6}
                \bigcup_{y \in \mathcal{J}_{0,1}} x(\tau;y) \subset (Inv_1)^{\circ}  \quad \mbox{for } \tau \in (0, \Delta),
         \end{equation}
\end{enumerate}
where $x(\tau;y)$ is the solution at time $\tau$ of an LTI system for the location $l_1$ with an initial state $y$ and $\mathcal{J}_{0,1} := \B_{\delta}(x(\tau_1)) \cap Inv_0 \cap Inv_1$. 
\end{lem}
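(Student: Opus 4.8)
The plan is to exploit the transversality condition at $x(\tau_1)$, which guarantees that the vector fields of \emph{both} locations $l_0$ and $l_1$ point strictly across the shared boundary $\partial Inv_0 \cap \partial Inv_1$ with inner product at least $\epsilon > 0$ against the outward normal $\vec{n}_0$. First I would establish the existence of the location $l_1$ and the claim $\B_{2\delta}(x(\tau_1)) \subset Inv_0 \cup Inv_1$: since the transition is deterministic, $x(\tau_1)$ lies on a facet shared by exactly $Inv_0$ and a unique $Inv_1$, and since this facet has affine dimension $n-1$, a small enough hypercubic ball around $x(\tau_1)$ meets only these two cells and no others; compactness of the partition and the fact that $x(\tau_1)$ is in the relative interior of the facet (away from lower-dimensional faces) let me pick $\delta$ uniformly. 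I would then fix $\delta$ small enough that this containment holds with radius $2\delta$.

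Next I would prove (i), that the trajectory enters $(Inv_1)^{\circ}$ immediately after $\tau_1$. Using the same continuity-of-vector-field argument as in Lemma \ref{lem:theory:trans:exit}, transversality of the $l_1$ field gives $\langle \dot{x}_1(\tau_1), \vec{n}_0 \rangle \ge \epsilon > 0$, so on a short time interval $(\tau_1, \tau_1 + \Delta)$ the solution under the $l_1$ dynamics moves strictly into the half-space on the $Inv_1$ side of the boundary; combined with the ball containment from the first step, $x(t) \in (Inv_1)^{\circ}$. Concretely, I would bound $\Vert x(t) - x(\tau_1) \Vert \le \bar{v}(t - \tau_1)$ and bound the signed distance to the boundary from below by a linear term $\epsilon(t-\tau_1)/2$ minus a second-order correction, choosing $\Delta$ small enough that the linear term dominates and keeps the state inside the ball $\B_{2\delta}(x(\tau_1))$.

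For (ii), the extension from the single trajectory to the whole bundle $\mathcal{J}_{0,1} = \B_{\delta}(x(\tau_1)) \cap Inv_0 \cap Inv_1$ is where I would invoke continuity of the flow in the initial condition, exactly as in the bound (\ref{eq:cont:1}) from Lemma \ref{lem:theory:cont:1}: every $y \in \mathcal{J}_{0,1}$ is within $\delta$ of $x(\tau_1)$, so $\Vert x(\tau;y) - x(\tau;x(\tau_1)) \Vert \le C\delta$ for $\tau \in [0,\Delta]$. Since by (i) the reference trajectory stays a uniform distance inside $(Inv_1)^{\circ}$ away from $\partial Inv_1$ on some subinterval, shrinking $\delta$ (hence $C\delta$) below that margin keeps the entire bundle inside $(Inv_1)^{\circ}$. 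I would also re-derive the transversality inequality at nearby boundary points $y$ rather than only at $x(\tau_1)$, again using continuity of the linear vector field to find a radius $r$ on which $\langle \dot{z}, \vec{n}_0 \rangle \ge \epsilon/2$ for all $z$ in $\B_r(x(\tau_1)) \cap \partial Inv_1$, so that each trajectory in the bundle crosses inward.

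The main obstacle I anticipate is coordinating the two smallness requirements coherently: $\delta$ must be small enough both for the geometric containment $\B_{2\delta}(x(\tau_1)) \subset Inv_0 \cup Inv_1$ and for the continuity margin $C\delta$ to fit inside the interior tube carved out by (i), while $\Delta$ must be chosen \emph{after} $\delta$ so that the reference trajectory's interior margin exceeds $C\delta$ throughout $(0,\Delta)$. The quantifier order — first a $\delta$ giving the ball containment, then a $\Delta$ for the reference trajectory, then possibly a further shrink of $\delta$ to make the bundle fit — needs care, but each individual estimate is a routine consequence of the Lipschitz/linear flow bounds and the strict transversality gap $\epsilon$.
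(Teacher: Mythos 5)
Your proposal is correct and follows essentially the same route as the paper: determinism plus compactness of the other cells gives the containment $\B_{2\delta}(x(\tau_1)) \subset Inv_0 \cup Inv_1$, continuity of the two linear vector fields extends the transversality inequality to a neighborhood of $x(\tau_1)$ on the shared facet, and the speed bound $\bar{v}$ fixes $\Delta$ proportional to $\delta/\bar{v}$. The only caution is that your flow-continuity step (shrinking $C\delta$ below the reference trajectory's interior margin) cannot carry (ii) by itself, since that margin vanishes as $\tau \to 0$; the pointwise transversality at every $y \in \mathcal{J}_{0,1}$, which you also include, is what actually does the work, exactly as in the paper.
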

\begin{proof}
Let $Inv_1, Inv_2$ be invariant sets for some locations $l_1$ and $l_2$ such that $Inv_0 \cap Inv_1 \cap Inv_2 \ne \emptyset$.
Since $x(\tau_1)$ satisfies a deterministic discrete transition condition, if $x(\tau_1) \in Inv_0 \cap Inv_1$, then $x(\tau_1) \notin Inv_0 \cap Inv_2$.
This implies that $x(\tau_1) \not\in Inv_2$.
Then by compactness of $Inv_2$, we know that there exists a $\delta' > 0$ such that $\B_{\delta'}(x(\tau_1)) \cap Inv_2 = \emptyset$. Therefore, we conclude that $\B_{\delta'}(x(\tau_1)) \subset Inv_0 \cup Inv_1$.

Let $\vec{n}_1$ be an outward normal vector of $\partial Inv_0$ at $x(\tau_1)$.
Since $x(\tau_1)$ satisfies a transversal discrete transition condition from the location $l_0$ to the other location $l_1$, we know that there exists a $\delta'' > 0$ such that for all $x(t) \in \B_{\delta''}(x(\tau_1)) \cap Inv_0 \cap Inv_1$, $\langle \dot{x}(t), \vec{n}_1 \rangle > 0$, where $\dot{x}(t)$ is taken as either $A_0 x(t) + u_0$ or as $A_1 x(t) + u_1$, by the continuity of vector fields of the LTI dynamics for $l_0$ and $l_1$.

Let $\delta = \min \lbrace \delta'/2, \delta''/2\rbrace$, and $\Delta := \delta/(2 \bar{v})$ where $\bar{v}$ is as defined in (\ref{eq:h:ineq}).
Then by the definition of $\delta$ and $\bar{v}$, it is clear that (i) and (ii) hold for these choices of $\delta$ and $\Delta$.
\end{proof}

In Lemma \ref{lem:theory:over}, $\mathcal{J}_{0,1}$ is an over-approximation of $x(\tau_1)$ that is determined by taking a $\delta$-ball around $x(\tau_1)$ for suitably small $\delta > 0$, and intersecting it with $Inv_0$ and $Inv_1$. Once such a suitably small $\delta$ is known, then the following lemma shows that it is also possible to determine a $\delta_0$-neighborhood of an initial state $x_0$ such that the reach set at time $\tau_1$ of a DTLHA $\A$ from $\B_{\delta_0}(x_0)$ is contained in $\B_{\delta}(x(\tau_1))$.

\begin{lem} \label{lem:theory:over:cont:1}
Given $\delta$ determined by Lemma \ref{lem:theory:over}, there exists a $\delta_0$ such that
\begin{equation}
       \D_{\tau_1}(\B_{\delta_0}(x_0)) \subseteq \B_{\delta}(x(\tau_1)),
\end{equation}
and $\D_{\tau_1}(\B_{\delta_0}(x_0)) \cap Inv_0 \cap Inv_1$ is an over-approximation of $x(\tau_1)$ determined by $\delta_0$.
\end{lem}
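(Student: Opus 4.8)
The plan is to obtain the desired $\delta_0$ directly from the initial-condition continuity estimate already established in Lemma~\ref{lem:theory:cont:1}. Since we are given $\tau_1 < T$, we have $t_f = \min\{\tau_1, T\} = \tau_1$, so the Lipschitz bound (\ref{eq:cont:1}) holds on the full interval $[0, \tau_1]$. Writing $x(\cdot\,; z)$ for the flow of the location-$l_0$ dynamics $\dot{x} = A_0 x + u_0$ started at $z$, inequality (\ref{eq:cont:1}) gives $\Vert x(\tau_1; y_0) - x(\tau_1; x_0) \Vert \le C \Vert y_0 - x_0 \Vert$ for every $y_0 \in \B_{\delta_0}(x_0)$, where $C = c \cdot \max_{0 \le t \le \tau_1} e^{\lambda t}$ is the constant from Lemma~\ref{lem:theory:cont:1}. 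Because $x(\tau_1; x_0) = x(\tau_1)$, choosing $\delta_0 := \delta/C$ yields $\Vert x(\tau_1; y_0) - x(\tau_1) \Vert \le C\delta_0 = \delta$, so that $x(\tau_1; y_0) \in \B_{\delta}(x(\tau_1))$. Taking the union over $y_0 \in \B_{\delta_0}(x_0)$ gives $\D_{\tau_1}(\B_{\delta_0}(x_0)) \subseteq \B_{\delta}(x(\tau_1))$, which is the first assertion.

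For the second assertion, I would verify that the intersected set is genuinely an over-approximation of the single point $x(\tau_1)$ in the sense of Section~\ref{sec:pre}. Since $x_0 \in \B_{\delta_0}(x_0)$, and $x(\tau_1) \in \partial Inv_0 \subseteq Inv_0$ while also $x(\tau_1) \in Inv_1$ (it is the transition state, lying in $Inv_0 \cap Inv_1$ by Definition~\ref{def:trans} and Lemma~\ref{lem:theory:over}), the point $x(\tau_1)$ lies in $\D_{\tau_1}(\B_{\delta_0}(x_0)) \cap Inv_0 \cap Inv_1$; hence this set is nonempty and contains $x(\tau_1)$. The containment just proved then bounds the Hausdorff distance between $\{x(\tau_1)\}$ and the intersected set by $\delta = C\delta_0$, so it is an over-approximation of $x(\tau_1)$ whose accuracy is controlled by $\delta_0$.

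The one subtlety I would flag is the meaning of $\D_{\tau_1}$ when some initial conditions in $\B_{\delta_0}(x_0)$ exit $Inv_0$ strictly before $\tau_1$: for such $y_0$ the trajectory follows the $l_0$ field only up to its own crossing time and the $l_1$ field thereafter, so the raw bound (\ref{eq:cont:1}) for the $l_0$ flow does not literally apply on all of $[0, \tau_1]$. This is the main point to nail down. I would handle it by invoking Lemma~\ref{lem:theory:over}: for the $\delta$ supplied there we have $\B_{2\delta}(x(\tau_1)) \subset Inv_0 \cup Inv_1$ and both vector fields are bounded in norm by $\bar{v}$. Thus after shrinking $\delta_0$ so that every trajectory from $\B_{\delta_0}(x_0)$ first reaches the switching facet only inside $\B_{\delta}(x(\tau_1))$ and at a time within order $\delta/\bar{v}$ of $\tau_1$, the residual motion under either field moves the state by at most a quantity of order $\bar{v} \cdot (\delta/\bar{v}) = \delta$. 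Combining the $l_0$ estimate up to the crossing time with this short post-crossing displacement keeps $x(\tau_1; y_0)$ inside $\B_{\delta}(x(\tau_1))$, so the conclusion is unaffected. If instead $\D_{\tau_1}$ is read as pure location-$l_0$ propagation of the initial ball, consistent with the exact-computation assumptions of Section~\ref{sec:theory}, this subtlety disappears entirely and the first paragraph alone suffices.
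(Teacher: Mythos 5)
Your proof is correct and follows the same route as the paper: the paper's own proof is a one-line appeal to the argument of Lemma~\ref{lem:theory:cont:1} with the choice $\delta_0 = \delta/C$, which is exactly your first paragraph. Your additional discussion of the second assertion and of trajectories that may exit $Inv_0$ slightly before $\tau_1$ addresses details the paper leaves implicit, but it does not change the underlying argument.
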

\begin{proof}
This follows from the same argument used in the proof of Lemma \ref{lem:theory:cont:1}, by choosing $\delta_0 = \delta/C$.
\end{proof}

The next lemma shows that $\delta_0$ for $\B_{\delta_0}(x_0)$ can be determined at each discrete transition time $\tau_k$ for $k \ge 1$.

\begin{lem} \label{lem:theory:over:cont:2}
Let $\delta_{k}$ be the radius of a ball centered at $x(\tau_k)$ intersecting only $Inv_{k-1}$ and $Inv_k$, where $\tau_k$ is the $k$-th discrete transition time and $l_k$ is the location after the $k$-th discrete transition. Then for any $x(\tau_k)$ satisfying a deterministic and transversal discrete transition condition, there exists a $\delta_0$ such that
\begin{equation}
       \D_{\tau_k}(\B_{\delta_0}(x_0)) \subseteq \B_{\delta_{k}}(x(\tau_k)) ,
\end{equation}
where $\D_{\tau_k}(\B_{\delta_0}(x_0))$ is the reached states of a given DTLHA $\A$ from $\B_{\delta_0}(x_0)$ at time $\tau_k$.
\end{lem}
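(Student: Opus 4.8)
The plan is to argue by induction on the transition index $k$, taking as base case $k=1$ exactly Lemma \ref{lem:theory:over:cont:1}. For the induction it is convenient to prove the slightly strengthened statement that for \emph{every} $\sigma > 0$ there is a $\delta_0 > 0$ with $\D_{\tau_{k-1}}(\B_{\delta_0}(x_0)) \subseteq \B_{\sigma}(x(\tau_{k-1}))$; this strengthening is harmless because the base case already delivers arbitrary smallness (Lemma \ref{lem:theory:over:cont:1} produces $\delta_0 = \delta/C$ for any target radius $\delta$). Since $\A$ is a DTLHA, every transition at $\tau_1,\ldots,\tau_{k}$ is deterministic and transversal by Definition \ref{pre:def:dtlha}, so both the inductive hypothesis and the transversality machinery are available at each intermediate transition, not merely at $\tau_k$.

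The core of the inductive step is to propagate the ball $\B_{\sigma}(x(\tau_{k-1}))$ forward over $[\tau_{k-1},\tau_k]$, throughout which the nominal trajectory evolves under the single location $l_{k-1}$. Within one location the continuity estimate (\ref{eq:cont:1}) applies verbatim and gives a Lipschitz bound $\Vert x(t;z) - x(t;x_0) \Vert \le C_{k-1}\Vert z - x_0 \Vert$ for $t \in [\tau_{k-1},\tau_k]$, so two trajectories that agree to within $\sigma$ at $\tau_{k-1}$ remain within $C_{k-1}\sigma$ of each other up to $\tau_k$. Hence, modulo the endpoint corrections discussed below, the $\tau_k$-reach map $z \mapsto z(\tau_k)$ sends $\B_{\sigma}(x(\tau_{k-1}))$ into $\B_{C'\sigma}(x(\tau_k))$ for a composite constant $C'$; choosing $\sigma = \delta_k/C'$ and then invoking the inductive hypothesis to realize this $\sigma$ at $\tau_{k-1}$ yields the desired $\delta_0$ with $\D_{\tau_k}(\B_{\delta_0}(x_0)) \subseteq \B_{\delta_k}(x(\tau_k))$.

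The delicate part — and the place where transversality is indispensable — is the bookkeeping at the two endpoints, where a perturbed trajectory and the nominal one need not lie in the same location at the fixed evaluation times $\tau_{k-1}$ and $\tau_k$. First, applying Lemma \ref{lem:theory:over} at the $(k{-}1)$-st transition (with $(Inv_0,Inv_1)$ replaced by $(Inv_{k-2},Inv_{k-1})$) shows that for $\sigma$ small enough the ball $\B_{\sigma}(x(\tau_{k-1}))$ is confined to $Inv_{k-2}\cup Inv_{k-1}$ and trajectories started in $\mathcal{J}_{k-2,k-1}$ flow into $(Inv_{k-1})^{\circ}$ within time $\Delta$; this guarantees that no perturbed trajectory jumps to a wrong cell, so the whole bundle follows the same itinerary $l_0,\ldots,l_{k-1}$ as the nominal one (the determinism part of the hypothesis). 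Second, because the normal crossing speed at each transition is bounded below by a positive constant by (\ref{eq:pre:trans}), the crossing time is a Lipschitz function of the initial state, so a perturbation of size $\sigma$ shifts a crossing time by $O(\sigma)$; since every vector field is bounded by $\bar{v}$ (see (\ref{eq:h:ineq})), the residual displacement from a trajectory spending a slightly different amount of time in $l_{k-2}$ versus $l_{k-1}$ at the start, or in $l_{k-1}$ versus $l_k$ at $\tau_k$, is at most $\bar{v}\cdot O(\sigma)$.

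I expect this timing-and-itinerary bookkeeping to be the main obstacle. Without transversality the crossing time need not depend continuously on the initial state, and an arbitrarily small perturbation could alter the discrete itinerary, destroying the composition of the per-location Lipschitz estimates. With transversality these effects contribute only an additive $O(\sigma)$ term, so the composite constant $C'$ is a finite product of the location constants $C_{k-1}$ together with $\bar{v}$ and the transversality margins; absorbing $C'$ and iterating the estimate from $j=k$ down to $j=1$ produces the required $\delta_0$, completing the induction.
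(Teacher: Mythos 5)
Your proof is correct and takes essentially the same route as the paper's: both arguments compose the single-location continuity estimate of Lemma \ref{lem:theory:cont:1} across the chain of transitions and anchor the first step with Lemma \ref{lem:theory:over:cont:1} --- the paper recurses backward from $\delta_k$ to $\delta_0$ while you induct forward on $k$, which is the same argument in a different order. Your explicit bookkeeping of the $O(\sigma)$ shift in crossing times via the transversality lower bound and the velocity bound $\bar{v}$ is in fact more careful than the paper's proof, which applies the continuity lemma across a transition without addressing that perturbed trajectories switch at slightly different times.
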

\begin{proof}
From the continuity property shown in Lemma \ref{lem:theory:cont:1}, there is a $\delta_{k-1} > 0$ such that $\D_{[0, \tau_k - \tau_{k-1}]}(\B_{\delta_{k-1}}(x(\tau_{k-1})))$ $\subseteq$ $\D_{[0, \tau_k - \tau_{k-1}]}(x(\tau_{k-1}), \delta_{k})$ for a given $\delta_{k}$ where $\D_{[0, \tau_k - \tau_{k-1}]}(x(\tau_{k-1}), \delta_{k})$ denotes a $\delta_k$-approximation of $\D_{[0, \tau_k - \tau_{k-1}]}(x(\tau_{k-1}))$.
Then for this $\delta_{k-1}$, it is clear that $\D_{\tau_k}(\B_{\delta_{k-1}}(x(\tau_{k-1})))$ $\subseteq$ $\B_{\delta_{k}}(x(\tau_k))$.
Using the same argument, we can find $\delta_{k-2}, \delta_{k-3}, \cdots, \delta_1$.
Then from Lemma \ref{lem:theory:over:cont:1}, we know that there exists a $\delta_0 > 0$ such that $\D_{\tau_1}(\B_{\delta_0}(x_0))$ $\subseteq$ $\B_{\delta_1}(x(\tau_1))$.
Since $\D_{\tau_2 - \tau_1}(\B_{\delta_1}(x(\tau_1)))$ $\subseteq$ $\B_{\delta_2}(x(\tau_2))$, we have $\D_{\tau_2}(\B_{\delta_0}(x_0))$ $\subseteq$ $\B_{\delta_2}(x(\tau_2))$.
This relation holds for each $\tau_i$ where $i = 1, 2, \cdots, k$.
Therefore, $\D_{\tau_k}(\B_{\delta_0}(x_0))$ $\subseteq$ $\B_{\delta_k}(x(\tau_k))$. 
\end{proof}

We now present our main result for the bounded $\epsilon$-reachability of a DTLHA.

\begin{thm} \label{thm:theory:thm}
Given $\epsilon > 0$, a time bound $T > 0$, a discrete transition bound $N \in \mathbb{N}$, and a DTLHA $\A$ starting from an initial condition $(l_0, x_0) \in \mathbb{L} \times \mathbb{R}^n$, there exist $\delta > 0$, $\gamma > 0$, and a sampling period $h > 0$ satisfying $h < \gamma/\bar{v}$ such that 
\begin{equation} \label{eq:theory:thm}
	\R_{t_f}(x_0) \subseteq \R_{t_f}(\B_{\delta}(x_0), \gamma) \subseteq \R_{t_f}(x_0, \epsilon),
\end{equation}
where $t_f := \min \{\tau_N, T\}$ and $\tau_N$ is the time at the N-th discrete transition. 
\end{thm}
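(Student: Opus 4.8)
The plan is to prove the two inclusions by induction on the number of discrete transitions occurring in $[0, t_f]$, assembling the single-location over-approximation of Lemma \ref{lem:theory:cont:2} with the transition-handling of Lemmas \ref{lem:theory:over}--\ref{lem:theory:over:cont:2}. Because $\A$ is a DTLHA and every transition up to $t_f$ is transversal, Zeno behavior is excluded, so there are only finitely many transition times $0 = \tau_0 < \tau_1 < \cdots < \tau_K \le t_f$ with $K \le N$, and on each open interval $(\tau_{k-1}, \tau_k)$ the motion is a single LTI flow in one location $l_k$. This finiteness is the ingredient that lets me select one admissible triple $(\delta, \gamma, h)$.

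First I would dispose of the base case. If no transition occurs in $[0,T]$, i.e.\ $\tau_1 \ge T$ so that $t_f = T$, the whole execution stays in $l_0$ and the claim is exactly Lemma \ref{lem:theory:cont:2}, which yields $\delta, \gamma$ with $\R_{t_f}(x_0) \subseteq \R_{t_f}(\B_{\delta}(x_0), \gamma) \subseteq \R_{t_f}(x_0, \epsilon)$; Lemma \ref{lem:theory:trans:noexit} then certifies that the over-approximation does not spuriously leave $(Inv_0)^{\circ}$. More generally, Lemma \ref{lem:theory:cont:2} already proves the squeeze up to the first transition time $\tau_1$, which serves as the base case of the induction. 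In either case any $h < \gamma/\bar{v}$ meets the sampling requirement inherited from (\ref{eq:h:ineq}).

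For the inductive step I would first pin down the local geometry at each transition. At $\tau_k$ the nominal trajectory crosses deterministically and transversally from $l_{k-1}$ into $l_k$, so Lemma \ref{lem:theory:over} furnishes a radius $\delta_k > 0$ with $\B_{2\delta_k}(x(\tau_k)) \subset Inv_{k-1} \cup Inv_k$ and a dwell time $\Delta_k > 0$ ensuring that the entire junction set flows into $(Inv_k)^{\circ}$. I would then use Lemma \ref{lem:theory:over:cont:2} (which rests on the continuity Lemmas \ref{lem:theory:cont:1} and \ref{lem:theory:over:cont:1}) to push these smallness demands back to a single initial radius: for each $\delta_k$ there is an initial radius achieving $\D_{\tau_k}(\B_{\delta_0}(x_0)) \subseteq \B_{\delta_k}(x(\tau_k))$, and taking $\delta_0$ to be the minimum of these finitely many radii makes the inclusion hold simultaneously for all $k$. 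On each segment I then invoke the single-location machinery (Lemmas \ref{lem:theory:l0}--\ref{lem:theory:cont:2}) in location $l_k$ with initial set $\B_{\delta_{k-1}}(x(\tau_{k-1}))$: for $\gamma$ small and $h < \gamma/\bar{v}$ the resulting $\gamma$-approximation remains inside the $\epsilon$-reach set of the nominal segment, and because the crossing is clean and detectable through Lemma \ref{lem:theory:trans:exit} no spurious or mistimed transition is introduced. Setting $\delta = \delta_0$, the union of the segment over-approximations is $\R_{t_f}(\B_{\delta}(x_0), \gamma)$; the left inclusion $\R_{t_f}(x_0) \subseteq \R_{t_f}(\B_{\delta}(x_0), \gamma)$ is immediate because $x_0 \in \B_{\delta}(x_0)$ and $\R(\cdot, \gamma)$ over-approximates, while the right inclusion follows segment-by-segment and then by taking the union over $k = 1, \ldots, K$.

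The step I expect to be the main obstacle is parameter consistency. Each transition imposes its own upper bound on the admissible reach-ball radius, and these bounds propagate backward through the per-location continuity constants $C$ of Lemma \ref{lem:theory:cont:1}, successively tightening the allowable $\delta_0$, $\gamma$, and hence $h$. The resolution is exactly the finiteness secured by transversality: there are at most $N$ transitions across finitely many locations, so taking the minimum of the finitely many required radii and approximation parameters, together with the maximum of the finitely many constants $C$, produces one triple $(\delta, \gamma, h)$ valid on all of $[0, t_f]$. A secondary subtlety is to confirm that the over-approximation never triggers a transition earlier, or into a different location, than the nominal execution does; this is supplied by the clean-crossing conclusion of Lemma \ref{lem:theory:over} together with the exit-detection guarantee of Lemma \ref{lem:theory:trans:exit}.
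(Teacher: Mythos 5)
Your proposal is correct and follows essentially the same route as the paper: decompose $[0,t_f]$ at the finitely many transversal transition times, use Lemmas \ref{lem:theory:over}--\ref{lem:theory:over:cont:2} to propagate the per-transition radii $\delta_k$ back to a single initial radius, use Lemma \ref{lem:theory:cont:2} with $\gamma < \epsilon/4$ to squeeze each segment's $\gamma$-approximation inside the nominal $\epsilon$-reach set, and use Lemmas \ref{lem:theory:trans:exit} and \ref{lem:theory:over} with $h < \gamma/\bar{v}$ (and $h < \Delta_k$) to guarantee correct transition detection. The paper presents this as a direct construction (taking $\hat{\delta} = \min\{\delta',\delta''\}$) rather than an explicit induction, but the content and the key lemmas invoked are the same.
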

\begin{proof}
Let $C_i := \max_{0 \le t \le t_f} \{ e^{\Vert A_i \Vert t}\}$ for a location $l_i \in \mathbb{L}$ and $C := \max_{l_i \in \mathbb{L}} \{ C_i \}$.
For a given $\epsilon > 0$, suppose $\delta_k < \epsilon/(4 C)$ at each $\tau_k$ up to $t_f$ where $\delta_k$ is as defined in Lemma \ref{lem:theory:over:cont:2}.
Then, from Lemmas \ref{lem:theory:over}, \ref{lem:theory:over:cont:1}, and \ref{lem:theory:over:cont:2}, we know that there exist a $\delta' > 0$ such that $\D_{\tau_k}(\B_{\delta'}(x_0)) \subseteq \B_{\delta_{k}}(x(\tau_k))$ where $x(t)$ is the execution of a DTLHA $\A$ starting from $x_0$ at time zero.
Furthermore, from Lemmas \ref{lem:theory:trans:exit} and \ref{lem:theory:over}, there also exists $h > 0$ and $\delta'' > 0$ such that 
\begin{inparaenum}[(i)]
	\item $h < \Delta_k$ and
	\item $h$ and $\delta''$ satisfy Lemma \ref{lem:theory:trans:exit}
\end{inparaenum}
at every $\tau_k$ up to $t_f$, where $\Delta_k$ is the $\Delta$ that is defined in Lemma \ref{lem:theory:over} for the $k$-th deterministic and transversal discrete transition.

Let $\hat{\delta} := \min \{ \delta', \delta'' \}$.
Then, with $\hat{\delta}$ and $h$, we can determine every discrete transition event and also construct an over-approximation of the discrete transition state as long as it is deterministic and transversal.
Since $\hat{\delta} \le \delta'$, $\D_{\tau_k}(\B_{\hat{\delta}}(x_0)) \subseteq \B_{\delta_{k}}(x(\tau_k))$ at each $\tau_k$ up to $t_f$.
Thus, for any $\gamma > 0$, 
\[ 
	\D_{[0, \tau_k^{k+1}]}(\D_{\tau_k}(\B_{\hat{\delta}}(x_0)), \gamma) \subseteq \D_{[0, \tau_k^{k+1}]}(\B_{\delta_k}(x_{\tau_k}), \gamma)
\]
where $ \tau_k^{k+1} := \tau_{k+1} - \tau_k$.

Now, we notice that if $\gamma < \epsilon/4$, then from Lemma \ref{lem:theory:cont:2},
\[
	\D_{[0, \tau_k^{k+1}]}(\D_{\tau_k}(\B_{\hat{\delta}}(x_0)), \gamma) \subseteq \D_{[0, \tau_k^{k+1}]}(x(\tau_k), \epsilon),
\]
for each $\tau_k$ up to $t_f$, where the left hand side is a segment of $\R_{t_f}(\B_{\delta}(x_0), \gamma)$ for $[\tau_k, \tau_{k+1}]$, and the right hand side is a segment of $\R_{t_f}(x_0, \epsilon)$ for $[\tau_k, \tau_{k+1}]$ that is defined as $\bigcup_{n = 0}^{N_k -1} \B_{\epsilon}(x(\tau_k + n h))$ where $N_k := \lceil (\tau_{k+1} - \tau_k)/h\rceil$.

Furthermore, if $h < \gamma/\bar{v}$, then from (\ref{eq:h:ineq}) replaced with $\epsilon$ by $\gamma$, it is clear that 
\[
	\D_{[0, \tau_k^{k+1}]}(\D_{\tau_k}(x_0)) \subseteq \D_{[0, \tau_k^{k+1}]}(\D_{\tau_k}(\B_{\hat{\delta}}(x_0)), \gamma),
\]
where the left hand side is a segment of $\R_{t_f}(x_0)$ for $[\tau_k, \tau_{k+1}]$.
Therefore, the result holds.
\end{proof}

\section{Computing a Bounded $\epsilon$-Reach Set of a DTLHA}  \label{sec:cond}

From Theorem \ref{thm:theory:thm}, we know that a set $\R_{t_f}(\B_{\delta}(x_0), \gamma)$, a bounded $\epsilon$-reach set of a DTLHA, can be computed for some $\delta, \gamma$, and $h$. 
In this section, we discuss how to compute $\R_{t_f}(\B_{\delta}(x_0), \gamma)$.
More precisely, we derive a set of conditions, based on the results in Section \ref{sec:theory}, that are needed to correctly detect a deterministic and transversal discrete state transition event and also to determine whether the values for the parameters $\delta, \gamma$, and $h$ are appropriate so as to ensure that $\R_{t_f}(\B_{\delta}(x_0), \gamma)$ is a correct bounded $\epsilon$-reach set.
Furthermore, later in this section, we extend these conditions to incorporate the numerical calculation errors caused by the finite precision numerical calculations capabilities.

\subsection{Conditions for Bounded $\epsilon$-Reach Set Computation}  \label{sec:cond:exact}

We first note some properties that a set $\R_{t_f}(\B_{\delta}(x_0), \gamma)$ needs to satisfy so that it can be considered as a bounded $\epsilon$-reach set of a DTLHA.

\begin{rem} \label{rem:cond:exact}
Notice that any $\R_{t_f}(\B_{\delta}(x_0), \gamma)$ that can be determined by $\delta, \gamma$, and $h$ in Theorem \ref{thm:theory:thm} for a given $\epsilon > 0$ needs to satisfy the following properties.
\begin{enumerate}[(i)]
	\item $d_H(\R_{t_f}(\B_{\delta}(x_0), \gamma), \R_{t_f}(x_0)) \le \epsilon$, 
	\item  $\R_{t_f}(\B_{\delta}(x_0))) \subset \R_{t_f}(\B_{\delta}(x_0), \gamma)$, and
	\item $d_H(\R_{t_f}(\B_{\delta}(x_0), \gamma), \R_{t_f}(\B_{\delta}(x_0))) \le \gamma$.	
\end{enumerate}
\end{rem}

For given $\delta$ and $h$, the following lemma shows how we can detect a discrete state transition event if there is one.

\begin{lem} \label{lem:cond:exact:trans}
Given a location $l_c$ and a DTLHA $\A$, 
if $\D_{t-h}(\B_{\delta}(x_0)) \subset (Inv_c)^{\circ}$ and $\D_t(\B_{\delta}(x_0)) \subset Inv_c^C$ for some $\delta > 0$ and $h > 0$, where $\B_{\delta}(x_0)$ is a $\delta$-neighborhood of the initial state $x_0$, then there is a discrete transition from the location $l_c$ to some other locations at some time in $(t-h, t)$.
\end{lem}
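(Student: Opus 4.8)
The plan is to pass from the set-valued hypotheses to a single execution and then read off the transition from the structural definition of an execution (Definitions \ref{def:traj}--\ref{def:trans}). Since $x_0 \in \B_{\delta}(x_0)$, the execution $x(\cdot)$ of $\A$ started at $x_0$ satisfies $x(t-h) \in \D_{t-h}(\B_{\delta}(x_0)) \subseteq (Inv_c)^{\circ}$ and $x(t) \in \D_t(\B_{\delta}(x_0)) \subseteq (Inv_c)^{C}$, so it suffices to exhibit a discrete transition out of $l_c$ for this one trajectory in $(t-h,t)$; the full-ball hypotheses serve only to secure these two endpoint memberships.

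First I would determine the location active at time $t-h$. By Definition \ref{def:traj}(b), an execution that is in location $l$ on an inter-transition interval keeps its continuous state inside $Inv_l$. If the active location at $t-h$ were some $l \ne c$, then $x(t-h) \in Inv_l$; but $x(t-h) \in (Inv_c)^{\circ}$ is open, so a neighborhood of $x(t-h)$ lies in $(Inv_c)^{\circ}$, and since $Inv_l$ is a union of full-dimensional cells this neighborhood would meet $(Inv_l)^{\circ}$, contradicting the disjoint-interior property $Inv(l)^{\circ}\cap Inv(l')^{\circ}=\emptyset$. Hence the active location at $t-h$ is exactly $l_c$, and moreover $t-h$ is strictly interior to the current inter-transition interval, because transition states of $l_c$ lie on $\partial Inv_c$ whereas $x(t-h)\in(Inv_c)^{\circ}$.

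Next I would locate the transition. Suppose, toward a contradiction, that the execution underwent no discrete transition on $(t-h,t]$; then it would remain in $l_c$ throughout $[t-h,t]$, giving $x(t)\in Inv_c$ and contradicting $x(t)\in (Inv_c)^{C}$. Hence a first transition time $\tau_k\in(t-h,t]$ exists, the execution is still in $l_c$ on $(t-h,\tau_k)$, and the transition at $\tau_k$ is out of $l_c$ to some $l_j\ne l_c$. Finally, $\tau_k=t$ is impossible, since Definition \ref{def:trans} forces $x(\tau_k)\in Inv_c\cap Inv_j\subseteq Inv_c$, again contradicting $x(t)\in(Inv_c)^{C}$. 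Therefore $\tau_k\in(t-h,t)$, which is exactly the assertion.

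The step I expect to be the most delicate is the endpoint bookkeeping built on the open/closed structure: using that $(Inv_c)^{\circ}$ is open (together with the disjoint-interior and covering properties) to force the active location at $t-h$ to be $l_c$, and using that $Inv_c$ is closed, so that $\partial Inv_c\subseteq Inv_c$ and every transition state of $l_c$ satisfies $x(\tau_k)\in Inv_c$, to rule out $\tau_k=t$ and thereby keep the transition time strictly inside the \emph{open} interval $(t-h,t)$. A secondary point worth stating explicitly is that the very existence of $\tau_k$ rests on Definition \ref{def:traj}(b): within a single location the continuous state cannot leave that location's invariant, so a change from $(Inv_c)^{\circ}$ at $t-h$ to $(Inv_c)^{C}$ at $t$ can only be effected by at least one discrete transition.
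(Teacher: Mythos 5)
Your proposal is correct and takes essentially the same route as the paper: both reduce the set-valued hypotheses to the single execution from $x_0$ (since $x_0 \in \B_{\delta}(x_0)$), note that $x(t-h) \in (Inv_c)^{\circ}$ while $x(t) \in Inv_c^{C}$, and conclude a transition time lies in $(t-h,t)$. You merely make explicit two points the paper leaves implicit --- that the active location at $t-h$ must be $l_c$, and that closedness of $Inv_c$ rules out the transition occurring exactly at $t$ --- which is a welcome tightening rather than a different argument.
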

\begin{proof}
Recall that $\D_t(x_0)$ denotes the reached state of $\A$ at time $t$ from $x_0$.
Then it is clear that $\D_t(x_0) \in \D_t(\B_{\delta}(x_0))$. 
Similarly,  $\D_{t-h}(x_0) \in \D_{t-h}(\B_{\delta}(x_0))$. 
Hence, from the hypothesis, $\D_t(x_0) \in Inv_c^C$ and $\D_{t-h}(x_0) \in (Inv_c)^{\circ}$.
This implies that there exists $\tau \in (t-h ,t)$ such that $\D_{s}(x_0) \in Inv_c^{\circ}$ for $s \in [t-h, \tau)$ and  $\D_{s}(x_0) \in Inv_c^C$ for $s \in (\tau, t]$.
Therefore, there is a discrete transition at some time $\tau \in (t-h, t)$.
\end{proof}

Once a discrete state transition is detected, then, by Lemma \ref{lem:cond:exact:det}, we can check if it is deterministic or not.

\begin{lem} \label{lem:cond:exact:det}
Given an initial state $x_0$ and a DTLHA $\A$,
suppose that there is a discrete transition from a location $l_c$ to some other locations at time $t$, i.e., $\D_{t-h}(\B_{\delta}(x_0)) \subset (Inv_c)^{\circ}$ and $\D_t(\B_{\delta}(x_0)) \subset Inv_c^C$ for some $\delta > 0$ and $h > 0$. 
Then the discrete transition is deterministic if there exists a location $l_n$ such that $l_n \ne l_c$ and $\D_t(\B_{\delta}(x_0)) \subset (Inv_n)^{\circ}$.
\end{lem}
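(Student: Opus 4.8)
The plan is to argue by contradiction, leveraging the hypothesis that the over-approximated reach set at time $t$ lands inside a single location's interior, together with the disjointness of invariant interiors guaranteed by condition (b)(ii) of Definition \ref{def:lha}. First I would invoke Lemma \ref{lem:cond:exact:trans}: the two containments $\D_{t-h}(\B_{\delta}(x_0)) \subset (Inv_c)^{\circ}$ and $\D_t(\B_{\delta}(x_0)) \subset Inv_c^C$ certify that a genuine discrete transition out of $l_c$ occurs at some time $\tau \in (t-h, t)$. By Definition \ref{def:transversal}, proving that this transition is \emph{deterministic} reduces to proving that the transition state $x(\tau)$ admits a unique target location. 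Since $x_0 \in \B_{\delta}(x_0)$, we have $\D_t(x_0) \in \D_t(\B_{\delta}(x_0)) \subset (Inv_n)^{\circ}$, so the actual trajectory at time $t$ already lies strictly inside $Inv_n$; thus $l_n$ is one valid target, and it remains only to exclude every other candidate.

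Next I would suppose, for contradiction, that the transition is not deterministic, so that the transition state $x(\tau)$ could also enter some location $l_{n'}$ with $l_{n'} \neq l_n$ and $l_{n'} \neq l_c$. Because every transition of a DTLHA is transversal, a trajectory crossing $\partial Inv_c$ into $l_{n'}$ penetrates $(Inv_{n'})^{\circ}$ immediately after $\tau$; and for the small sampling period $h$ in force under the standing construction (so that $t - \tau < h$ is small relative to the dwell time $\Delta$ of Lemma \ref{lem:theory:over}), the corresponding reached state at time $t$ still lies in $(Inv_{n'})^{\circ}$. Crucially, this reached state belongs to the over-approximation $\D_t(\B_{\delta}(x_0))$, since the latter contains every state reachable at time $t$ from the $\delta$-ball. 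But the hypothesis forces $\D_t(\B_{\delta}(x_0)) \subset (Inv_n)^{\circ}$, so this state would lie simultaneously in $(Inv_n)^{\circ}$ and $(Inv_{n'})^{\circ}$, contradicting $(Inv_n)^{\circ} \cap (Inv_{n'})^{\circ} = \emptyset$ from condition (b)(ii) of Definition \ref{def:lha}. Hence $l_n$ is the unique target, and the transition is deterministic.

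I expect the delicate step to be making precise the claim that a non-deterministic transition actually produces a reached state inside $(Inv_{n'})^{\circ}$ that is captured by the over-approximation $\D_t(\B_{\delta}(x_0))$. This is exactly where the positive radius $\delta$ of the ball and the transversality of the transition do the real work: transversality ensures that a crossing trajectory enters the \emph{interior} of $l_{n'}$ rather than merely grazing its boundary, while the finite $\delta$ guarantees that an alternative target meeting $\partial Inv_c$ at a lower-dimensional junction is genuinely ``seen'' by the propagated ball instead of collapsing onto the single trajectory from $x_0$. The small $h$ (per Theorem \ref{thm:theory:thm}) rules out a second transition within $(t-h,t)$, so that the state at time $t$ reflects the immediate post-transition location. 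Once these facts are granted, the disjoint-interior contradiction is immediate.
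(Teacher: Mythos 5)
The paper's own proof of this lemma is a single sentence: it appeals directly to Definition \ref{def:transversal}. The intended reading is that, by Definition \ref{def:lha}(b)(ii), the interiors of distinct invariant sets are pairwise disjoint, so the containment $\D_t(\B_{\delta}(x_0)) \subset (Inv_n)^{\circ}$ can hold for at most one location $l_n$; that location is therefore the unique target of the transition, which is exactly what ``deterministic'' means. Your first paragraph already contains this observation, and had you stopped there you would have reproduced the paper's argument.

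The contradiction argument in your second paragraph introduces steps that the lemma's hypotheses do not support. First, you invoke transversality of the \emph{counterfactual} branch into $l_{n'}$ to conclude that a crossing trajectory ``penetrates $(Inv_{n'})^{\circ}$ immediately after $\tau$.'' But Definition \ref{pre:def:dtlha} asserts determinism and transversality only for the transitions that actually occur in the execution from $x_0$; you cannot deny determinism for the sake of contradiction while simultaneously granting transversality to the alternative branch. If that branch is not transversal, the trajectory may merely graze $\partial Inv_{n'}$ without entering its interior, no state of $\D_t(\B_{\delta}(x_0))$ need land in $(Inv_{n'})^{\circ}$, and your contradiction evaporates. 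Second, you require $t-\tau$ to be small relative to the dwell time $\Delta$ of Lemma \ref{lem:theory:over}; the lemma as stated assumes only ``for some $\delta>0$ and $h>0$,'' so this is an imported hypothesis. Third, and most basically, determinism in Definition \ref{def:transversal} is a property of the transition state $x(\tau')$ itself --- whether more than one location admits a transition from it --- whereas your argument constrains where reached states sit at the strictly later time $t$; excluding occupancy of $(Inv_{n'})^{\circ}$ at time $t$ does not exclude $x(\tau')$ from lying on a lower-dimensional junction shared with $Inv_{n'}$. The clean route is the paper's: the hypothesis exhibits a location whose interior contains the post-transition reach set, disjointness of interiors makes it unique, and that is the definition.
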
%
\begin{proof}
This follows from the definition of a deterministic discrete transition in Definition \ref{def:transversal}.
\end{proof}

We now present conditions to determine the transversality of a discrete state transition; this is more complicated than those in previous two lemmas.
The main idea of the conditions in the following Lemma \ref{lem:cond:exact:transv} is that 
\begin{inparaenum}[(i)]
	\item $\delta$ and $\gamma$ have to be small enough so that every state in an over-approximation of a deterministic and transversal discrete transition state, which can be computed by $\delta$ and $\gamma$, is  also deterministic and transversal, and also
	\item the sampling period $h$ should be small enough so that any reached states right after a discrete state transition can be captured correctly. 
\end{inparaenum}

\begin{lem} \label{lem:cond:exact:transv}
Given $\gamma > 0$ and $h > 0$ satisfying $h < \gamma/\bar{v}$,
suppose that there is a deterministic discrete transition from a location $l_c$ to another location $l_n$ at time $t$, i.e., $\D_{t-h}(\B_{\delta}(x_0)) \subset (Inv_c)^{\circ}$ and $\D_t(\B_{\delta}(x_0)) \subset (Inv_n)^{\circ}$ for some $\delta > 0$ and $h > 0$. 
Then for any $\epsilon > 0$, the discrete transition is transversal if the following conditions hold:
\begin{itemize}
\item[(i)] $h < (dia(\mathcal{J}_{c,n})/2)/(2 \bar{v})$, 
\item[(ii)] $\D_0(\mathcal{J}_{c,n}, dia(\mathcal{J}_{c,n})/2) \subset (Inv_c \cup Inv_n)$, and
\item[(iii)] $\langle \dot{x}_c, \vec{n}_c \rangle \ge \epsilon \land \langle \dot{x}_n, \vec{n}_c \rangle \ge \epsilon, ~~\forall x \in \V(\mathcal{J}_{c,n}')$,
\end{itemize}
where $\mathcal{J}_{c,n} := \D_t(\B_{\delta}(x_0), \gamma) \cap Inv_c \cap Inv_n$, $\mathcal{J}_{c,n}' := \D_0(\mathcal{J}_{c,n},$ $dia(\mathcal{J}_{c,n})/2) \cap Inv_c \cap Inv_n$, $\bar{v}$ is as defined in (\ref{eq:h:ineq}), $\V(\P)$ is a set of vertices of a polyhedron $\P$, $\vec{n}_c$ is an outward normal vector of $\partial Inv_c$, and $\dot{x}_i$ is the vector flow evaluated with respect to the LTI dynamics of location $l_i \in \mathbb{L}$.
\end{lem}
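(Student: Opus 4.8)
The plan is to reduce the unverifiable transversality condition at the true, unknown switching point $x(\tau_1)$ to a finite family of inequalities at the vertices of a computed polytope, exploiting the fact that the two transversality inner products are \emph{affine} in the state. First I would fix the geometric setting: by Lemma \ref{lem:cond:exact:trans} the hypotheses $\D_{t-h}(\B_{\delta}(x_0)) \subset (Inv_c)^{\circ}$ and $\D_t(\B_{\delta}(x_0)) \subset (Inv_n)^{\circ}$ force a switch at some $\tau_1 \in (t-h,t)$, so the true switching state satisfies $x(\tau_1) \in \partial Inv_c \cap \partial Inv_n \subseteq Inv_c \cap Inv_n$. Since the continuous speed is bounded by $\bar v$ from (\ref{eq:h:ineq}), we have $\Vert x(t) - x(\tau_1)\Vert \le \bar v\,h$, and $x(t) \in \D_t(\B_{\delta}(x_0)) \subseteq \D_t(\B_{\delta}(x_0),\gamma)$. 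Using the standing bound $h < \gamma/\bar v$, the point $x(\tau_1)$ lies within $\bar v h < \gamma$ of the over-approximating set, so it is captured by $\mathcal{J}_{c,n} = \D_t(\B_{\delta}(x_0),\gamma) \cap Inv_c \cap Inv_n$, and condition (i), $h < (dia(\mathcal{J}_{c,n})/2)/(2\bar v)$, gives the stronger $\Vert x(t) - x(\tau_1)\Vert < dia(\mathcal{J}_{c,n})/4$, so that $x(\tau_1)$ sits comfortably inside the enlargement $\mathcal{J}_{c,n}' = \D_0(\mathcal{J}_{c,n}, dia(\mathcal{J}_{c,n})/2) \cap Inv_c \cap Inv_n$ with room to spare; indeed $x(\tau_1) \in \mathcal{J}_{c,n} \subseteq \mathcal{J}_{c,n}'$.

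The crux is the affine/vertex argument. For either location $l_i \in \{l_c, l_n\}$, the map $x \mapsto \langle A_i x + u_i, \vec n_c\rangle = \langle x, A_i^{\top}\vec n_c\rangle + \langle u_i, \vec n_c\rangle$ is affine in $x$, and an affine function attains its minimum over a (bounded) polytope at a vertex. Since $\X$ is compact and $\mathcal{J}_{c,n}'$ is an intersection of polyhedra, it is a polytope, so condition (iii), which asserts $\langle \dot x_c, \vec n_c\rangle \ge \epsilon$ and $\langle \dot x_n, \vec n_c\rangle \ge \epsilon$ at every $x \in \V(\mathcal{J}_{c,n}')$, immediately upgrades to the same two inequalities at \emph{every} point of $\mathcal{J}_{c,n}'$, in particular at $x(\tau_1)$. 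This is exactly the transversality condition (\ref{eq:pre:trans}), so the discrete transition is transversal.

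For this upgrade to be legitimate I must know that $\vec n_c$ is the constant outward normal of the single facet of $\partial Inv_c$ carrying $x(\tau_1)$ and is the relevant normal throughout $\mathcal{J}_{c,n}'$. This is precisely the role of condition (ii): $\D_0(\mathcal{J}_{c,n}, dia(\mathcal{J}_{c,n})/2) \subset Inv_c \cup Inv_n$ guarantees the enlarged switching region meets no third cell, so the switch is unambiguously the deterministic $l_c \to l_n$ one and a single $\vec n_c$ governs the whole polytope. Combining the three conditions: (i) places the true switching state (and the entire computed over-approximation of it) inside the vertex-checked polytope $\mathcal{J}_{c,n}'$, (ii) certifies a unique normal direction there, and (iii) plus affineness delivers the transversality inequalities everywhere on $\mathcal{J}_{c,n}'$.

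The step I expect to be the main obstacle is the coverage bookkeeping in the first paragraph: rigorously establishing that $x(\tau_1)$ lands inside the vertex-checked polytope $\mathcal{J}_{c,n}'$, which requires carefully converting the \emph{temporal} uncertainty $h$ into a \emph{spatial} displacement $\bar v h$ and matching it against the geometric margin $dia(\mathcal{J}_{c,n})/2$ via condition (i), while simultaneously invoking condition (ii) to exclude a third cell so that "the" outward normal $\vec n_c$ is unambiguous. Once these two threads are tied together, the affine-on-a-polytope argument closes the proof cleanly.
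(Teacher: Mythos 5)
Your proposal is correct, and its core --- locating the true switching state $x(\tau_1)$ inside $\mathcal{J}_{c,n} \subseteq \mathcal{J}'_{c,n}$ and then propagating the vertex inequalities of (iii) to all of $\mathcal{J}'_{c,n}$, hence to $x(\tau_1)$ --- is the same as the paper's. You do two things differently. First, you make explicit the step the paper dismisses with ``it is easy to see'': the maps $x \mapsto \langle A_i x + u_i, \vec n_c\rangle$ are affine, so their minima over the bounded polytope $\mathcal{J}'_{c,n}$ are attained at vertices; this is the real justification for why (iii) need only be checked on $\V(\mathcal{J}'_{c,n})$, and spelling it out (together with the observation that (ii) is what makes the single normal $\vec n_c$ the relevant one throughout) is a genuine improvement in rigor. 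Second, you give condition (i) only a cosmetic role: as you yourself note, $x(\tau_1) \in \mathcal{J}_{c,n} \subseteq \mathcal{J}'_{c,n}$ already follows from the standing hypothesis $h < \gamma/\bar v$, so (i) is not needed for your argument at all. In the paper, (i) carries real weight: using $\Vert x(h;z) - z\Vert \le \bar v h < dia(\mathcal{J}_{c,n})/4$ together with $dia(\mathcal{J}'_{c,n}) \ge 2\, dia(\mathcal{J}_{c,n})$ and (ii), the paper shows that for every $z \in \mathcal{J}_{c,n}$ the flow under the dynamics of $l_n$ satisfies $x(\tau;z) \in Inv_n^{\circ}$ for $\tau \in (0,h)$, i.e., the entire computed over-approximation of the transition state genuinely crosses into the new invariant within one sampling step. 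That extra conclusion is not required for the bare transversality inequalities of Definition \ref{def:transversal}, which your argument establishes, but it is what the algorithm relies on downstream to continue the reach-set computation from $\mathcal{J}_{c,n}$; in short, your proof establishes the lemma as literally stated, while the paper's proof establishes slightly more.
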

\begin{proof}
Notice that $\D_{t-h}(\B_{\delta}(x_0)) \subset  \D_t(\B_{\delta}(x_0), \gamma)$ since $\gamma$ and $h$ satisfy $h < \gamma/\bar{v}$.
In fact, $\bigcup_{z \in \D_{t-h}(\B_{\delta}(x_0))} x(\tau;z) \subset \D_{t}(\B_{\delta}(x_0), \gamma)$ for $\tau \in [0, h]$ where $x(\tau;z) := e^{A_c \tau} z + \int_{0}^{\tau} e^{A_c s} u_c ds$ under the LTI dynamics of the location $l_c$.
Since $\D_{t-h}(x_0) \in \D_{t-h}(\B_{\delta}(x_0))$ and $\D_t(x_0) \in \D_{t}(\B_{\delta}(x_0))$, $\D_{\tau'}(x_0) \in \mathcal{J}_{c,n}$ for some $\tau' \in (t-h, t)$ where $\D_{\tau'}(x_0)$ is a discrete transition state from $l_c$ to $l_n$ at time $\tau'$.
Thus $\mathcal{J}_{c,n} \ne \emptyset$ (more precisely, $\mathcal{J}_{c,n}^{\circ} \ne \emptyset$) and it is in fact an over-approximation of the deterministic discrete transition state $x_{\tau'} \in Inv_c \cap Inv_n$.

If (ii) and (iii) hold, then it is easy to see that $z'$ satisfies the deterministic and transversal discrete transition condition in Definition \ref{def:transversal} for any $z' \in \mathcal{J}'_{c,n}$.
Now we suppose (i) holds 
and let $x(h;z)$ is the state reached from $z$ at time $h$ under the LTI dynamics of the location $l_n$, then,  for any $z \in \mathcal{J}_{c,n}$, 
\[
	\Vert x(h;z) - z \Vert \le \bar{v}h < dia(\mathcal{J}_{c,n})/2.
\] 

If we now consider the fact that $dia(\mathcal{J}'_{c,n}) \ge 2 \cdot dia(\mathcal{J}_{c,n})$, then it is easy to see that $x(\tau;z) \in Inv_n^{\circ}$ for $\tau \in (0, h)$.
Since $z \in \mathcal{J}_{c,n}$ is arbitrary, we conclude that 
\[
	\D_{\tau}(\mathcal{J}_{c,n}) \in Inv_n^{\circ}
\]
for all $\tau \in (0, h)$.
Thus, the discrete transition state $\D_t(x_0) \in \mathcal{J}_{c,n}$ is transversal and it can be determined through $\mathcal{J}_{c,n}$ with $h$ satisfying (i).
\end{proof}

\subsection{Finite Precision Basic Calculations}  \label{sec:cond:basic}

Notice that the results in Section \ref{sec:cond:exact} are based on the assumption that the following quantities can be computed exactly:
\begin{itemize}
\item $x(t;x_0) = e^{At}x_0 + \int_0^t e^{As} u ds$.
\item $\H \cap \P$, where $\H$ is a hyperplane and $\P$ is a polyhedron.
\item $hull(\mathcal{V})$, where $hull(\mathcal{V})$ is the convex hull of $\mathcal{V}$ that is a finite set of points in $\mathbb{R}^n$.
\end{itemize}
However, these exact computation assumptions cannot be satisfied in practice and we can only compute each of these with possibly arbitrarily small computation error. 
Therefore, instead of assuming exact computation capabilities for $x(t;x_0)$, $\H \cap \P$, and $hull(\mathcal{V})$, we now assume that the following basic calculation capabilities are available for approximately computing these quantities, and it only these that we can use to compute a bounded $\epsilon$-reach set. 
More precisely, we assume that for given $\mu_c > 0$ and $\mu_h > 0$, 
\begin{itemize}
	\item $a(\mathcal{H} \cap \mathcal{P}, \mu_c)$ and $a(hull(\mathcal{V}), \mu_h)$
\end{itemize}
are available such that $d_H(x, a(x,y)) \le y$, where $a(x, y)$ denotes an approximate computation of $x$, with $y > 0$ as an upper bound on the approximation error.
We also assume that for given $\sigma_e > 0$ and $\sigma_i > 0$,
\begin{itemize}
	\item $a(e^{At}, \sigma_e)$, and $a(\int_0^t e^{A \tau} d\tau, \sigma_i)$
\end{itemize}
are available as an approximate computation of $x(t;x_0)$ such that $\Vert x - a(x,y) \Vert \le y$.
Notice that from these basic calculation capabilities for $x(t;x_0)$, we can compute $a(x(t;x_0), \mu_x)$ with an approximation error  denoted as $\mu_x$, which is upper bounded by a finite value as shown below.

We first note that, for all approximate computations $a(x,y)$ that are used for computing $x(t;x_0)$, we have 
\begin{equation}
x - y \cdot {\bf{1}}_{n \times m} \le a(x, y) \le x + y \cdot {\bf{1}}_{n \times m},
\end{equation}
where $x \in \mathbb{R}^{n \times m}$ and ${\bf{1}}_{n \times m}$ is an $n$ by $m$ matrix whose every element is $1$, and the inequalities hold elementwise.
With this, an upper bound of $\mu_x$ can be derived as follows:
\begin{equation} \nonumber
e^{At} -\sigma_e \cdot {\bf{1}}_{n \times n} \le a( e^{At}, \sigma_e) \le e^{At} + \sigma_e \cdot {\bf{1}}_{n \times n}.
\end{equation}
Similarly, 
\begin{eqnarray} \nonumber
\int_0^t e^{As}ds  - \sigma_i \cdot {\bf{1}}_{n \times n} \le a(\int_0^t e^{As}ds, \sigma_i) \\ \nonumber
\le \int_0^t e^{As}ds + \sigma_i \cdot {\bf{1}}_{n \times n}.
\end{eqnarray}
Hence, we have 
\begin{equation} \nonumber
x(t;x_0) - \delta_x \le a(x(t;x_0), \delta_x)  \le x(t;x_0) + \delta_x,
\end{equation}
where $\delta_x := (\sigma_e \vert x_0 \vert + \sigma_i \vert u \vert) \cdot {\bf{1}}_{n \times 1}$.

Now, we know that $\mu_x$ is upper bounded by the maximum of $\vert \delta_x \vert$ over the continuous state space $\X$ and the control input domain $\U$,
\begin{equation} \label{eq:mux}
	\mu_x \le \max_{x \in \X, u \in \U} \vert \delta_x \vert.
\end{equation}

\subsection{Conditions for Computation under Finite Precision Calculations}  \label{sec:cond:inexact}

In this section, we extend the results in Section \ref{sec:cond:exact} to derive a set of conditions for a bounded $\epsilon$-reach set computation of the DTLHA under finite precision numerical calculation capabilities.
The following remark is an immediate extension of Remark \ref{rem:cond:exact} in Section \ref{sec:cond:exact}.

In the sequel, for simplicity of notation, we use $\hat{x}$ to denote $a(x,\rho)$ for a given approximation error bound $\rho > 0$.

\begin{rem} \label{rem:cond:inexact}
Let $\hat{\R}_{t_f}(\B_{\delta}(x_0), \gamma)$ be an approximation of $\R_{t_f}(\B_{\delta}(x_0), \gamma)$ that is determined by $\delta, \gamma$, and $h$ in Theorem \ref{thm:theory:thm} and approximate calculations for $x(t;x_0)$, $\H \cap \P$, and $hull(\mathcal{V})$ defined in Section \ref{sec:cond:basic}.
Then, for a given $\epsilon > 0$, it is sufficient for $\hat{\R}_{t_f}(\B_{\delta}(x_0), \gamma)$ to be a bounded $\epsilon$-reach set of a DTLHA $\A$ if the following properties hold.
\begin{enumerate}[(i)]
	\item $d_H(\hat{\R}_{t_f}(\B_{\delta}(x_0), \gamma), \R_{t_f}(x_0)) \le \epsilon$, 
	\item  $\R_{t_f}(\B_{\delta}(x_0))) \subset \hat{\R}_{t_f}(\B_{\delta}(x_0), \gamma)$, and
	\item $d_H(\hat{\R}_{t_f}(\B_{\delta}(x_0), \gamma), \R_{t_f}(\B_{\delta}(x_0))) \le \gamma$.	
\end{enumerate}
\end{rem}

Next, we discuss how the relation between $h$ and $\gamma$ can be modified so as to satisfy (ii) and (iii) in Remark \ref{rem:cond:inexact} when there is numerical calculation error in computing $x(t;x_0)$.

\begin{lem} \label{lem:cond:inexact:isover}
Given a DTLHA $\A$ and its reached state $x(t)$ at time $t$ starting from an initial condition $x(0)$, 
let $\rho > 0$ be an upper bound on the approximation errors such that $\Vert x(t) - \hat{x}(t) \Vert \le \rho$.
If a given sampling period $h$ satisfies $h < (\gamma - \rho)/\bar{v}$ for a given $\gamma$ satisfying $\gamma > \rho$,  where $\bar{v}$ is as defined in (\ref{eq:h:ineq}), then the following property holds at any location $l_i \in \mathbb{L}$ of $\A$:
\begin{equation}
	x(t+\tau) \subset \B_{\gamma}(\hat{x}(t)), ~~ \forall \tau \in [0, h],
\end{equation}
where $x(t+\tau) = e^{A_i \tau} x(t) + \int_0^{\tau} e^{A_i s} u_i ds$.
\end{lem}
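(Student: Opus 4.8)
The plan is to bound the distance between the true reached state $x(t+\tau)$ and the \emph{approximate} sampled state $\hat{x}(t)$ by splitting it into two pieces via the triangle inequality: the genuine flow displacement $\Vert x(t+\tau) - x(t)\Vert$ incurred over the interval $[0,\tau]$, plus the numerical approximation error $\Vert x(t) - \hat{x}(t)\Vert$. First I would write
\begin{equation*}
	\Vert x(t+\tau) - \hat{x}(t) \Vert \le \Vert x(t+\tau) - x(t) \Vert + \Vert x(t) - \hat{x}(t) \Vert,
\end{equation*}
so that the problem reduces to controlling each term separately and then combining them against the target radius $\gamma$.

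\textbf{Bounding the flow term.} For the first term I would reuse exactly the derivation that led to (\ref{eq:h:ineq:0}) and (\ref{eq:h:ineq:1}): along the LTI flow in location $l_i$, the speed is bounded by $\Vert A_i \Vert \bar{x} + \Vert u_i \Vert \le \bar{v}$, so integrating the velocity bound over $[0,\tau]$ gives $\Vert x(t+\tau) - x(t) \Vert \le \bar{v}\,\tau \le \bar{v}\, h$ for every $\tau \in [0,h]$. This is precisely the mechanism by which the sampling period $h$ translates into a spatial displacement bound, and it applies uniformly over all $l_i \in \mathbb{L}$ by the definition of $\bar{v}$.

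\textbf{Combining the estimates.} The second term is controlled directly by hypothesis: $\Vert x(t) - \hat{x}(t) \Vert \le \rho$. Substituting both bounds gives
\begin{equation*}
	\Vert x(t+\tau) - \hat{x}(t) \Vert \le \bar{v}\, h + \rho.
\end{equation*}
Now I would invoke the standing hypothesis $h < (\gamma - \rho)/\bar{v}$, which is well-posed precisely because $\gamma > \rho$ guarantees the right-hand side is positive. This yields $\bar{v}\, h < \gamma - \rho$, hence $\bar{v}\, h + \rho < \gamma$, so $\Vert x(t+\tau) - \hat{x}(t) \Vert < \gamma$ for all $\tau \in [0,h]$. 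By the definition of the closed ball $\B_{\gamma}(\hat{x}(t))$, this establishes $x(t+\tau) \in \B_{\gamma}(\hat{x}(t))$ for every $\tau \in [0,h]$, as claimed.

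\textbf{Main obstacle.} There is no real analytic difficulty here; the lemma is essentially a bookkeeping refinement of the earlier exact-arithmetic bound $h < \gamma/\bar{v}$, where the budget $\gamma$ is now partitioned between the genuine flow displacement and the numerical error $\rho$. The one point requiring care is ensuring the velocity bound $\bar{v}$ is legitimately applied at the location $l_i$ in which the flow $x(t+\tau) = e^{A_i \tau} x(t) + \int_0^{\tau} e^{A_i s} u_i\, ds$ evolves over the whole interval $[0,h]$ — i.e. that the trajectory does not leave $Inv_i$ within $[0,h]$, so that the dynamics $(A_i, u_i)$ govern the entire segment and the uniform-over-$\mathbb{L}$ definition of $\bar{v}$ supplies a valid speed bound. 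Provided we are working within a single location as the statement implicitly assumes, the argument is immediate.
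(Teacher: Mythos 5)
Your proof is correct and follows essentially the same route as the paper: a triangle-inequality split of $\Vert x(t+\tau) - \hat{x}(t)\Vert$ into the flow displacement (bounded by $\bar{v}h < \gamma - \rho$ via the velocity bound from (\ref{eq:h:ineq:1})) and the numerical error (bounded by $\rho$). Your closing remark about the trajectory remaining in $Inv_i$ over $[0,h]$ is a reasonable point of care, but since $\bar{v}$ is defined as a maximum over all locations it bounds the speed regardless, so nothing further is needed.
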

\begin{proof}
Since $\Vert x(t) - \hat{x}(t) \Vert \le \rho$, $x(t) \in \B_{\rho}(\hat{x}(t))$.
Moreover, from (\ref{eq:h:ineq:1}), we know that for any $x(t) \in \X$,
\begin{eqnarray} \nonumber
	\max_{\tau \in [0,h]} \Vert x(t+\tau) - x(t) \Vert &\le& \max_{\tau \in [0,h]}  \int_t^{t+\tau} \Vert \dot{x}(s) \Vert ds \\ \nonumber
	 &\le& \bar{v} h. 
\end{eqnarray}
Hence, if $h < (\gamma - \rho)/\bar{v}$, then, for any $x(t) \in \X$, 
\[
	\max_{\tau \in [0,h]} \Vert x(t+\tau) - x(t) \Vert < \gamma - \rho. 
\] 
This means that $x(t+\tau) \in \B_{\gamma - \rho}(x(t))$ for $\tau \in [0, h]$.
Therefore, for $\tau \in [0, h]$, 
\begin{eqnarray} \nonumber
	\Vert \hat{x}(t) - x(t+\tau)\Vert &\le& \Vert \hat{x}(t) - x(t) \Vert + \Vert x(t) -x(t+\tau) \Vert \\ \nonumber
	&\le& \rho +  (\gamma -\rho).
\end{eqnarray}
Thus $\Vert \hat{x}(t) - x(t+\tau)\Vert  \le \gamma$.
\end{proof}

Notice that Lemma \ref{lem:cond:inexact:isover} says that if $h < (\gamma - \rho)/\bar{v}$ for a given $\rho > 0$, then a $\gamma$-neighborhood of a sampled state is indeed an over-approximation of a trajectory over the time interval $h$.
We now extend the result in Lemma \ref{lem:cond:inexact:isover} to the case where we need to compute a $\gamma$-approximation of a polyhedron.

\begin{lem} \label{lem:cond:inexact:over}
Given a DTLHA $\A$ and its reached states $\D_t(\B_{\delta}(x_0))$ at some time $t$ from initial states in $\B_{\delta}(x_0)$, 
let $\rho > 0$ be an upper bound on the approximation errors such that $d_H(\D_t(\B_{\delta}(x_0)), \hat{\D}_t(\B_{\delta}(x_0))) \le \rho$.
If a given sampling period $h$ satisfies the following inequality 
\begin{equation} \label{eq:lem:cond:inexact:over}
h < \frac{\gamma - \rho}{\bar{v}} ,
\end{equation}
then, for a given $\gamma$ satisfying $\gamma > \rho$, 
\begin{equation}
	\D_{t+ \tau}(\B_{\delta}(x_0)) \subset \hat{\D}_t(\B_{\delta}(x_0), \gamma), \quad \forall \tau \in [0, h], 
\end{equation}
 where $\hat{\D}_{t}(\B_{\delta}(x_0), \gamma)$ is a $\gamma$-approximation of $\hat{\D}_{t}(\B_{\delta}(x_0))$ that is  constructed as the convex hull of the set of extreme points of a polyhedral $\gamma$-neighborhood of all vertices of $\hat{\D}_{t}(\B_{\delta}(x_0))$ and $\bar{v}$ is as defined in (\ref{eq:h:ineq}).
\end{lem}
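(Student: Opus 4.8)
The plan is to lift the pointwise over-approximation estimate of Lemma \ref{lem:cond:inexact:isover} to the set-valued setting, the only genuinely new ingredient being a Minkowski-sum description of the prescribed $\gamma$-approximation operator. First I would fix an arbitrary $\tau \in [0,h]$ and an arbitrary $w \in \D_{t+\tau}(\B_{\delta}(x_0))$, and write $w = x(\tau;p)$ where $p := x(t;z) \in \D_t(\B_{\delta}(x_0))$ is the state reached at time $t$ from the corresponding initial condition $z \in \B_{\delta}(x_0)$. This is just the time-invariance (semigroup) property of the LTI flow of the active location: the trajectory from $z$ over $[t,t+\tau]$ coincides with the trajectory from $p$ over $[0,\tau]$, and its speed is bounded by $\bar{v}$.

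Next, since $d_H(\D_t(\B_{\delta}(x_0)), \hat{\D}_t(\B_{\delta}(x_0))) \le \rho$ and the sets are compact, there is a point $\hat{p} \in \hat{\D}_t(\B_{\delta}(x_0))$ with $\Vert p - \hat{p} \Vert \le \rho$. I would then apply Lemma \ref{lem:cond:inexact:isover} verbatim, with the roles of $x(t)$ and $\hat{x}(t)$ played by $p$ and $\hat{p}$; this is legitimate precisely because $\gamma > \rho$ and $h < (\gamma-\rho)/\bar{v}$, and the proof of that lemma uses only $p \in \B_{\rho}(\hat{p})$ together with the velocity bound. The conclusion is $w = x(\tau;p) \in \B_{\gamma}(\hat{p})$.

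It then remains to show $\B_{\gamma}(\hat{p}) \subseteq \hat{\D}_t(\B_{\delta}(x_0),\gamma)$, and here I would identify the prescribed construction with a Minkowski sum. Because we work in the $\ell_{\infty}$-norm, $\B_{\gamma}(0)$ is a hypercube (a polytope with $2^n$ vertices), so the stated object — the convex hull of the extreme points of the $\gamma$-neighborhoods of all vertices $v_i$ of $\hat{\D}_t(\B_{\delta}(x_0))$ — is exactly $hull\left(\bigcup_i \B_{\gamma}(v_i)\right)$, since $hull$ of a set equals $hull$ of its extreme points. Writing $S := \{v_i\}$ and $B := \B_{\gamma}(0)$, so that $\hat{\D}_t(\B_{\delta}(x_0)) = hull(S)$, I would invoke the elementary identity $hull(S) \oplus B = hull(S \oplus B)$ (the inclusion $\subseteq$ follows by redistributing convex coefficients since $\sum_i \lambda_i = 1$, and $\supseteq$ follows because $hull(S)\oplus B$ is convex and contains $S \oplus B$). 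Hence the prescribed approximation equals $\hat{\D}_t(\B_{\delta}(x_0)) \oplus \B_{\gamma}(0)$, and since $\hat{p} \in \hat{\D}_t(\B_{\delta}(x_0))$ we obtain $\B_{\gamma}(\hat{p}) = \hat{p} \oplus \B_{\gamma}(0) \subseteq \hat{\D}_t(\B_{\delta}(x_0)) \oplus \B_{\gamma}(0) = \hat{\D}_t(\B_{\delta}(x_0),\gamma)$. Chaining the inclusions gives $w \in \hat{\D}_t(\B_{\delta}(x_0),\gamma)$; as $w$ and $\tau$ were arbitrary, the claimed inclusion follows.

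The main obstacle I anticipate is this last step: justifying that the operationally-defined $\gamma$-approximation (the convex hull of the corners of the per-vertex hypercubes) genuinely covers the entire $\gamma$-neighborhood of the polyhedron, i.e. the Minkowski-sum identity $hull(\{v_i\}) \oplus B = hull\left(\bigcup_i (v_i + B)\right)$, together with the observation that the $\ell_{\infty}$-ball is itself a polytope so that its ``extreme points'' form a finite, well-defined set. Everything preceding it is a routine transfer of Lemma \ref{lem:cond:inexact:isover} through the Hausdorff bound and the semigroup property of the flow.
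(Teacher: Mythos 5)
Your proof is correct, but it lifts the pointwise estimate to sets by a different mechanism than the paper. The paper's proof stays entirely at the level of vertices: it uses the fact that the affine flow maps the polytope $\D_t(\B_{\delta}(x_0))$ to a polytope whose extreme points are the images of the extreme points of $\D_t(\B_{\delta}(x_0))$, pairs each such vertex $x(t)\in\V$ with a \emph{corresponding} vertex $\hat{x}\in\hat{\V}$ of the computed set, applies Lemma \ref{lem:cond:inexact:isover} to each pair to place $x(t+\tau)$ in $\B_{\gamma}(\hat{x})$, and then concludes by convexity, needing only the (immediate-from-construction) fact that $\B_{\gamma}(\hat{x})\subset\hat{\D}_t(\B_{\delta}(x_0),\gamma)$ for the finitely many vertices $\hat{x}$. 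You instead argue pointwise: every $w\in\D_{t+\tau}(\B_{\delta}(x_0))$ is tracked back through the semigroup property to some $p\in\D_t(\B_{\delta}(x_0))$, matched to a nearest $\hat{p}$ via the Hausdorff bound, and pushed forward into $\B_{\gamma}(\hat{p})$; this forces you to prove the stronger containment $\B_{\gamma}(\hat{p})\subseteq\hat{\D}_t(\B_{\delta}(x_0),\gamma)$ for \emph{arbitrary} $\hat{p}$ in the polytope, which you obtain from the Minkowski-sum identity $hull(\{v_i\})\oplus\B_{\gamma}(0)=hull\left(\bigcup_i(v_i\oplus\B_{\gamma}(0))\right)$ — precisely the content of the unnumbered lemma the paper proves separately in Section \ref{sec:design:algo}. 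Each route buys something: the paper's vertex argument needs only the easy direction of that identity but silently relies on a vertex-to-vertex correspondence with error $\le\rho$, which the Hausdorff bound alone does not supply (it is justified only by how $\hat{\D}_t$ is actually computed, vertex by vertex); your argument uses only the stated Hausdorff hypothesis and so is the more self-contained reading of the lemma, at the price of importing the Minkowski-sum characterization of the $\gamma$-approximation. Your observation that Lemma \ref{lem:cond:inexact:isover} applies with $\hat{p}$ merely a nearby point rather than a computed approximant is the key step that makes this work, and it is valid since that lemma's proof uses only $\Vert p-\hat{p}\Vert\le\rho$ and the velocity bound $\bar{v}$.
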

\begin{proof}
Let $\V$ and $\hat{\V}$ be the set of extreme points of $\D_t(\B_{\delta}(x_0))$ and $\hat{\D}_t(\B_{\delta}(x_0))$, respectively. 
Since $d_H(\D_t(\B_{\delta}(x_0)), \hat{\D}_t(\B_{\delta}(x_0))) \le \rho$ and $\gamma > \rho$, it is clear that $\D_{t}(\B_{\delta}(x_0)) \subset \hat{\D}_t(\B_{\delta}(x_0), \gamma)$.
From Lemma \ref{lem:cond:inexact:isover}, we know that for each $x(t) \in \V$, $x(t+\tau) \subset \B_{\gamma}(\hat{x})$ for all $\tau \in [0, h]$ where $\hat{x} \in \hat{\V}$ corresponding to $x(t)$.
Let $\V_{t+\tau}$ be the set of extreme points of $\D_{t+ \tau}(\B_{\delta}(x_0))$. 
Then $\V_{t+\tau} \subset \hat{\D}_t(\B_{\delta}(x_0), \gamma)$ for all $\tau \in [0, h]$ since 
\begin{inparaenum}[(i)]
	\item for each $x(t) \in \V$, $x(t+\tau) \subset \B_{\gamma}(\hat{x})$ for all $\tau \in [0, h]$ and 
	\item from the construction of $\hat{\D}_t(\B_{\delta}(x_0), \gamma)$, $\B_{\gamma}(\hat{x}) \subset \hat{\D}_t(\B_{\delta}(x_0), \gamma)$ for each $\hat{x} \in \hat{\V}$.
\end{inparaenum}
Therefore, the convex hull of $\V_{t+\tau}$, which is $\D_{t+ \tau}(\B_{\delta}(x_0))$, has to be contained in $\hat{\D}_t(\B_{\delta}(x_0), \gamma)$ for all $\tau \in [0,h]$ since $\hat{\D}_t(\B_{\delta}(x_0), \gamma)$ is convex and $\V_{t+\tau} \subset \hat{\D}_t(\B_{\delta}(x_0), \gamma)$ for all $\tau \in [0, h]$.
\end{proof}

For (i) in Remark \ref{rem:cond:inexact}, Lemma \ref{lem:cond:inexact:eps} below shows that the diameter of a set $\hat{\D}_{t}(\B_{\delta}(x_0), \gamma)$ has to be smaller than a given $\epsilon > 0$.

\begin{lem} \label{lem:cond:inexact:eps}
Given $\epsilon > 0$, $\delta > 0$, $\gamma >0$,  $\rho > 0$, and a DTLHA $\A$, 
suppose a given sampling period $h > 0$ satisfies the inequality (\ref{eq:lem:cond:inexact:over}).
Then $\D_{[t, t+h]}(x_0) \subset \hat{\D}_{t}(\B_{\delta}(x_0), \gamma)$ and $d_H(\hat{\D}_{t}(\B_{\delta}(x_0), \gamma), \D_{[t, t+h]}(x_0)) \le \epsilon$, if the following hold:
\begin{equation} \label{eq:lem:cond:inexact:eps}
	dia(\hat{\D}_t(\B_{\delta}(x_0), \gamma)) \le \epsilon,
\end{equation}
where $\D_{[t, t+h]}(x_0)$ is the set of reached states of $\A$ starting from $x_0$ during the time interval $[t, t+h]$. 
\end{lem}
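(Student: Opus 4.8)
The plan is to treat the lemma's two assertions—the containment and the Hausdorff bound—separately, since each reduces to a fact already available in the excerpt. For the containment $\D_{[t,t+h]}(x_0) \subset \hat{\D}_{t}(\B_{\delta}(x_0), \gamma)$, I would invoke Lemma \ref{lem:cond:inexact:over} directly. Because $h$ is assumed to satisfy the inequality (\ref{eq:lem:cond:inexact:over}), that lemma yields $\D_{t+\tau}(\B_{\delta}(x_0)) \subset \hat{\D}_{t}(\B_{\delta}(x_0), \gamma)$ for every $\tau \in [0, h]$. Since $x_0 \in \B_{\delta}(x_0)$, the single-point trajectory satisfies $\D_{t+\tau}(x_0) \in \D_{t+\tau}(\B_{\delta}(x_0))$ for each such $\tau$, and taking the union over $\tau \in [0,h]$ gives $\D_{[t,t+h]}(x_0) = \bigcup_{\tau \in [0,h]} \D_{t+\tau}(x_0) \subset \hat{\D}_{t}(\B_{\delta}(x_0), \gamma)$, which is the first claim.

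For the Hausdorff bound, I would unwind the definition $d_H(\P, \Q) = \max\{\sup_{p \in \P}\inf_{q \in \Q} d(p,q),\ \sup_{q \in \Q}\inf_{p \in \P} d(p,q)\}$ with $\P := \hat{\D}_{t}(\B_{\delta}(x_0), \gamma)$ and $\Q := \D_{[t,t+h]}(x_0)$. The containment just established gives $\Q \subseteq \P$, so for every $q \in \Q$ we have $q \in \P$, hence $\inf_{p \in \P} d(p,q) = 0$ and the second term in the maximum vanishes. For the first term, I note that $\Q$ is nonempty—it contains $\D_t(x_0)$—so I may fix any $q_0 \in \Q \subseteq \P$; then for every $p \in \P$ both $p$ and $q_0$ lie in $\P$, so $d(p, q_0) \le dia(\P) \le \epsilon$ by hypothesis (\ref{eq:lem:cond:inexact:eps}), which forces $\inf_{q \in \Q} d(p,q) \le \epsilon$. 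Taking the supremum over $p \in \P$ bounds the first term by $\epsilon$, and combining the two bounds yields $d_H(\P, \Q) \le \epsilon$.

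I do not anticipate a genuine obstacle: the lemma is essentially a repackaging of Lemma \ref{lem:cond:inexact:over} together with the elementary observation that a nonempty set contained inside another set of diameter at most $\epsilon$ lies within Hausdorff distance $\epsilon$ of it. The only points warranting a little care will be to confirm that $\D_{[t,t+h]}(x_0)$ is nonempty—so that the infimum defining the first Hausdorff term is finite, which holds because the trajectory from $x_0$ is defined throughout $[t, t+h]$—and to keep the roles of the two sets straight in the asymmetric Hausdorff formula, using the containment to kill one direction and the diameter bound to control the other.
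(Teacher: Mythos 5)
Your proposal is correct and follows essentially the same route as the paper's own proof: the containment is obtained by invoking Lemma \ref{lem:cond:inexact:over} with $x_0 \in \B_{\delta}(x_0)$, and the Hausdorff bound follows because every point of $\D_{[t,t+h]}(x_0)$ lies inside $\hat{\D}_{t}(\B_{\delta}(x_0),\gamma)$, whose diameter is at most $\epsilon$. Your write-up is merely more explicit than the paper's about unwinding the two terms of the Hausdorff distance and about nonemptiness of the reach set.
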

\begin{proof}
Since $h$ satisfies (\ref{eq:lem:cond:inexact:over}), it is trivial to see that  $\D_{[t, t+h]}(x_0) \subset \hat{\D}_{t}(\B_{\delta}(x_0), \gamma)$ holds from Lemma \ref{lem:cond:inexact:over}.
Moreover, if (\ref{eq:lem:cond:inexact:eps}) is also true, then for any $z \in \D_{[t, t+h]}(x_0)$, $\max_{y \in \hat{\D}_{t}(\B_{\delta}(x_0), \gamma)} \Vert y - z \Vert \le \epsilon$ since $z \in \hat{\D}_{t}(\B_{\delta}(x_0), \gamma)$.
Therefore, it is clear that $d_H(\hat{\D}_{t}(\B_{\delta}(x_0), \gamma), \D_{[t, t+h]}(x_0)) \le \epsilon$ if (\ref{eq:lem:cond:inexact:over}) and (\ref{eq:lem:cond:inexact:eps}) hold.
\end{proof}

Now we can extend the results of Lemmas \ref{lem:cond:exact:trans}, \ref{lem:cond:exact:det}, and \ref{lem:cond:exact:transv} to incorporate a numerical calculation error $\rho > 0$.

\begin{lem} \label{lem:cond:inexact:istrans}
Given $\rho > 0$,  a location $l_c$, and $\hat{\D}_t(\B_{\delta}(x_0))$ at time $t$,
if
\begin{enumerate}[(i)]
\item $\hat{\D}_{t-h}(\B_{\delta}(x_0),$ $\rho) \subset (Inv_c)^{\circ}$, and
\item $\hat{\D}_t(\B_{\delta}(x_0), \rho) \subset Inv_c^C$ 
\end{enumerate} 
for some $\delta > 0$ and $h > 0$, then there is a discrete transition from the location $l_c$ to some other locations.
\end{lem}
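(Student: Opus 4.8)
The plan is to transcribe the proof of the exact-precision Lemma~\ref{lem:cond:exact:trans} into the finite-precision setting, the only genuinely new ingredient being that the computed sets $\hat{\D}_s(\B_{\delta}(x_0), \rho)$ are honest over-approximations of the true reached sets $\D_s(\B_{\delta}(x_0))$. First I would record the trivial inclusion $x_0 \in \B_{\delta}(x_0)$, which gives $\D_s(x_0) \in \D_s(\B_{\delta}(x_0))$ for every $s$; the exact execution from $x_0$ is thus always captured by the reach set of the $\delta$-ball, and this is what lets me pin down the behavior of the actual trajectory from the bulk over-approximations appearing in the hypotheses.

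Next I would establish the two containments $\D_{t-h}(\B_{\delta}(x_0)) \subset \hat{\D}_{t-h}(\B_{\delta}(x_0), \rho)$ and $\D_t(\B_{\delta}(x_0)) \subset \hat{\D}_t(\B_{\delta}(x_0), \rho)$. These follow exactly as in the proof of Lemma~\ref{lem:cond:inexact:over}: since the finite-precision computation satisfies $d_H(\D_s(\B_{\delta}(x_0)), \hat{\D}_s(\B_{\delta}(x_0))) \le \rho$, the true set lies within the closed $\rho$-neighborhood of $\hat{\D}_s(\B_{\delta}(x_0))$, and the $\rho$-approximation $\hat{\D}_s(\B_{\delta}(x_0), \rho)$ is constructed precisely to contain that $\rho$-neighborhood. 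Here I only need the containment at the two sample instants $s \in \{t-h, t\}$, not the interval version, so the relation $h < (\gamma - \rho)/\bar{v}$ plays no role and $\gamma = \rho$ is admissible. Combining these containments with hypotheses (i) and (ii) then yields $\D_{t-h}(x_0) \in (Inv_c)^{\circ}$ and $\D_t(x_0) \in Inv_c^C$.

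Finally I would close the argument by continuity, exactly as in Lemma~\ref{lem:cond:exact:trans}. The map $s \mapsto \D_s(x_0)$ is the continuous execution of $\A$ from $x_0$; it lies in the open cell $(Inv_c)^{\circ}$ at $s = t-h$ and in the complement $Inv_c^C$ at $s = t$, so there is a first exit time $\tau \in (t-h, t)$ at which the execution leaves $Inv_c$ across $\partial Inv_c$, and by Definition~\ref{def:trans} this exit constitutes a discrete transition out of the location $l_c$.

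The only delicate point is the second step: the parameter $\rho$ serves simultaneously as the bound on the finite-precision computation error and as the over-approximation parameter, so I must verify that a single $\rho$-neighborhood both absorbs the computation error and still produces an over-approximation of the true reachable set. With closed neighborhoods this is immediate, and it is exactly the inclusion already certified in the proof of Lemma~\ref{lem:cond:inexact:over}; everything else is a verbatim repetition of the exact-precision argument.
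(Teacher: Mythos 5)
Your proposal is correct and follows essentially the same route as the paper: both use the Hausdorff bound $d_H(\D_s(\B_{\delta}(x_0)), \hat{\D}_s(\B_{\delta}(x_0))) \le \rho$ to conclude $\D_s(\B_{\delta}(x_0)) \subset \hat{\D}_s(\B_{\delta}(x_0), \rho)$ at $s = t-h$ and $s = t$, then combine with hypotheses (i) and (ii) and invoke the exact-precision argument of Lemma~\ref{lem:cond:exact:trans}. The only difference is that you unpack that cited lemma's continuity argument inline rather than referencing it, which changes nothing of substance.
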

\begin{proof}
Notice that $d_H(\D_t(\B_{\delta}(x_0)), \hat{\D}_t(\B_{\delta}(x_0))) \le \rho$, which implies $\D_t(\B_{\delta}(x_0)) \subset \hat{\D}_t(\B_{\delta}(x_0), \rho)$.
Similarly, $\D_{t-h}(\B_{\delta}(x_0))$ $\subset \hat{\D}_{t-h}(\B_{\delta}(x_0), \rho)$.
Hence if (i) and (ii) hold, then it is clear that $\D_t(\B_{\delta}(x_0)) \subset Inv_c^C$ and $\D_{t-h}(\B_{\delta}(x_0)) \subset (Inv_c)^{\circ}$.
Then the result follows from Lemma \ref{lem:cond:exact:trans}. 
\end{proof}

\begin{lem} \label{lem:cond:inexact:det}
Given $\rho > 0$, a location $l_c$, and $\hat{\D}_t(\B_{\delta}(x_0))$ at time $t$, suppose that a discrete transition from a location $l_c$ to some other locations is determined as in Lemma \ref{lem:cond:inexact:istrans}.
Then the discrete transition is a deterministic discrete transition from $l_c$ to $l_n$ if there exists a location $l_n$ such that $l_n \ne l_c$ and $\hat{\D}_t(\B_{\delta}(x_0), \rho) \subset (Inv_n)^{\circ}$.
\end{lem}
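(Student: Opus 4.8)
The plan is to reduce this finite-precision statement to its exact-precision counterpart, Lemma \ref{lem:cond:exact:det}, in exactly the same manner that Lemma \ref{lem:cond:inexact:istrans} was reduced to Lemma \ref{lem:cond:exact:trans}. The single fact driving the reduction is that the approximately computed set over-approximates the true reach set to within $\rho$: since $d_H(\D_t(\B_{\delta}(x_0)), \hat{\D}_t(\B_{\delta}(x_0))) \le \rho$, the true reach set $\D_t(\B_{\delta}(x_0))$ lies inside the $\rho$-neighborhood of $\hat{\D}_t(\B_{\delta}(x_0))$, and therefore inside the $\rho$-approximation $\hat{\D}_t(\B_{\delta}(x_0), \rho)$.

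First I would record the containment $\D_t(\B_{\delta}(x_0)) \subset \hat{\D}_t(\B_{\delta}(x_0), \rho)$. Combining this with the hypothesis $\hat{\D}_t(\B_{\delta}(x_0), \rho) \subset (Inv_n)^{\circ}$ immediately yields the interior inclusion $\D_t(\B_{\delta}(x_0)) \subset (Inv_n)^{\circ}$ for the exact reach set at time $t$. No estimation is required here beyond transitivity of set inclusion.

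Next I would invoke the standing hypothesis that a discrete transition out of $l_c$ has already been detected via Lemma \ref{lem:cond:inexact:istrans}. Unwinding that lemma's proof supplies precisely the exact-precision facts $\D_{t-h}(\B_{\delta}(x_0)) \subset (Inv_c)^{\circ}$ and $\D_t(\B_{\delta}(x_0)) \subset Inv_c^C$. These two inclusions, together with $l_n \ne l_c$ and the inclusion $\D_t(\B_{\delta}(x_0)) \subset (Inv_n)^{\circ}$ established above, are exactly the hypotheses of Lemma \ref{lem:cond:exact:det}. Applying that lemma to the exact reach sets concludes that the transition is deterministic, from $l_c$ to $l_n$.

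The argument is almost entirely bookkeeping, so I do not anticipate a genuine technical obstacle; the only point requiring care is the direction of the Hausdorff inequality, namely confirming that $\rho$ being an upper bound on $d_H(\D_t(\B_{\delta}(x_0)), \hat{\D}_t(\B_{\delta}(x_0)))$ places the \emph{true} set inside the computed $\rho$-neighborhood rather than the reverse. Once that containment is fixed, the remainder follows without any new sampling-rate or transversality condition, since determinism is a purely set-membership property and so is insensitive to the numerical error beyond the single $\rho$-inflation absorbed into $\hat{\D}_t(\B_{\delta}(x_0), \rho)$.
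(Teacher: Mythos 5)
Your proposal is correct and follows essentially the same route as the paper: both reduce to Lemma \ref{lem:cond:exact:det} by noting that $\D_t(\B_{\delta}(x_0)) \subset \hat{\D}_t(\B_{\delta}(x_0), \rho)$, so the hypothesis $\hat{\D}_t(\B_{\delta}(x_0), \rho) \subset (Inv_n)^{\circ}$ transfers to the exact reach set, with the remaining exact-precision facts inherited from Lemma \ref{lem:cond:inexact:istrans}. No gaps.
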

\begin{proof}
Notice that $\D_{t-h}(\B_{\delta}(x_0)) \subset (Inv_c)^{\circ}$ from the result in Lemma \ref{lem:cond:inexact:istrans}.
Since $\D_t(\B_{\delta}(x_0)) \subset \hat{\D}_t(\B_{\delta}(x_0), \rho)$, if  $\hat{\D}_t(\B_{\delta}(x_0), \rho) \subset (Inv_n)^{\circ}$, then  $\D_t(\B_{\delta}(x_0)) \subset (Inv_n)^{\circ}$.
Thus by Lemma \ref{lem:cond:exact:det}, the conclusion holds.
\end{proof}

\begin{lem} \label{lem:cond:inexact:transv}
Given $\rho > 0$, $\gamma > 0$ and $h > 0$ satisfying (\ref{eq:lem:cond:inexact:over}), suppose that a deterministic discrete transition from a location $l_c$ to another location $l_n$ is determined as in Lemma \ref{lem:cond:inexact:istrans} and Lemma \ref{lem:cond:inexact:det}, i.e., $\hat{\D}_{t-h}(\B_{\delta}(x_0), \rho) \subset (Inv_c)^{\circ}$ and $\hat{\D}_{t}(\B_{\delta}(x_0), \rho) \subset (Inv_n)^{\circ}$.
Then, for any $\epsilon > 0$, the discrete transition is transversal if the following conditions hold:
\begin{itemize}
\item[(i)] $h < (dia(\hat{\mathcal{J}}_{c,n})/2)/(2 \bar{v})$, 
 \item[(ii)] $\D_0(\hat{\mathcal{J}}_{c,n}, dia(\hat{\mathcal{J}}_{c,n})/2 + \rho) \subset (Inv_c \cup Inv_n)$, and 
 \item[(iii)] $\langle \dot{x}_c, \vec{n}_c \rangle \ge \epsilon \land \langle \dot{x}_n, \vec{n}_c \rangle \ge \epsilon,~~\forall x \in \V(\hat{\mathcal{J}}_{c,n}')$,
 \end{itemize}
 where $\hat{\mathcal{J}}_{c,n} := \hat{\D}_{t}(\B_{\delta}(x_0), \gamma + \rho) \cap Inv_c \cap Inv_n$, $\hat{\mathcal{J}}_{c,n}' := \D_0(\hat{\mathcal{J}}_{c,n},$ $dia(\hat{\mathcal{J}}_{c,n})/2 + \rho) \cap Inv_c \cap Inv_n$, $ \hat{\D}_{t}(\B_{\delta}(x_0), \gamma + \rho)$ is a $(\gamma + \rho)$-approximation of $ \hat{\D}_{t}(\B_{\delta}(x_0))$, and $\dot{x}_i$ and $\vec{n}_c$ are as defined in Lemma \ref{lem:cond:exact:transv}. 
\end{lem}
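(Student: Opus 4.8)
The plan is to mirror the exact-case argument of Lemma~\ref{lem:cond:exact:transv}, treating the numerical error bound $\rho$ as a padding that is absorbed into the inflated quantities appearing in the hypotheses. The first step is to show that $\hat{\mathcal{J}}_{c,n}$ genuinely over-approximates the true discrete transition state. Since $d_H(\D_t(\B_{\delta}(x_0)), \hat{\D}_t(\B_{\delta}(x_0))) \le \rho$, the $(\gamma+\rho)$-approximation $\hat{\D}_t(\B_{\delta}(x_0), \gamma+\rho)$ contains the exact $\gamma$-approximation $\D_t(\B_{\delta}(x_0), \gamma)$; intersecting with $Inv_c \cap Inv_n$ then yields $\mathcal{J}_{c,n} \subset \hat{\mathcal{J}}_{c,n}$. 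Combining this with Lemmas~\ref{lem:cond:inexact:istrans} and~\ref{lem:cond:inexact:det}, which guarantee that the true reached states $\D_{t-h}(x_0)$ and $\D_t(x_0)$ lie in $(Inv_c)^{\circ}$ and $(Inv_n)^{\circ}$ respectively, I would conclude exactly as in Lemma~\ref{lem:cond:exact:transv} that the true transition state $\D_{\tau'}(x_0) \in Inv_c \cap Inv_n$ for some $\tau' \in (t-h, t)$ lies in $\hat{\mathcal{J}}_{c,n}$, so that in particular $\hat{\mathcal{J}}_{c,n}^{\circ} \ne \emptyset$.

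The second step uses conditions~(ii) and~(iii) to certify that \emph{every} point of the padded set $\hat{\mathcal{J}}_{c,n}'$ satisfies the deterministic-and-transversal condition of Definition~\ref{def:transversal}. The extra $\rho$ in the inflation radius $dia(\hat{\mathcal{J}}_{c,n})/2 + \rho$ exactly compensates the error incurred when $\D_0(\cdot)$ is computed approximately, so that $\hat{\mathcal{J}}_{c,n}'$ contains the corresponding exact neighborhood $\mathcal{J}_{c,n}'$. Condition~(ii) then confines this neighborhood to $Inv_c \cup Inv_n$, ruling out a third cell and preserving determinism. For transversality, the key observation is that the map $x \mapsto \langle A_i x + u_i, \vec{n}_c \rangle$ is affine in $x$, so the inequality in~(iii), verified only at the vertices $\V(\hat{\mathcal{J}}_{c,n}')$, propagates by convexity to all of $\hat{\mathcal{J}}_{c,n}'$; hence every $z' \in \hat{\mathcal{J}}_{c,n}'$ satisfies (\ref{eq:pre:trans}).

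The final step invokes condition~(i) together with the diameter relation to show that the flow leaves $\partial Inv_c$ into $(Inv_n)^{\circ}$. For any $z \in \hat{\mathcal{J}}_{c,n}$, the bound $\Vert x(h;z) - z \Vert \le \bar{v} h < dia(\hat{\mathcal{J}}_{c,n})/2$ follows from~(i) as before, and since the inflation gives $dia(\hat{\mathcal{J}}_{c,n}') \ge 2\,dia(\hat{\mathcal{J}}_{c,n})$, the entire flow segment $x(\tau;z)$ for $\tau \in (0,h)$ stays inside $\hat{\mathcal{J}}_{c,n}' \subset (Inv_n)^{\circ}$; transversality of the true transition state $\D_t(x_0) \in \hat{\mathcal{J}}_{c,n}$ then follows. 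I expect the main obstacle to be the bookkeeping of the $\rho$-propagation: one must verify at each inflation that the numerical error $d_H \le \rho$ is dominated by the padding built into $\hat{\mathcal{J}}_{c,n}$ (via $\gamma+\rho$) and into the radius defining $\hat{\mathcal{J}}_{c,n}'$ (via the extra $+\rho$), so that the exact-case inclusions and the vertex-to-polytope transversality argument survive verbatim on the inflated sets.
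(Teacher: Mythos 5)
Your proposal follows essentially the same route as the paper's proof: establish $\mathcal{J}_{c,n} \subset \hat{\mathcal{J}}_{c,n}$ (hence nonemptiness) from the $(\gamma+\rho)$-inflation, use (ii)--(iii) to certify the deterministic-and-transversal condition on the padded set, and use (i) together with the diameter relation to confine the post-transition flow; your added observation that $x \mapsto \langle A_i x + u_i, \vec{n}_c \rangle$ is affine, so the vertex check in (iii) propagates to all of $\hat{\mathcal{J}}_{c,n}'$ by convexity, is a detail the paper leaves implicit. One small slip: the flow segment is confined to $\D_0(\hat{\mathcal{J}}_{c,n}, dia(\hat{\mathcal{J}}_{c,n})/2 + \rho)$ rather than to $\hat{\mathcal{J}}_{c,n}'$ (which lies in $Inv_c \cap Inv_n$ and so cannot be a subset of $(Inv_n)^{\circ}$); the conclusion that the flow enters $(Inv_n)^{\circ}$ then follows from (ii) and the transversality certified in (iii), exactly as in the paper's equally terse final step.
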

\begin{proof}
Notice that $\D_{t}(\B_{\delta}(x_0), \gamma) \subset \hat{\D}_{t}(\B_{\delta}(x_0), \gamma + \rho)$.
Then, by the definition of $\mathcal{J}_{c,n}$ given in Lemma \ref{lem:cond:exact:transv} and $\hat{\mathcal{J}}_{c,n}$, we know $\mathcal{J}_{c,n} \subset \hat{\mathcal{J}}_{c,n}$.
Hence, $\hat{\mathcal{J}}_{c,n} \ne \emptyset$ since $\mathcal{J}_{c,n} \ne \emptyset$ by the construction of  $\mathcal{J}_{c,n}$.
Now if (i) holds, then it is easy to see that $\D_{\tau}(\hat{\mathcal{J}}_{c,n}) \subset \D_0(\hat{\mathcal{J}}_{c,n}, dia(\hat{\mathcal{J}}_{c,n})/2)$ for $\tau \in (0, h)$.
Moreover, (ii) and (iii) imply that $\D_{\tau}(\hat{\mathcal{J}}_{c,n})$ is in fact contained in $Inv_n^{\circ}$ for $\tau \in (0, h)$.
\end{proof}

\section{Architecture and Algorithm for Bounded $\epsilon$-Reach Set Computation of a DTLHA}  \label{sec:design}

We are now in a position to propose an algorithm for bounded $\epsilon$-reach set computation of a DTLHA.
Before proving its correctness, we first describe its architecture.

For \emph{flexibility}, we decouple the higher levels of the algorithm, called \emph{Policy}, from the component, called \emph{Mechanisms}, where specific steps of calculations are performed through some numerical routines.
The proposed architecture of the algorithm, shown in Fig. \ref{fig:arch}, consists of roughly five different components \emph{Policy}, \emph{Mechanism}, \emph{System Description}, \emph{Data}, and \emph{Numerics}.
A more detailed explanation of each of these modules is given below.

\begin{figure}
\begin{center}
\includegraphics[width = 9cm]{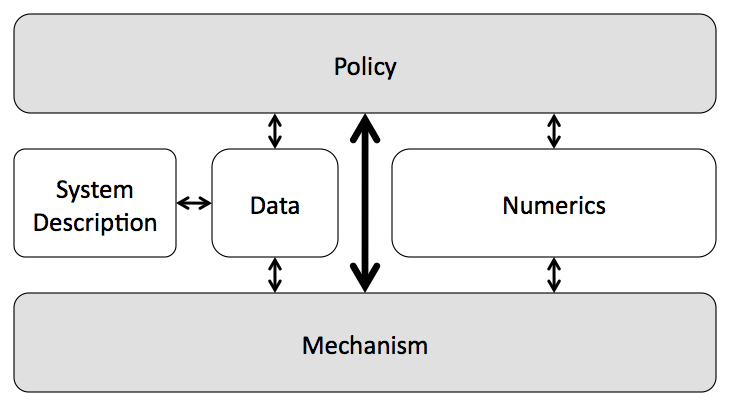} 	
\caption{An architecture for bounded $\epsilon$-reach set computation.}
\label{fig:arch}
\end{center}
\end{figure}

The System Description contains all information describing a problem of a bounded $\epsilon$-reach set computation of a DTLHA.
This consists of $\X$, the domain of continuous state space, a DTLHA $\A$, and an initial condition $(l_0, x_0) \in \mathbb{L} \times \X$. 
Also, an upper bound $T \in \mathbb{R}^+$ on terminal time, an upper bound $N \in \mathbb{N}$ on the total number of discrete transitions, and an approximation parameter $\epsilon > 0$, are described.
A bounded $\epsilon$-reach set of a DTLHA $\A$ is computed in the Mechanism component based on a given set of numerical calculation algorithms in Numerics, as well as a given Policy, which captures some of the higher level choices of the algorithm's outer loops. 
In the Data component, all computation data that is relevant to a computed bounded $\epsilon$-reach set, generated on-the-fly in the Mechanism part, are stored. 
Each of the functions in Numerics is in fact an implementation of some numerical computation algorithms.  
As an example, $e^{At}$ can be computed in many different ways as shown in \cite{moler:78} and each of the different algorithms can compute the value with a certain accuracy. 
Here we assume that a set of such numerical computation algorithms for basic calculations are given\footnote{In this way, we decouple the low-level numerical calculations from our bounded $\epsilon$-reach set algorithm. This is the reason why the Numerics component is represented separately from the Mechanism component.} 
and the corresponding approximation error bounds, i.e., $\sigma_e, \sigma_i, \mu_c$, and $\mu_h$, are known a priori.
The Policy component represents a user-defined rules that choose appropriate values of the parameters, especially $\delta > 0$, $\gamma > 0$, and $h > 0$, which are needed to continue to compute a bounded $\epsilon$-reach set of a DTLHA, when a bounded $\epsilon$-reach set algorithm in Mechanism fails to determine some events or to satisfy some required properties, during its computation.
The Mechanism component represents the core of the bounded $\epsilon$-reach set algorithm based on the theoretical results in Section \ref{sec:theory} and \ref{sec:cond}, and is detailed in Section \ref{sec:design:algo}. 
Given values for parameters $\delta > 0$, $\gamma > 0$, and $h > 0$, it computes a bounded $\epsilon$-reach set of a DTLHA $\A$ until it either successfully finishes its computation or cannot make further progress, which happens when some required conditions or properties are not met.
Notice that,  as stated in Section \ref{sec:cond}, there are a set of conditions and properties that a computed set needs to satisfy to be a correct bounded $\epsilon$-reach set.
If the algorithm fails to resolve a computation, then it returns to Policy indicating the problems so that a user-defined rule in Policy can choose another set of values for the parameters to resolve the problems.
Every computation result is stored in the Data component to be possibly used later in Policy and Mechanism.

\subsection{Core Algorithm for Bounded $\epsilon$-Reach Set of a DTLHA}  \label{sec:design:algo}

An algorithm to compute a bounded $\epsilon$-reach set of a DTLHA is proposed and shown in Algorithm \ref{alg:main}.
Let $k$ indicate a computation step of the algorithm from which the proposed algorithm starts its bounded $\epsilon$-reach set computation.
All computation history up to the $(k-1)$-th computation step is stored as data, called {\tt Reached}, in Data part.
Then, given an input $(k, \delta_k, \gamma_k, h_k)$ from Policy, the algorithm first retrieves the computation data at the $(k-1)$-th computation step from {\tt Reached} and starts its $k$-th computation step using this data.
As shown in Algorithm \ref{alg:main}, the algorithm continues its computation until it either
\begin{inparaenum}[(i)]
	\item returns {\tt done} when it successfully finished to compute a bounded $\epsilon$-reach set or
	\item returns {\tt error} when it encounters some erroneous situations during the execution of a function, called {\tt Post()}.
\end{inparaenum}
If the algorithm returns an {\tt error}, it also indicates the cause of the {\tt error} so that a user-defined rule in Policy can choose appropriate values for the input parameters.

\begin{algorithm}  
\SetAlgoLined 
\BlankLine
\KwIn{$k, \delta_k, \gamma_k, h_k$ from Policy} 
\BlankLine
\Compute $\mu_x$ from $(\sigma_e, \sigma_i)$\\
\BlankLine
\While{\True} {
\Get data at $(k-1)$-th step from \Reached \\
\If{$\delta_k \ne \delta_{k-1}$} {
	\Compute $\hat{\D}_{t_{k-1}}(\B_{\delta_k}(x_0))$\\ 
	\Update $\rho_{k-1}$ 
}
%
%
$t_k \leftarrow t_{k-1}+h_k$\\
\Call  \Post  \\
\Store $k$-th computation data into \Reached \\
$k \leftarrow k+1$\\
\lIf{$(t_k \ge T) \lor (\Jump \ge N)$}{\Return \Done}
}
\BlankLine
\caption{An algorithm for bounded $\epsilon$-reach set computation of a DTLHA.}
\label{alg:main}
\end{algorithm}

\begin{algorithm} 
\SetAlgoLined 
\BlankLine
\KwIn{$h_k, \gamma_k, l_{k-1}, \rho_{k-1}, \hat{\D}_{t_{k-1}}(\B_{\delta_k}(x_0))$} 
\BlankLine
\Compute $\hat{\D}_{t_k}(\B_{\delta_k}(x_0))$ from $\hat{\D}_{t_{k-1}}(\B_{\delta_k}(x_0))$\\
\Compute $ \hat{\D}_{t_k}(\B_{\delta_k}(x_0),\gamma_k)$ from $\hat{\D}_{t_k}(\B_{\delta_k}(x_0))$\\
\Update  $\rho_k \leftarrow \rho_{k-1} + \mu_x$\\
\BlankLine
\lIf{$h_k \ge (\gamma_k - \rho_k)/\bar{v}$}{\Return \Err}\\
\lIf{$dia(\hat{\D}_{t_k}(\B_{\delta_k}(x_0), \gamma_k)) \ge \epsilon$} {\Return \Err} \\
\uIf{$\hat{\D}_{t_k}(\B_{\delta_k}(x_0)) \cap Inv(l_{k-1}) = \emptyset$}{
	\If{$\hat{\D}_{t_{k-1}}(\B_{\delta_k}(x_0)) \subset Inv(l_{k-1})^{\circ}$} {
		\uIf{$\Deterministic \land \Transversal$}{
			\Update $\hat{\D}_{t_k}(\B_{\delta_k}(x_0))$ and $\hat{\D}_{t_k}(\B_{\delta_k}(x_0), \gamma_k)$
			\Update $\rho_k \leftarrow \rho_k + \mu_x + \mu_c + \mu_h$\\
			\Update $l_k$\\
			 \Jump $\leftarrow$ \Jump + 1
		}
		\lElse{\Return \Err}
	}
}
\uElseIf{$\hat{\D}_{t_k}(\B_{\delta_k}(x_0)) \not\subset Inv(l_{k-1})^{\circ}$}{\Return \Err}
\lElse{$l_k \leftarrow l_{k-1}$}
\BlankLine
\caption{A function {\tt Post()}.}
\label{alg:post}
\end{algorithm}

In the proposed algorithm in {\tt Post()}, $\hat{\D}_{t_k}(\B_{\delta_k}(x_0))$ is computed from $\hat{\D}_{t_{k-1}}(\B_{\delta_k}(x_0))$ as follows:

Given a polyhedron $\hat{\D}_{t_{k-1}}(\B_{\delta_k}(x_0))$, we first compute the set of the vertices of $\hat{\D}_{t_{k-1}}(\B_{\delta_k}(x_0))$ that is denoted as $\V$.
Then for each $v_i \in \V$, we compute 
\[
	v_i(h_k) := e^{A_k h_k} v_i + \int_0^{h_k} e^{A_k s} u_k ds
\] 
where $A_k$ and $u_k$ are given by the linear dynamics of a location $l_k$ on which the linear image of $\hat{\D}_{t_{k-1}}(\B_{\delta_k}(x_0))$ is computed at the $k$-the computation step in Algorithm \ref{alg:main}. 
If we let $\V_h := \{v_i(h_k) : v_i \in \V \}$, then we can compute $\hat{\D}_{t_k}(\B_{\delta_k}(x_0))$ as follows: 
\[
	\hat{\D}_{t_k}(\B_{\delta_k}(x_0)) := hull(\V_h)
\]
where $hull(\V_h)$ is the convex hull of $\V_h$. 

Once we have $\hat{\D}_{t_k}(\B_{\delta_k}(x_0))$, we compute $\hat{\D}_{t_k}(\B_{\delta_k}(x_0), \gamma_k)$ in the following way.
To compute $\hat{\D}_{t_k}(\B_{\delta_k}(x_0), \gamma_k)$ for a given $\gamma_k$, we first construct a hypercubic $\gamma_k$-neighborhood of $v_i(h_k)$ for each $v_i(h_k) \in \V_h$.
Let $\B_{\gamma_k}(v_i(h_k))$ be such a $\gamma_k$ hypercubic neighborhood of $v_i(h_k)$ and $\V_{h}^{\gamma}$ be the set of vertices of $\B_{\gamma_k}(v_i(h_k))$ for all $v_i(h_k) \in \V_h$.
Then we can compute $\hat{\D}_{t_k}(\B_{\delta_k}(x_0), \gamma_k)$ as follows:
\begin{equation} \label{sec:design:algo:hull}
	\hat{\D}_{t_k}(\B_{\delta_k}(x_0), \gamma_k) := hull(\V_h^{\gamma}).
\end{equation}

This process of polyhedral image computation under a linear dynamics is illustrated in Fig. \ref{fig:post:linear}.
We now show that $\hat{\D}_{t_k}(\B_{\delta_k}(x_0), \gamma_k)$ that is computed as in (\ref{sec:design:algo:hull}) is indeed a $\gamma_k$-approximation of $\hat{\D}_{t_k}(\B_{\delta_k}(x_0))$ for a given $\gamma_k$.

\begin{figure} [htbp]
\begin{center}
	\includegraphics[width=8cm]{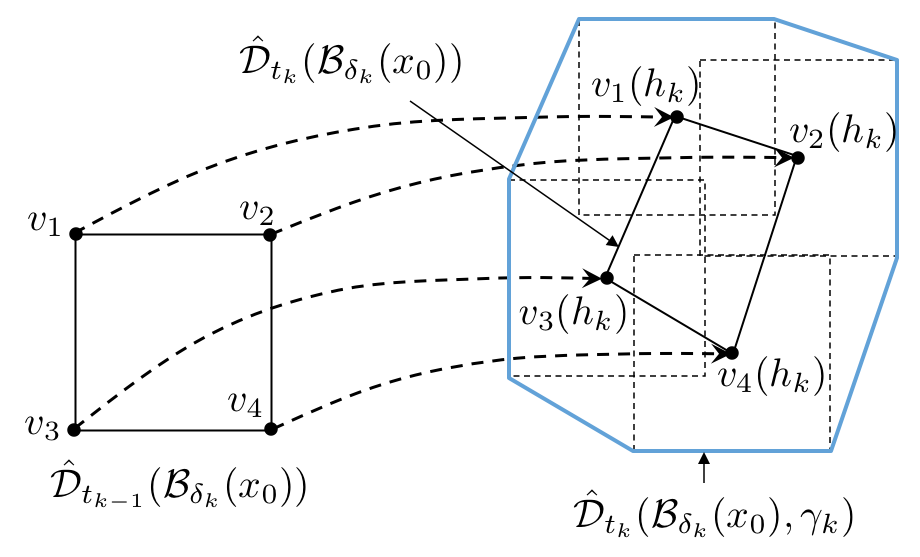}
\caption{The image computation under a linear dynamics.}
\label{fig:post:linear}
\end{center}
\end{figure}

\begin{lem}
Let $\H$ be the convex hull of $\V_h^{\gamma}$. 
Then $\H$ is exactly the closed $\gamma$-neighborhood of the convex hull of $\V_h$.
\end{lem}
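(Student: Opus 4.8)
The plan is to realize both $\H$ and the closed $\gamma$-neighborhood of $hull(\V_h)$ as a single Minkowski sum, and then to invoke the fact that taking convex hulls commutes with Minkowski addition. Throughout, write $A \oplus B := \{a+b : a \in A,\, b \in B\}$ for the Minkowski sum, and let $C := \B_\gamma(0) = [-\gamma,\gamma]^n$ be the closed $\ell_\infty$-ball of radius $\gamma$ centered at the origin, which is an axis-aligned hypercube precisely because we use the $\ell_\infty$-norm.

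First I would record that the closed $\gamma$-neighborhood of an arbitrary nonempty set $S \subset \mathbb{R}^n$ is exactly $S \oplus C$: a point $x$ lies within $\ell_\infty$-distance $\gamma$ of $S$ iff $x - s \in C$ for some $s \in S$, i.e.\ iff $x \in S \oplus C$. In particular the closed $\gamma$-neighborhood of $hull(\V_h)$ equals $hull(\V_h) \oplus C$, so it suffices to prove $\H = hull(\V_h) \oplus C$. Next I would give the combinatorial description of the generating set. By construction each $\B_\gamma(v_i)$ is the translate $\{v_i\} \oplus C$, an axis-aligned cube centered at $v_i$, whose vertex set is $\{v_i + c : c \in \V(C)\}$ with $\V(C) = \{-\gamma,\gamma\}^n$ the $2^n$ corners of $C$. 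Collecting these vertices over all $v_i \in \V_h$ gives $\V_h^\gamma = \V_h \oplus \V(C)$.

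The substantive step is the identity $hull(A \oplus B) = hull(A) \oplus hull(B)$ for nonempty $A, B \subset \mathbb{R}^n$, which I would establish by two inclusions. Since $hull(A) \oplus hull(B)$ is convex (a Minkowski sum of convex sets is convex) and contains $A \oplus B$, it contains $hull(A \oplus B)$. Conversely, writing $x = \sum_i \lambda_i a_i \in hull(A)$ and $y = \sum_j \mu_j b_j \in hull(B)$ as convex combinations, the double sum $x + y = \sum_{i,j} \lambda_i \mu_j (a_i + b_j)$ exhibits $x + y$ as a convex combination of points of $A \oplus B$, giving the reverse inclusion. Applying this with $A = \V_h$ and $B = \V(C)$, and using $hull(\V(C)) = C$ (the cube is the hull of its corners), yields $\H = hull(\V_h^\gamma) = hull(\V_h) \oplus C$, which by the first step is precisely the closed $\gamma$-neighborhood of $hull(\V_h)$.

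There is no serious obstacle here; the argument is a short chain of set identities. The only points needing care are the observation that $\ell_\infty$-neighborhoods coincide with Minkowski sums by a hypercube, which is exactly the advantage of the $\ell_\infty$-norm emphasized in the preliminaries, and the distributivity of $hull$ over $\oplus$, whose reverse inclusion is the one place a genuine (if routine) convex-combination computation is needed. Degenerate cases, such as $\V_h$ reducing to a single point, are handled automatically by the same reasoning.
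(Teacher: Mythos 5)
Your proof is correct and complete. It reaches the same conclusion as the paper but by a more structured route: you package everything into the single identity $hull(A \oplus B) = hull(A) \oplus hull(B)$ for Minkowski sums, applied with $A = \V_h$ and $B$ the vertex set of the cube $\B_{\gamma}(0)$, whereas the paper argues both inclusions by direct element-chasing (decomposing a point of $\H$ as a convex combination and bounding its distance to the corresponding combination of the $v_i(h)$, and, conversely, translating a convex combination of points of $\V_h$ by the offset $s$ and absorbing $s$ into each cube $\B_{\gamma}(v_i(h))$). The two arguments share the same underlying computations---your double-sum $\sum_{i,j}\lambda_i\mu_j(a_i+b_j)$ is the paper's translation trick in disguise---but your version has a concrete advantage: the paper's first inclusion is written only for a convex combination of \emph{two} points of $\V_h^{\gamma}$, which does not literally cover a general point of the hull in $\mathbb{R}^n$ (one needs up to $n+1$ terms by Carath\'eodory, though the argument extends routinely), whereas your appeal to the convexity of $hull(A)\oplus hull(B)$ handles arbitrary convex combinations at no extra cost. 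Your identification of the closed $\ell_{\infty}$-neighborhood with Minkowski addition of a hypercube is exactly the property of the $\ell_{\infty}$-norm that the paper relies on implicitly, so nothing is assumed beyond what the paper itself uses.
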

\begin{proof}
Suppose $\bar{w} \in \H$ and $\bar{w} \not\in hull(\V_h)$.
Then $\bar{w} = \lambda \bar{y}_1 + (1-\lambda)\bar{y}_2$ for some $\bar{y}_1$ and $\bar{y}_2$ such that $\Vert \bar{y}_1 - v_1 \Vert \le \gamma$ and $\Vert \bar{y}_2 - v_2 \Vert \le \gamma$ for some $v_1, v_2 \in \V_h$ and $0 \le \lambda \le 1$.
Then there exists $v = \lambda v_1 + (1-\lambda)v_2 \in hull(\V_h)$ such that
\begin{eqnarray} \nonumber
\Vert \bar{w} -v \Vert &=& \Vert \lambda(\bar{y}_1 - v_1) + (1-\lambda)(\bar{y}_2 - v_2) \Vert \nonumber \\
&\le& \lambda \Vert \bar{y}_1 - v_1 \Vert + (1-\lambda) \Vert \bar{y}_2 - v_2 \Vert  \nonumber \\
&\le& \gamma. \nonumber
\end{eqnarray}
Thus $\bar{w}$ is in the $\gamma$-neighborhood of the convex hull of $\V_h$.

For the converse, consider $\bar{z}$ in the $\gamma$-neighborhood of the convex hull of $\V_h$.
Then for some $\lambda_i \ge 0$, $\sum_i \lambda_i = 1$, $\Vert \bar{z} - \sum_i \lambda_i v_i(h) \Vert \le \gamma$, where $v_i(h) \in \V_h$.
Let $s := \bar{z} - \sum_i \lambda_i v_i(h)$.
Now $\bar{z} = \sum_i \lambda_i(v_i(h) + s)$.
So $\bar{z}$ is in the convex hull of $\lbrace v_i(h) + s \rbrace$.
However each $v_i(h) + s \in \B_{\gamma}(v_i(h))$.
Hence each $v_i(h) + s$ is in the convex hull of the vertices of $\B_{\gamma}(v_i(h))$ which is $\H$.
Thus $\bar{z}$ is in $\H$.
\end{proof}

Notice that the first update of $\rho_k$ in {\tt Post()} is due to the computation of $\hat{\D}_{t_k}(\B_{\delta_k}(x_0))$ from $\hat{\D}_{t_{k-1}}(\B_{\delta_k}(x_0))$ over the time interval $h_k$ under the linear dynamics of $l_{k-1}$.
The second update after a deterministic and transversal discrete transition is due to a series of computations from $\hat{\D}_{t_k}(\B_{\delta_k}(x_0))$ that is used to determine such a discrete transition to a new $\hat{\D}_{t_k}(\B_{\delta_k}(x_0))$ that represents a reached states at time $t_k$ right after a deterministic and transversal discrete transition.
As described in Lemma \ref{lem:cond:inexact:transv}, the steps involved during this discrete transition are to compute
\begin{inparaenum}[(i)]
	\item $\hat{\mathcal{J}}_{c,n}$ from $\hat{\D}_{t_k}(\B_{\delta_k}(x_0))$ and
	\item $\D_{h_k}(\hat{\mathcal{J}}_{c,n})$ from $\hat{\mathcal{J}}_{c,n}$. 
\end{inparaenum}
Notice that (i) requires an intersection between a hyperplane and a polyhedron as well as a convex hull computation.
Moreover, for (ii), we need to compute a polyhedral image under the linear dynamics of a new location that is determined in {\tt Post()}.
Recall that we have derived a set of conditions in Lemmas \ref{lem:cond:inexact:istrans}, \ref{lem:cond:inexact:det}, and \ref{lem:cond:inexact:transv} to determine a deterministic and transversal discrete transition event.
These conditions are used in {\tt Post()} to determine such an event.
Furthermore, we also use conditions derived in Lemmas \ref{lem:cond:inexact:over} and \ref{lem:cond:inexact:eps}, to ensure that a set $\hat{\R}_{t_f}(\B_{\delta}(x_0), \gamma)$, which can be constructed as a collection of $\hat{\D}_{t_k}(\B_{\delta_k}(x_0), \gamma_k)$ as shown in the following theorem, satisfies the properties given in Remark \ref{rem:cond:inexact}.

Now, we present our main result for the problem of computing a bounded $\epsilon$-reach set of a DTLHA.

\begin{thm}
Given input $(\X, \A, l_0, x_0, T,$ $N, \epsilon)$ for a problem to compute a bounded $\epsilon$-reach set of a DTLHA $\A$, 
if Algorithm \ref{alg:main} returns {\tt done}, then a bounded $\epsilon$-reach set of a DTLHA $\A$ defined over the continuous state domain $\X$ starting from an initial condition $(l_0, x_0) \in \mathbb{L} \times \mathbb{R}^n$, denoted as $\hat{\R}_{t_f}(x_0, \epsilon)$, is the following:
\begin{equation}
\hat{\R}_{t_f}(x_0, \epsilon) := \bigcup_{k = 1}^{K}  \hat{\D}_{t_k}(\B_{\delta_k}(x_0),\gamma_k),
\end{equation}
for some $K \in \mathbb{N}$ where $t_f := \min \{T, \tau_N \}$ and $\tau_N$ is the time at the $N$-th discrete transition. 
\end{thm}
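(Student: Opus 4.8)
The plan is to show that \texttt{done} can be returned only when every check inside \texttt{Post()} has succeeded at each computation step $k = 1, \ldots, K$, and then to verify that the three sufficient conditions of Remark~\ref{rem:cond:inexact} hold for $S := \hat{\R}_{t_f}(x_0,\epsilon) = \bigcup_{k=1}^{K}\hat{\D}_{t_k}(\B_{\delta_k}(x_0),\gamma_k)$. First I would observe that Algorithm~\ref{alg:main} returns \texttt{done} precisely when it reaches $t_k \ge T$ or the discrete-transition counter reaches $N$ without any call to \texttt{Post()} having returned \texttt{error}. Tracing the guards in \texttt{Post()}, the absence of an error certifies that at every step $k$ we have $h_k < (\gamma_k - \rho_k)/\bar{v}$ and $dia(\hat{\D}_{t_k}(\B_{\delta_k}(x_0),\gamma_k)) \le \epsilon$, and that whenever a location change was detected the deterministic and transversal conditions of Lemmas~\ref{lem:cond:inexact:istrans}, \ref{lem:cond:inexact:det}, and~\ref{lem:cond:inexact:transv} were all met. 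These are exactly the hypotheses needed to invoke the finite-precision lemmas of Section~\ref{sec:cond:inexact} at each step.

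Next, for the containment property~(ii) of Remark~\ref{rem:cond:inexact}, I would argue segment by segment. Since $h_k < (\gamma_k - \rho_k)/\bar{v}$ holds at each step, Lemma~\ref{lem:cond:inexact:over} gives $\D_{[t_k, t_k + h_k]}(\B_{\delta_k}(x_0)) \subset \hat{\D}_{t_k}(\B_{\delta_k}(x_0),\gamma_k)$, so in particular the exact segment $\D_{[t_k, t_k+h_k]}(x_0)$ of the true execution lies in the computed over-approximation. Because the sampling grid $\{t_k\}$ covers $[0, t_f]$ and consecutive intervals share their endpoints, taking the union over $k$ yields $\R_{t_f}(x_0) \subseteq \bigcup_k \hat{\D}_{t_k}(\B_{\delta_k}(x_0),\gamma_k) = S$. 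The one subtlety is the steps at which a discrete transition occurs; there Lemma~\ref{lem:cond:inexact:transv} guarantees that $\hat{\mathcal{J}}_{c,n}$ is a genuine over-approximation of the true transition state and that its continued image under the new location's dynamics still covers the post-jump trajectory, so the segment-wise containment is unbroken across the jump.

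For properties~(i) and~(iii), I would combine the diameter bound with the containment just established. By Lemma~\ref{lem:cond:inexact:eps}, the check $dia(\hat{\D}_{t_k}(\B_{\delta_k}(x_0),\gamma_k)) \le \epsilon$ forces $d_H(\hat{\D}_{t_k}(\B_{\delta_k}(x_0),\gamma_k), \D_{[t_k, t_k+h_k]}(x_0)) \le \epsilon$ for each $k$. Since every point of a given cell $\hat{\D}_{t_k}$ is then within $\epsilon$ of the exact segment $\D_{[t_k,t_k+h_k]}(x_0) \subseteq \R_{t_f}(x_0)$, and since $\R_{t_f}(x_0)\subseteq S$ annihilates the other one-sided supremum in the Hausdorff distance, we obtain $d_H(S,\R_{t_f}(x_0)) \le \epsilon$, establishing~(i). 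Property~(iii) then follows analogously from the $\gamma_k$-approximation construction in~(\ref{sec:design:algo:hull}) via Lemma~\ref{lem:cond:inexact:over}.

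The hard part will be the bookkeeping of the accumulated numerical error $\rho_k$ and its interaction with the transition steps. The algorithm grows $\rho_k$ additively at each step ($\rho_k \leftarrow \rho_{k-1} + \mu_x$, with an extra $\mu_c + \mu_h$ after a jump), so I must verify that all the finite-precision lemmas are applied with the correctly inflated error budget --- in particular that the slightly enlarged sets $\hat{\mathcal{J}}_{c,n}$ and $\hat{\mathcal{J}}_{c,n}'$ of Lemma~\ref{lem:cond:inexact:transv} still certify transversality for the true transition state despite this accumulation. Because each error check inside \texttt{Post()} is made against the \emph{current} $\rho_k$ and a \texttt{done} return means none of them tripped, the running bound $\rho_k$ is exactly the quantity the lemmas require, so an induction on $k$ closes. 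Assembling the per-segment statements into the union then completes the argument.
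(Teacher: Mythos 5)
Your proposal is correct and follows essentially the same route as the paper's own proof: both arguments observe that a \texttt{done} return certifies that the guards in \texttt{Post()} never tripped, so the hypotheses of Lemmas \ref{lem:cond:inexact:over}, \ref{lem:cond:inexact:eps}, \ref{lem:cond:inexact:istrans}, \ref{lem:cond:inexact:det}, and \ref{lem:cond:inexact:transv} hold at every step, and then assemble the per-segment containment and Hausdorff bounds into the union over $k$. Your write-up is in fact more explicit than the paper's (which is quite terse) about the segment-by-segment covering of $\R_{t_f}(x_0)$, the behavior across jumps, and the accumulation of $\rho_k$, but these are elaborations of the same argument rather than a different one.
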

\begin{proof}
For each $k \le K$, 
\begin{inparaenum}
\item[(i)] $\gamma_k, h_k, \rho_k$ satisfies Lemma \ref{lem:cond:inexact:over}, and 
\item[(ii)] $\hat{\D}_{t_k}(\B_{\delta_k}(x_0),\gamma_k)$ satisfies Lemma \ref{lem:cond:inexact:eps}. 
\end{inparaenum}
Hence $\hat{\D}_{t_k}(\B_{\delta_k}(x_0), \gamma_k)$ is guaranteed to satisfy 
$\D_{[t, t+h]}(x_0) \subset \hat{\D}_{t_k}(\B_{\delta_k}(x_0), \gamma_k)$ and $d_H(\hat{\D}_{t_k}(\B_{\delta_k}(x_0), \gamma_k), \D_{[t, t+h]}(x_0)) \le \epsilon$.
Furthermore, if a deterministic and transversal discrete transition is detected at the $k$-th step by $\hat{\D}_{t_k}(\B_{\delta_k}(x_0))$, then
\begin{inparaenum}
	\item[(iii)] by Lemmas \ref{lem:cond:inexact:istrans}, \ref{lem:cond:inexact:det}, and \ref{lem:cond:inexact:transv}, there is in fact a deterministic and transversal discrete transition in $(t_{k-1}, t_k)$. 
\end{inparaenum}
This implies that a deterministic and transversal discrete transition event is correctly determined. 
Finally, if the proposed algorithm returns {\tt done}, then this implies that 
\begin{inparaenum}
	\item[(iv)] either $t_k \ge T$ or ${\tt jump} \ge N$. 
\end{inparaenum}
Hence, $t_f$ is $\min \{ T, \tau_N\}$.
Therefore, $\R_{t_f}$ is a bounded $\epsilon$-reach set of $\A$ from $x_0$ by (i), (ii), (iii), and (iv).
\end{proof}

\section{Optimization and Implementation of the Proposed Algorithm}  \label{sec:imp}

A prototype software tool has been implemented, based on the architecture and the algorithm proposed in Section \ref{sec:design}, to demonstrate the idea of a bounded $\epsilon$-reach set computation. 
In our implementation, we use the Multi-Parametric Toolbox \cite{mpt} for polyhedral operations and also use some built-in Matlab functions for other calculations.

Notice that the size of the $\hat{\D}_t(\B_{\delta}(x_0))$ right after a discrete transition increases roughly by the amount $\gamma$ through the computation of $\hat{\mathcal{J}}_{c,n}$. 
This can potentially affect the capability to determine a discrete transition event.
Hence, we determine a smaller value of $\gamma$ to construct a tighter over-approximation of a discrete transition state. 
Suppose that a discrete transition from a location $l_i$ to some other location $l_j$ has already been determined by the proposed algorithm for given $h > 0$, $\hat{\D}_{t-h}(\B_{\delta}(x_0), \rho)$, and $\hat{\D}_{t}(\B_{\delta}(x_0), \rho)$ at some time $t$.
Then the procedure for construction of a tight over-approximation of a discrete transition state $x(\tau_k)$ for some $\tau_k \in (t-h, t)$ can be improved shown in Algorithm \ref{alg:opt}.

\begin{algorithm} 
\SetAlgoLined 
\BlankLine
1. Partition the time interval $[t-h, t]$ into a finite sequence of $\{ I_m \}_{m=1}^M$ for some $M \in \mathbb{N}$, where
\[ I_m : = [t-h +(m-1) \cdot \Delta h, t-h + m \cdot \Delta h] \]
for some $\Delta h \ll h$. \\
\BlankLine
2. Find a time $\tau := t-h + m \cdot \Delta h \in (t-h, t)$ such that 
	\begin{itemize}
		\item $volume(\P_{\tau}^i) > volume(\P_{\tau}^j)$ and
		\item $volume(\P_{\tau + \Delta h}^i) < volume(\P_{\tau + \Delta h}^j)$, 
	\end{itemize}
	where $\P_t^k := Inv_k \cap  \hat{\D}_{t}(\B_{\delta}(x_0), \rho)$. \\
\BlankLine
3. Construct $\hat{\D}_{\tau+\Delta h}(\B_{\delta}(x_0), \gamma' + \rho)$ where $\gamma' > \Delta h \cdot \bar{v}$. \\
\BlankLine
4. Compute an over-approximate discrete transition state
\[ \hat{\mathcal{J}}_{i,j} := \hat{\D}_{\tau+\Delta h}(\B_{\delta}(x_0), \gamma' + \rho) \cap Inv_i \cap Inv_j. \]
%
\caption{A procedure to compute a tight over-approximation of discrete transition state.}
\label{alg:opt}
\end{algorithm}

\subsection{An Example of Bounded $\epsilon$-Reach Set Computation}  \label{sec:imp:example}

As an example to evaluate the proposed algorithm for a bounded $\epsilon$-reach set computation of a DTLHA $\A$, we consider an LHA $\A := (\mathbb{L}, Inv, A, u, \xrightarrow{G})$ over a  continuous state space $\X := [-8, 8] \times [-8, 8] \subset \mathbb{R}^2$ where
\begin{enumerate}[(i)]
	\item $\mathbb{L} = \{Up, Down,$ $Left, Right \}$,
	\item $A(l)$ and $u(l)$ for each location $l \in \mathbb{L}$ are defined as shown in Table \ref{tab:ex:Au},  
	\item The invariant set for each location $l \in \mathbb{L}$, $Inv(l)$, is defined as shown in Fig. \ref{fig:eg:rho}, and
	\item $\xrightarrow{G}$ holds at the intersection between invariant sets of different locations. 
\end{enumerate}

Notice that all the LTI dynamics defined in the given LHA $\A$ are asymptotically stable.
Moreover, from the the vector fields determined by $A(l)$ and $u(l)$ for each $l \in \mathbb{L}$, every discrete transition which occurs along the boundary of the invariant set between different locations is deterministic and transversal. 
Hence the given LHA $\A$ is in fact a DTLHA.

\begin{table}[htdp]
\caption{$A(l)$ and $u(l)$ for each $l \in \mathbb{L}$ of $\A$}
\begin{center}
{\small
\begin{tabular}{c|p{0.15\textwidth}|c}
\toprule
$l$ & \centering{$A(l)$} & $u(l)$ \\ 
\midrule
$UP$ & \centering{$\begin{pmatrix}  -0.2 & -1 \\ 3 & -0.2 \end{pmatrix}$} & $\begin{pmatrix}  0.1 \\ 0.1 \end{pmatrix}$ \\ 
$DOWN$ & \centering{$\begin{pmatrix}  -0.2 & -1 \\ 3 & -0.2 \end{pmatrix}$} & $\begin{pmatrix}  -0.2 \\ -0.2 \end{pmatrix}$ \\ 
$LEFT$ & \centering{$\begin{pmatrix}  -0.2 & -3 \\ 1 & -0.2 \end{pmatrix}$} & $\begin{pmatrix} 0.15 \\ 0.15 \end{pmatrix}$ \\ 
$RIGHT$ & \centering{$\begin{pmatrix}  -0.2 & -3 \\ 1 & -0.2 \end{pmatrix}$} & $\begin{pmatrix} 0.3 \\ 0.3 \end{pmatrix}$ \\ 
\bottomrule
\end{tabular}
}
\end{center}
\label{tab:ex:Au}
\end{table}%

The bounded $\epsilon$-reach set computation problem is specified by $(\A, l_0, x_0, T, N, \epsilon)$ where $l_0 = Up$, $x_0 = (2.5, 6)^T$, $T = 20$ sec., $N = 10$, and $\epsilon = 0.5$.

In this example, we also assume that numerical calculation algorithms are available for basic calculations defined in Section \ref{sec:cond:basic} such that $a(e^{At}, \rho)$, $a(\int_{0}^{t} e^{A \tau} d \tau, \rho)$, $a(\H \cap \P, \rho)$, and $a(hull(\V), \rho)$ where $\rho$ is specified as $10^{-15}$.

\begin{figure}
\begin{center}
	\includegraphics[width=8.5cm]{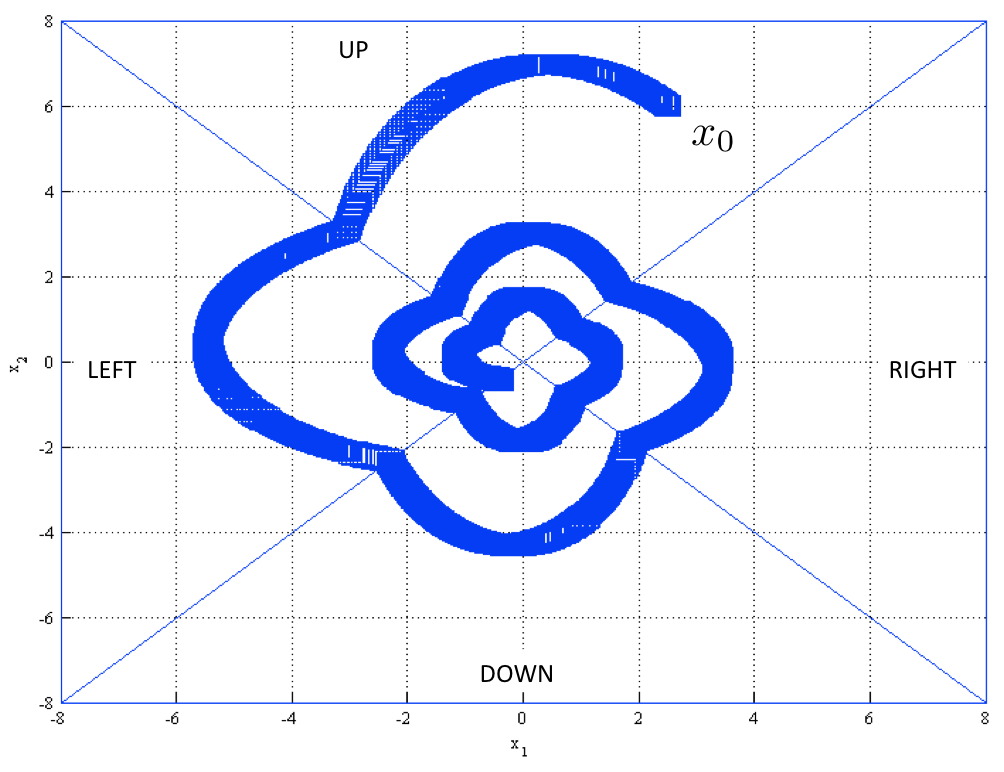}
\caption{A bounded $\epsilon$-reach set of a DTLHA $\A$ starting from $(Up, [2.5, 6]^T)$.}	
\label{fig:eg:rho}
\end{center}
\end{figure}

A policy that is used to choose values for $(k, \delta_k, \gamma_k, h_k)$ is as follows:
\begin{enumerate}[(i)]
	\item $k$ is chosen in non-decreasing manner, 
	\item $\delta_k := 10^{-5}$ to define a fixed sufficiently small $\B_{\delta}(x_0)$,
	\item $\gamma_k := (\epsilon - dia(\hat{\D}_{t_k}(\B_{\delta_k}(x_0), \rho_k)))/2$, and
	\item $h_k := (\gamma_k/2)/ \bar{v}$ where $\bar{v}$ is as defined in (\ref{eq:h:ineq}).
\end{enumerate}
Notice that (i) means that whenever the proposed $\epsilon$-reach set algorithm fails to continue its computation at the $k$-th computation step, then the policy decides to restart the computation from the $k$-th step with different values of the other parameters.
Recall that $\rho_k$ denotes the approximation error of $\hat{\D}_{t_k}(\B_{\delta_k}(x_0))$ when the algorithm computes $\D_{t_k}(\B_{\delta_k}(x_0))$ at time $t_k$.
As shown in (iii), for a given $\epsilon$, the policy chooses the largest value of $\gamma_k$ at each computation step.
The equation for $\gamma_k$ given in (iii) can easily be derived by considering 
\begin{equation} \nonumber
	dia(\hat{\D}_{t_k}(\B_{\delta_k}(x_0), \gamma_k + \rho_k)) \le dia(\hat{\D}_{t_k}(\B_{\delta_k}(x_0), \rho_k)) + 2 \gamma_k.
\end{equation}
If we upper bound the right hand side by $\epsilon$, then we have (iii).

Fig. \ref{fig:eg:rho} shows the computation result.
As shown in Fig. \ref{fig:eg:rho}, a bounded $\epsilon$-reach set is successfully computed.
In this example, the algorithm terminates at the computation step $k = 2259$ right after the algorithm makes the tenth discrete transition from locations $Left$ to $Down$ at the time $t = 12.1415$ sec. and {\tt jump $= 10$}.
For given $\rho := 10^{-15}$, the accumulated numerical calculation error $\rho_k$ for $\D_{t_k}(\B_{\delta_k}(x_0))$ at this termination time is $2.5638 \times 10^{-11}$.

\section{Conclusion} \label{sec:con}
We have defined a special class of hybrid automata, called Deterministic and Transversal Linear Hybrid Automata (DTLHA), for which we can address the problem of bounded $\epsilon$-reach set computation starting from an initial state.
For this class, we can also incorporate the impact of numerical calculation errors caused by finite precision numerical computation.

It is of importance to determine more  general and useful models of hybrid systems that permit computational verification of safety properties. Hybrid linear systems that incorporate linear models widely employed in control systems are a natural candidate around which to build such a theory of verification and validation.


%

%



\ifCLASSOPTIONcaptionsoff
  \newpage
\fi



%
%
%

\bibliographystyle{IEEEtran}
\bibliography{Bounded_epsilon-Reach_Set_DTLHA}

\begin{thebibliography}{10}
\providecommand{\url}[1]{#1}
\csname url@samestyle\endcsname
\providecommand{\newblock}{\relax}
\providecommand{\bibinfo}[2]{#2}
\providecommand{\BIBentrySTDinterwordspacing}{\spaceskip=0pt\relax}
\providecommand{\BIBentryALTinterwordstretchfactor}{4}
\providecommand{\BIBentryALTinterwordspacing}{\spaceskip=\fontdimen2\font plus
\BIBentryALTinterwordstretchfactor\fontdimen3\font minus
  \fontdimen4\font\relax}
\providecommand{\BIBforeignlanguage}[2]{{%
\expandafter\ifx\csname l@#1\endcsname\relax
\typeout{** WARNING: IEEEtran.bst: No hyphenation pattern has been}%
\typeout{** loaded for the language `#1'. Using the pattern for}%
\typeout{** the default language instead.}%
\else
\language=\csname l@#1\endcsname
\fi
#2}}
\providecommand{\BIBdecl}{\relax}
\BIBdecl

\bibitem{henzinger:96}
T.~A. Henzinger, ``The theory of hybrid automata,'' in \emph{Proceedings of the
  11th Annual Symposium on Logic in Computer Science (LICS)}.\hskip 1em plus
  0.5em minus 0.4em\relax IEEE Computer Society Press, 1996, pp. 278 -- 292.

\bibitem{lynch:03}
N.~Lynch, R.~Segala, and F.~Vaandrager, ``Hybrid i/o automata,''
  \emph{Information and Computation}, vol. 185, no.~1, pp. 105 -- 157, 2003.

\bibitem{henzinger:95}
T.~A. Henzinger, P.~W. Kopke, A.~Puri, and P.~Varaiya, ``What's decidable about
  hybrid automata?'' in \emph{{ACM} Symposium on Theory of Computing}, 1995,
  pp. 373 -- 382.

\bibitem{alur:94}
R.~Alur and D.~L. Dill, ``A theory of timed automata,'' \emph{Theoretical
  Computer Science}, vol. 126, no.~2, pp. 183 -- 235, 1994.

\bibitem{henzinger:97}
T.~A. Henzinger, P.-H. Ho, and H.~Wong-Toi, ``{HyTech}: A model checker for
  hybrid systems,'' \emph{International Journal on Software Tools for
  Technology Transfer}, vol.~1, no. 1--2, pp. 110--122, 1997.

\bibitem{frehse:08}
G.~Frehse, ``Phaver: Algorithmic verification of hybrid systems past hytech,''
  \emph{International Journal on Software Tools for Technology Transfer},
  vol.~10, no.~3, pp. 263--279, 2008.

\bibitem{chutinan:03}
A.~Chutinan and B.~H. Krogh, ``Computational techniques for hybrid system
  verification,'' \emph{IEEE Transactions on Automatic Control}, vol.~48,
  no.~1, pp. 64--75, 2003.

\bibitem{girard:05}
A.~Girard, ``Reachability of uncertain linear systems using zonotopes,'' in
  \emph{Hybrid Systems: Computation and Control}, ser. Lecture Notes in
  Computer Science, vol. 3414.\hskip 1em plus 0.5em minus 0.4em\relax Springer,
  2005, pp. 291 -- 305.

\bibitem{asarin:00}
E.~Asarin, O.~Bournez, T.~Dang, and O.~Maler, ``Approximate reachability
  analysis of piecewise-linear dynamical systems,'' in \emph{Hybrid Systems:
  Computation and Control}, ser. Lecture Notes in Computer Science, vol.
  1790.\hskip 1em plus 0.5em minus 0.4em\relax Springer, 2000, pp. 21--31.

\bibitem{kurzh:00}
A.~A. Kurzhanskiy and P.~Varaiya, ``Ellipsoidal techniques for reachability
  analysis,'' in \emph{Hybrid Systems: Computation and Control}, ser. Lecture
  Notes in Computer Science, vol. 1790.\hskip 1em plus 0.5em minus 0.4em\relax
  Springer, 2000, pp. 202--214.

\bibitem{clarke:03}
E.~Clarke, A.~Fehnker, Z.~Han, B.~Krogh, J.~Ouaknine, O.~Stursberg, and
  M.~Theobald, ``Abstraction and counterexample-guided refinement in model
  checking of hybrid systems,'' \emph{International Journal on Foundations of
  Computer Science}, vol.~14, no.~4, pp. 583--604, 2003.

\bibitem{tiwari:02}
A.~Tiwari and G.~Khanna, ``Series of abstractions for hybrid automata,'' in
  \emph{Hybrid Systems: Computation and Control}, ser. Lecture Notes in
  Computer Science, vol. 2289.\hskip 1em plus 0.5em minus 0.4em\relax Springer,
  2002, pp. 465--478.

\bibitem{guernic:10}
C.~L. Guernic and A.~Girard, ``Reachability analysis of linear systems using
  support functions,'' \emph{Nonlinear Analysis: Hybrid Systems}, vol.~4,
  no.~2, pp. 250--262, 2010.

\bibitem{alur:93}
R.~Alur, C.~Courcoubetis, T.~A. Henzinger, and P.-H. Ho, ``Hybrid automata: An
  algorithmic approach to the specification and verification of hybrid
  systems,'' in \emph{Hybrid Systems: Computation and Control}, ser. Lecture
  Notes in Computer Science, 1993, pp. 250--271.

\bibitem{moler:78}
C.~Moler and C.~V. Loan, ``Nineteen dubious ways to compute the exponential of
  a matrix, twenty-five yeasr later,'' \emph{SIAM Review}, vol.~20, no.~4, pp.
  801--836, 1978.

\bibitem{mpt}
\BIBentryALTinterwordspacing
M.~Kvasnica, P.~Grieder, and M.~Baoti\'{c}. {Multi-Parametric Toolbox (MPT)}.
  [Online]. Available: \url{http://control.ee.ethz.ch/~mpt/}
\BIBentrySTDinterwordspacing

\end{thebibliography}

%

%
%
%
%

%
%
%
%


\vfill


\end{document}